\documentclass[11pt]{article}
\usepackage{authblk}
\usepackage[margin=1.1in]{geometry}
\usepackage{amsmath, amssymb, amsthm, mathtools}
\usepackage{enumitem}
\usepackage{graphicx}
\usepackage{microtype}
\usepackage{comment}

\newtheorem{theorem}{Theorem}[section]
\newtheorem{lemma}[theorem]{Lemma}
\newtheorem{proposition}[theorem]{Proposition}
\newtheorem{corollary}[theorem]{Corollary}
\newtheorem{conjecture}[theorem]{Conjecture}
\theoremstyle{definition}

\theoremstyle{remark}
\newtheorem{remark}[theorem]{Remark}

\usepackage[authoryear]{natbib}
\setcitestyle{authoryear,open={(},close={)}}
\usepackage{pgfplots}
\pgfplotsset{compat=1.17}
\usepgfplotslibrary{groupplots}
\usepackage[hidelinks, hypertexnames=false, pdftitle={Bi-Compositional Division Rules}, pdfauthor={Christoph Schlegel}]{hyperref}
 
\definecolor{s1}{named}{teal}
\colorlet{s2}{blue!70!black}
\colorlet{s3}{violet!80!black}
\colorlet{s4}{red!75!black}
 
\pgfplotsset{
  gainpanel/.style={
    width=3.55cm, height=3.7cm,
    xmin=0, xmax=2.12, ymin=0, ymax=2.12,
    xtick=\empty, ytick=\empty,
    axis lines=left, tick label style={font=\scriptsize},
    title style={font=\small, yshift=-2pt},
    every axis plot/.append style={semithick},
    samples=100, domain=0:2,
  }
}

\newcommand{\R}{\mathbb{R}}
\newcommand{\CEA}{\mathrm{CEA}}
\newcommand{\CEL}{\mathrm{CEL}}
\newcommand{\PROP}{\mathrm{PROP}}
\newcommand{\id}{\mathrm{id}}
\newcommand{\Fix}{\operatorname{Fix}}
\newcommand{\ran}{\operatorname{ran}}
\newcommand{\norm}[1]{\lVert #1 \rVert}
\newcommand{\T}{\mathcal{T}}
\newcommand{\Lf}{\mathcal{L}}

\title{Bi-Compositional Division Rules\thanks{The paper was co-written with Anthropic's Claude Opus 4.8 and Fable 5 models. Proof ideas are fully AI-generated. Human input was only used for posing the research question, providing context, verification of the results, simplification of the proofs and for the write-up, redaction and organisation of the paper. Extensive comments by William Thomson on several drafts, as well as comment by Leo Arias and Bruno Mazorra on an earlier draft are gratefully acknowledged.}}
\author{Christoph Schlegel}
\affil{Flashbots, Zurich, Switzerland}
\date{\today}

\begin{document}
\maketitle

\begin{abstract}
We characterise the division rules for claims problems that satisfy equal treatment of equals, bilateral consistency, composition down, and composition up. The rules are precisely the members of a one-parameter log-exponential family $\{r^\theta\}_{\theta\in[-\infty,+\infty]}$, with constrained equal awards (CEA) and constrained equal losses (CEL) as its endpoints. For finite $\theta$, $r^\theta$ is the equal-sacrifice rule in awards for $u_\theta=\log\varphi_\theta$, where
\[
\varphi_\theta(x):=\frac{e^{\theta x}-1}{\theta}\quad(\theta\ne0),
\qquad \varphi_0(x):=x,
\]
and simultaneously the equal-sacrifice rule in losses for the dual utility $u_{-\theta}$. The proportional rule is the midpoint, $\theta=0$. Continuity of the rules are not assumed but a consequence of the other axioms.
\end{abstract}

\section{Introduction}
Dividing when there is not enough or when a burden must be shared is a ubiquitous problem: the liquidation value of an insolvent firm is divided among its creditors, a revenue target among taxpayers, and losses from a default among the counterparties of a clearing house. 

Many familiar rules, including proportional division, constrained equal losses and constrained equal awards, are homogeneous by construction: if all claims and the endowment are multiplied by a common factor, all awards are multiplied by the same factor. Homogeneity requires that the division depend only on the relative claims and the relative endowment, so that a rule keying on absolute size must invoke a reference magnitude that the problem itself does not supply. Equivalently, the same division principle applies at every absolute scale: $(50,30;40)$ and $(5,3;4)$ are different problems, and homogeneity ties the division of the larger to the division of the smaller. It therefore rules out fixed absolute thresholds and other forms of size dependence, which may be part of what a rule is meant to capture.\footnote{Homogeneity is sometimes read instead as a statement about measurement, namely that recommendations on how to divide should not depend on whether amounts are recorded in dollars or in cents. That reading is largely incorrect: a change of the unit of account rescales every quantity of the same dimension, including any parameter of the rule, and the rescaled data describe the same problem.}

In taxation the case for relaxing homogeneity is especially clear, and most schedules used in practice are not homogeneous. Two path-independence requirements are natural in this setting. Composition down requires that if the endowment turns out to be smaller than announced, dividing the revised endowment directly gives the same awards as dividing it again among the claimants, the amounts already assigned to them now serving as their claims. Its counterpart, composition up, requires that if the endowment turns out to be larger, dividing the increment among the unmet parts of the claims gives the same awards as dividing the larger endowment directly. In this context, \cite{Young1988} introduced the equal-sacrifice rules: each such rule is specified by a continuous strictly increasing function $u$ and equalises the sacrifices $u(c_i)-u(x_i)$ across claimants. The function $u$ rescales the axis on which claims and awards are measured. \cite{Young1988} characterised the equal-sacrifice rules by \emph{equal treatment of equals, continuity, bilateral consistency, composition down} together with two strictness conditions.\footnote{\citet{ChambersMorenoTernero2017} remove the strictness conditions and characterise the resulting class of generalised equal-sacrifice rules; see Section~\ref{sec:literature}.}

The author of the present paper encountered the problem of designing non-homogeneous division rules in another context: auto-deleveraging on a futures exchange. A large price move can leave a trading account unable to cover its losses, so that the counterparties holding the profitable positions cannot be paid in full; the exchange reduces those positions at prices worse than the market price, and the question is how the reduction is distributed among them.\footnote{In the auto-deleveraging model of \citet{ADL}, each account $i$ has position size $c_i$, equity $e_i$, and leverage $\ell_i=c_i/e_i$. The rule determines reduced position sizes $x_i$, and hence forced closures $c_i-x_i$, so that aggregate position size meets a solvency target. The model can be interpreted as an equity-weighted claims problem. With this interpretation, the two-dimensional
data $(c_i,e_i)$ reduces to the single quantity $\ell_i$ and standard rules can be applied in ``leverage space".  This is a substantive modelling choice rather
than a normalisation; in principle a rule may depend on $c_i$ and $e_i$ in ways other than
through their ratio.} It has been argued that a rule in this context should prioritise reducing risk across positions~\citep{ADL}, which leads to an equity-weighted version of the constrained equal awards rule. A rule that puts full weight on risk reduction can allocate losses quite unevenly. If risk is homogeneous and risk management is the primary concern, this cannot be avoided. 
If risk is non-homogeneous, however, then deleveraging the riskiest positions first may be warranted, if those positions are highly leveraged, but less so when they are only moderately leveraged. 
 
In both applications, it is useful to have a family of rules that can be adjusted continuously to reflect the desired degree of progressivity and risk sensitivity, respectively. We characterise such a family. Our main result (Theorem~\ref{thm:main}) shows that a rule satisfies \emph{equal treatment of equals, bilateral consistency, composition down and composition up} if and only if it belongs to a one-parameter log-exponential family including CEA and CEL as its boundary members. Requiring both composition axioms has a direct interpretation: the endowment may decrease or increase, and consecutive executions of the rule should not depend on the path. Moreover, the two axioms make the solution set self-dual: if a rule satisfies the axioms, so does its dual rule, which for each problem divides the deficit, the difference between the sum of the claims and the endowment, as the original rule divides the endowment. This imposes substantially more structure than Young's characterisation based on one composition axiom. Self-duality also makes the two readings interchangeable: composition down is path independence on the awards side, and composition up is the corresponding requirement on the loss side; a characterisation resting on only one of them constrains only one side.

The family $\{r^\theta\}_{\theta\in[-\infty,+\infty]}$ interpolates smoothly between CEA and CEL. For finite $\theta$ the rule is equal-sacrifice in awards; that is, it equalises $u_{\theta}(c_i)-u_{\theta}(x_i)$ across claimants, for a common utility of the form $u_\theta=\log\varphi_\theta$ with
\[
\varphi_\theta(x):=\frac{e^{\theta x}-1}{\theta}\quad(\theta\ne0),
\qquad
\varphi_0(x):=x,
\]
 CEA and CEL arise as the limits $\theta=-\infty$ and $\theta=+\infty$, respectively.\footnote{By duality, the rules for finite $\theta$ are simultaneously equal-sacrifice in losses for $u_{-\theta}$. The endpoints CEA and CEL are limiting members; $u_{\pm\infty}$ is not an ordinary utility function.}
Figure~\ref{rtheta} shows the award paths for a two-claimant example.
\begin{figure}[t]
\centering
\begin{tikzpicture}
\begin{axis}[width=11.2cm, height=4.95cm, xmin=0, xmax=2.72, ymin=0, ymax=1.13,
  xtick={0,0.5,1,1.5,2,2.5}, ytick={0,0.5,1},
  axis lines=left, tick label style={font=\scriptsize},
  xlabel={award to claimant 1}, ylabel={award to claimant 2},
  label style={font=\small}, every axis plot/.append style={semithick},
  clip=false]
  \addplot[densely dashed, s1] coordinates {(0.0000,0.0000) (1.0000,1.0000) (2.5000,1.0000)};
  \addplot[s1] coordinates {(0.0000,0.0000) (0.0125,0.0124) (0.0250,0.0248) (0.0375,0.0372) (0.0500,0.0496) (0.0625,0.0620) (0.0750,0.0744) (0.0875,0.0868) (0.1000,0.0991) (0.1125,0.1115) (0.1250,0.1238) (0.1375,0.1362) (0.1500,0.1485) (0.1625,0.1608) (0.1750,0.1731) (0.1875,0.1854) (0.2000,0.1977) (0.2125,0.2100) (0.2250,0.2222) (0.2375,0.2345) (0.2500,0.2467) (0.2625,0.2589) (0.2750,0.2711) (0.2875,0.2832) (0.3000,0.2954) (0.3125,0.3075) (0.3250,0.3196) (0.3375,0.3317) (0.3500,0.3437) (0.3625,0.3557) (0.3750,0.3677) (0.3875,0.3797) (0.4000,0.3916) (0.4125,0.4035) (0.4250,0.4153) (0.4375,0.4271) (0.4500,0.4389) (0.4625,0.4506) (0.4750,0.4623) (0.4875,0.4739) (0.5000,0.4855) (0.5125,0.4970) (0.5250,0.5085) (0.5375,0.5199) (0.5500,0.5312) (0.5625,0.5425) (0.5750,0.5537) (0.5875,0.5648) (0.6000,0.5758) (0.6125,0.5868) (0.6250,0.5977) (0.6375,0.6084) (0.6500,0.6191) (0.6625,0.6297) (0.6750,0.6402) (0.6875,0.6506) (0.7000,0.6608) (0.7125,0.6710) (0.7250,0.6810) (0.7375,0.6909) (0.7500,0.7007) (0.7625,0.7103) (0.7750,0.7198) (0.7875,0.7292) (0.8000,0.7384) (0.8125,0.7474) (0.8250,0.7563) (0.8375,0.7650) (0.8500,0.7736) (0.8625,0.7820) (0.8750,0.7902) (0.8875,0.7982) (0.9000,0.8061) (0.9125,0.8137) (0.9250,0.8212) (0.9375,0.8285) (0.9500,0.8356) (0.9625,0.8425) (0.9750,0.8492) (0.9875,0.8558) (1.0000,0.8621) (1.0125,0.8682) (1.0250,0.8742) (1.0375,0.8799) (1.0500,0.8855) (1.0625,0.8908) (1.0750,0.8960) (1.0875,0.9010) (1.1000,0.9058) (1.1125,0.9104) (1.1250,0.9148) (1.1375,0.9191) (1.1500,0.9231) (1.1625,0.9270) (1.1750,0.9308) (1.1875,0.9344) (1.2000,0.9378) (1.2125,0.9411) (1.2250,0.9442) (1.2375,0.9472) (1.2500,0.9500) (1.2625,0.9527) (1.2750,0.9553) (1.2875,0.9577) (1.3000,0.9601) (1.3125,0.9623) (1.3250,0.9644) (1.3375,0.9664) (1.3500,0.9682) (1.3625,0.9700) (1.3750,0.9717) (1.3875,0.9733) (1.4000,0.9749) (1.4125,0.9763) (1.4250,0.9777) (1.4375,0.9790) (1.4500,0.9802) (1.4625,0.9813) (1.4750,0.9824) (1.4875,0.9835) (1.5000,0.9844) (1.5125,0.9853) (1.5250,0.9862) (1.5375,0.9870) (1.5500,0.9878) (1.5625,0.9885) (1.5750,0.9892) (1.5875,0.9898) (1.6000,0.9905) (1.6125,0.9910) (1.6250,0.9916) (1.6375,0.9921) (1.6500,0.9925) (1.6625,0.9930) (1.6750,0.9934) (1.6875,0.9938) (1.7000,0.9942) (1.7125,0.9945) (1.7250,0.9949) (1.7375,0.9952) (1.7500,0.9955) (1.7625,0.9958) (1.7750,0.9960) (1.7875,0.9963) (1.8000,0.9965) (1.8125,0.9967) (1.8250,0.9969) (1.8375,0.9971) (1.8500,0.9973) (1.8625,0.9975) (1.8750,0.9976) (1.8875,0.9978) (1.9000,0.9979) (1.9125,0.9980) (1.9250,0.9982) (1.9375,0.9983) (1.9500,0.9984) (1.9625,0.9985) (1.9750,0.9986) (1.9875,0.9987) (2.0000,0.9988) (2.0125,0.9989) (2.0250,0.9989) (2.0375,0.9990) (2.0500,0.9991) (2.0625,0.9991) (2.0750,0.9992) (2.0875,0.9992) (2.1000,0.9993) (2.1125,0.9993) (2.1250,0.9994) (2.1375,0.9994) (2.1500,0.9995) (2.1625,0.9995) (2.1750,0.9996) (2.1875,0.9996) (2.2000,0.9996) (2.2125,0.9996) (2.2250,0.9997) (2.2375,0.9997) (2.2500,0.9997) (2.2625,0.9997) (2.2750,0.9998) (2.2875,0.9998) (2.3000,0.9998) (2.3125,0.9998) (2.3250,0.9998) (2.3375,0.9999) (2.3500,0.9999) (2.3625,0.9999) (2.3750,0.9999) (2.3875,0.9999) (2.4000,0.9999) (2.4125,0.9999) (2.4250,1.0000) (2.4375,1.0000) (2.4500,1.0000) (2.4625,1.0000) (2.4750,1.0000) (2.4875,1.0000) (2.5000,1.0000)};
  \addplot[s2] coordinates {(0.0000,0.0000) (0.0125,0.0109) (0.0250,0.0217) (0.0375,0.0325) (0.0500,0.0432) (0.0625,0.0540) (0.0750,0.0646) (0.0875,0.0753) (0.1000,0.0859) (0.1125,0.0964) (0.1250,0.1069) (0.1375,0.1174) (0.1500,0.1278) (0.1625,0.1382) (0.1750,0.1486) (0.1875,0.1589) (0.2000,0.1691) (0.2125,0.1793) (0.2250,0.1895) (0.2375,0.1996) (0.2500,0.2097) (0.2625,0.2197) (0.2750,0.2297) (0.2875,0.2396) (0.3000,0.2494) (0.3125,0.2592) (0.3250,0.2690) (0.3375,0.2787) (0.3500,0.2883) (0.3625,0.2979) (0.3750,0.3075) (0.3875,0.3169) (0.4000,0.3264) (0.4125,0.3357) (0.4250,0.3450) (0.4375,0.3543) (0.4500,0.3635) (0.4625,0.3726) (0.4750,0.3816) (0.4875,0.3906) (0.5000,0.3996) (0.5125,0.4084) (0.5250,0.4172) (0.5375,0.4260) (0.5500,0.4347) (0.5625,0.4433) (0.5750,0.4518) (0.5875,0.4603) (0.6000,0.4687) (0.6125,0.4770) (0.6250,0.4853) (0.6375,0.4935) (0.6500,0.5016) (0.6625,0.5096) (0.6750,0.5176) (0.6875,0.5255) (0.7000,0.5334) (0.7125,0.5411) (0.7250,0.5488) (0.7375,0.5564) (0.7500,0.5640) (0.7625,0.5714) (0.7750,0.5788) (0.7875,0.5861) (0.8000,0.5934) (0.8125,0.6005) (0.8250,0.6076) (0.8375,0.6146) (0.8500,0.6215) (0.8625,0.6284) (0.8750,0.6352) (0.8875,0.6419) (0.9000,0.6485) (0.9125,0.6550) (0.9250,0.6615) (0.9375,0.6679) (0.9500,0.6742) (0.9625,0.6804) (0.9750,0.6865) (0.9875,0.6926) (1.0000,0.6986) (1.0125,0.7045) (1.0250,0.7104) (1.0375,0.7161) (1.0500,0.7218) (1.0625,0.7274) (1.0750,0.7329) (1.0875,0.7384) (1.1000,0.7438) (1.1125,0.7491) (1.1250,0.7543) (1.1375,0.7594) (1.1500,0.7645) (1.1625,0.7695) (1.1750,0.7744) (1.1875,0.7793) (1.2000,0.7840) (1.2125,0.7887) (1.2250,0.7934) (1.2375,0.7979) (1.2500,0.8024) (1.2625,0.8068) (1.2750,0.8112) (1.2875,0.8154) (1.3000,0.8196) (1.3125,0.8237) (1.3250,0.8278) (1.3375,0.8318) (1.3500,0.8357) (1.3625,0.8396) (1.3750,0.8434) (1.3875,0.8471) (1.4000,0.8508) (1.4125,0.8544) (1.4250,0.8579) (1.4375,0.8614) (1.4500,0.8648) (1.4625,0.8681) (1.4750,0.8714) (1.4875,0.8746) (1.5000,0.8778) (1.5125,0.8809) (1.5250,0.8839) (1.5375,0.8869) (1.5500,0.8899) (1.5625,0.8927) (1.5750,0.8956) (1.5875,0.8983) (1.6000,0.9010) (1.6125,0.9037) (1.6250,0.9063) (1.6375,0.9089) (1.6500,0.9114) (1.6625,0.9138) (1.6750,0.9163) (1.6875,0.9186) (1.7000,0.9209) (1.7125,0.9232) (1.7250,0.9254) (1.7375,0.9276) (1.7500,0.9297) (1.7625,0.9318) (1.7750,0.9339) (1.7875,0.9359) (1.8000,0.9378) (1.8125,0.9397) (1.8250,0.9416) (1.8375,0.9435) (1.8500,0.9453) (1.8625,0.9470) (1.8750,0.9488) (1.8875,0.9504) (1.9000,0.9521) (1.9125,0.9537) (1.9250,0.9553) (1.9375,0.9568) (1.9500,0.9584) (1.9625,0.9598) (1.9750,0.9613) (1.9875,0.9627) (2.0000,0.9641) (2.0125,0.9654) (2.0250,0.9668) (2.0375,0.9681) (2.0500,0.9693) (2.0625,0.9706) (2.0750,0.9718) (2.0875,0.9730) (2.1000,0.9741) (2.1125,0.9753) (2.1250,0.9764) (2.1375,0.9774) (2.1500,0.9785) (2.1625,0.9795) (2.1750,0.9805) (2.1875,0.9815) (2.2000,0.9825) (2.2125,0.9834) (2.2250,0.9844) (2.2375,0.9853) (2.2500,0.9861) (2.2625,0.9870) (2.2750,0.9878) (2.2875,0.9887) (2.3000,0.9895) (2.3125,0.9902) (2.3250,0.9910) (2.3375,0.9917) (2.3500,0.9925) (2.3625,0.9932) (2.3750,0.9939) (2.3875,0.9946) (2.4000,0.9952) (2.4125,0.9959) (2.4250,0.9965) (2.4375,0.9971) (2.4500,0.9977) (2.4625,0.9983) (2.4750,0.9989) (2.4875,0.9995) (2.5000,1.0000)};
  \addplot[s2!55] coordinates {(0.0000,0.0000) (0.0125,0.0069) (0.0250,0.0137) (0.0375,0.0206) (0.0500,0.0274) (0.0625,0.0342) (0.0750,0.0410) (0.0875,0.0478) (0.1000,0.0545) (0.1125,0.0613) (0.1250,0.0680) (0.1375,0.0747) (0.1500,0.0813) (0.1625,0.0880) (0.1750,0.0946) (0.1875,0.1012) (0.2000,0.1078) (0.2125,0.1144) (0.2250,0.1209) (0.2375,0.1275) (0.2500,0.1340) (0.2625,0.1405) (0.2750,0.1470) (0.2875,0.1534) (0.3000,0.1599) (0.3125,0.1663) (0.3250,0.1727) (0.3375,0.1790) (0.3500,0.1854) (0.3625,0.1917) (0.3750,0.1981) (0.3875,0.2044) (0.4000,0.2106) (0.4125,0.2169) (0.4250,0.2231) (0.4375,0.2294) (0.4500,0.2356) (0.4625,0.2418) (0.4750,0.2479) (0.4875,0.2541) (0.5000,0.2602) (0.5125,0.2663) (0.5250,0.2724) (0.5375,0.2784) (0.5500,0.2845) (0.5625,0.2905) (0.5750,0.2965) (0.5875,0.3025) (0.6000,0.3085) (0.6125,0.3144) (0.6250,0.3203) (0.6375,0.3263) (0.6500,0.3321) (0.6625,0.3380) (0.6750,0.3439) (0.6875,0.3497) (0.7000,0.3555) (0.7125,0.3613) (0.7250,0.3671) (0.7375,0.3728) (0.7500,0.3786) (0.7625,0.3843) (0.7750,0.3900) (0.7875,0.3957) (0.8000,0.4013) (0.8125,0.4070) (0.8250,0.4126) (0.8375,0.4182) (0.8500,0.4238) (0.8625,0.4293) (0.8750,0.4349) (0.8875,0.4404) (0.9000,0.4459) (0.9125,0.4514) (0.9250,0.4568) (0.9375,0.4623) (0.9500,0.4677) (0.9625,0.4731) (0.9750,0.4785) (0.9875,0.4839) (1.0000,0.4892) (1.0125,0.4945) (1.0250,0.4998) (1.0375,0.5051) (1.0500,0.5104) (1.0625,0.5157) (1.0750,0.5209) (1.0875,0.5261) (1.1000,0.5313) (1.1125,0.5365) (1.1250,0.5417) (1.1375,0.5468) (1.1500,0.5519) (1.1625,0.5570) (1.1750,0.5621) (1.1875,0.5672) (1.2000,0.5722) (1.2125,0.5772) (1.2250,0.5822) (1.2375,0.5872) (1.2500,0.5922) (1.2625,0.5972) (1.2750,0.6021) (1.2875,0.6070) (1.3000,0.6119) (1.3125,0.6168) (1.3250,0.6216) (1.3375,0.6265) (1.3500,0.6313) (1.3625,0.6361) (1.3750,0.6409) (1.3875,0.6457) (1.4000,0.6504) (1.4125,0.6551) (1.4250,0.6598) (1.4375,0.6645) (1.4500,0.6692) (1.4625,0.6739) (1.4750,0.6785) (1.4875,0.6831) (1.5000,0.6877) (1.5125,0.6923) (1.5250,0.6969) (1.5375,0.7014) (1.5500,0.7059) (1.5625,0.7105) (1.5750,0.7150) (1.5875,0.7194) (1.6000,0.7239) (1.6125,0.7283) (1.6250,0.7327) (1.6375,0.7371) (1.6500,0.7415) (1.6625,0.7459) (1.6750,0.7503) (1.6875,0.7546) (1.7000,0.7589) (1.7125,0.7632) (1.7250,0.7675) (1.7375,0.7717) (1.7500,0.7760) (1.7625,0.7802) (1.7750,0.7844) (1.7875,0.7886) (1.8000,0.7928) (1.8125,0.7969) (1.8250,0.8011) (1.8375,0.8052) (1.8500,0.8093) (1.8625,0.8134) (1.8750,0.8175) (1.8875,0.8215) (1.9000,0.8256) (1.9125,0.8296) (1.9250,0.8336) (1.9375,0.8376) (1.9500,0.8415) (1.9625,0.8455) (1.9750,0.8494) (1.9875,0.8533) (2.0000,0.8572) (2.0125,0.8611) (2.0250,0.8650) (2.0375,0.8688) (2.0500,0.8727) (2.0625,0.8765) (2.0750,0.8803) (2.0875,0.8841) (2.1000,0.8879) (2.1125,0.8916) (2.1250,0.8953) (2.1375,0.8991) (2.1500,0.9028) (2.1625,0.9065) (2.1750,0.9101) (2.1875,0.9138) (2.2000,0.9174) (2.2125,0.9210) (2.2250,0.9246) (2.2375,0.9282) (2.2500,0.9318) (2.2625,0.9354) (2.2750,0.9389) (2.2875,0.9424) (2.3000,0.9459) (2.3125,0.9494) (2.3250,0.9529) (2.3375,0.9564) (2.3500,0.9598) (2.3625,0.9633) (2.3750,0.9667) (2.3875,0.9701) (2.4000,0.9735) (2.4125,0.9768) (2.4250,0.9802) (2.4375,0.9835) (2.4500,0.9869) (2.4625,0.9902) (2.4750,0.9935) (2.4875,0.9967) (2.5000,1.0000)};
  \addplot[black, thick] coordinates {(0.0000,0.0000) (0.0125,0.0050) (0.0250,0.0100) (0.0375,0.0150) (0.0500,0.0200) (0.0625,0.0250) (0.0750,0.0300) (0.0875,0.0350) (0.1000,0.0400) (0.1125,0.0450) (0.1250,0.0500) (0.1375,0.0550) (0.1500,0.0600) (0.1625,0.0650) (0.1750,0.0700) (0.1875,0.0750) (0.2000,0.0800) (0.2125,0.0850) (0.2250,0.0900) (0.2375,0.0950) (0.2500,0.1000) (0.2625,0.1050) (0.2750,0.1100) (0.2875,0.1150) (0.3000,0.1200) (0.3125,0.1250) (0.3250,0.1300) (0.3375,0.1350) (0.3500,0.1400) (0.3625,0.1450) (0.3750,0.1500) (0.3875,0.1550) (0.4000,0.1600) (0.4125,0.1650) (0.4250,0.1700) (0.4375,0.1750) (0.4500,0.1800) (0.4625,0.1850) (0.4750,0.1900) (0.4875,0.1950) (0.5000,0.2000) (0.5125,0.2050) (0.5250,0.2100) (0.5375,0.2150) (0.5500,0.2200) (0.5625,0.2250) (0.5750,0.2300) (0.5875,0.2350) (0.6000,0.2400) (0.6125,0.2450) (0.6250,0.2500) (0.6375,0.2550) (0.6500,0.2600) (0.6625,0.2650) (0.6750,0.2700) (0.6875,0.2750) (0.7000,0.2800) (0.7125,0.2850) (0.7250,0.2900) (0.7375,0.2950) (0.7500,0.3000) (0.7625,0.3050) (0.7750,0.3100) (0.7875,0.3150) (0.8000,0.3200) (0.8125,0.3250) (0.8250,0.3300) (0.8375,0.3350) (0.8500,0.3400) (0.8625,0.3450) (0.8750,0.3500) (0.8875,0.3550) (0.9000,0.3600) (0.9125,0.3650) (0.9250,0.3700) (0.9375,0.3750) (0.9500,0.3800) (0.9625,0.3850) (0.9750,0.3900) (0.9875,0.3950) (1.0000,0.4000) (1.0125,0.4050) (1.0250,0.4100) (1.0375,0.4150) (1.0500,0.4200) (1.0625,0.4250) (1.0750,0.4300) (1.0875,0.4350) (1.1000,0.4400) (1.1125,0.4450) (1.1250,0.4500) (1.1375,0.4550) (1.1500,0.4600) (1.1625,0.4650) (1.1750,0.4700) (1.1875,0.4750) (1.2000,0.4800) (1.2125,0.4850) (1.2250,0.4900) (1.2375,0.4950) (1.2500,0.5000) (1.2625,0.5050) (1.2750,0.5100) (1.2875,0.5150) (1.3000,0.5200) (1.3125,0.5250) (1.3250,0.5300) (1.3375,0.5350) (1.3500,0.5400) (1.3625,0.5450) (1.3750,0.5500) (1.3875,0.5550) (1.4000,0.5600) (1.4125,0.5650) (1.4250,0.5700) (1.4375,0.5750) (1.4500,0.5800) (1.4625,0.5850) (1.4750,0.5900) (1.4875,0.5950) (1.5000,0.6000) (1.5125,0.6050) (1.5250,0.6100) (1.5375,0.6150) (1.5500,0.6200) (1.5625,0.6250) (1.5750,0.6300) (1.5875,0.6350) (1.6000,0.6400) (1.6125,0.6450) (1.6250,0.6500) (1.6375,0.6550) (1.6500,0.6600) (1.6625,0.6650) (1.6750,0.6700) (1.6875,0.6750) (1.7000,0.6800) (1.7125,0.6850) (1.7250,0.6900) (1.7375,0.6950) (1.7500,0.7000) (1.7625,0.7050) (1.7750,0.7100) (1.7875,0.7150) (1.8000,0.7200) (1.8125,0.7250) (1.8250,0.7300) (1.8375,0.7350) (1.8500,0.7400) (1.8625,0.7450) (1.8750,0.7500) (1.8875,0.7550) (1.9000,0.7600) (1.9125,0.7650) (1.9250,0.7700) (1.9375,0.7750) (1.9500,0.7800) (1.9625,0.7850) (1.9750,0.7900) (1.9875,0.7950) (2.0000,0.8000) (2.0125,0.8050) (2.0250,0.8100) (2.0375,0.8150) (2.0500,0.8200) (2.0625,0.8250) (2.0750,0.8300) (2.0875,0.8350) (2.1000,0.8400) (2.1125,0.8450) (2.1250,0.8500) (2.1375,0.8550) (2.1500,0.8600) (2.1625,0.8650) (2.1750,0.8700) (2.1875,0.8750) (2.2000,0.8800) (2.2125,0.8850) (2.2250,0.8900) (2.2375,0.8950) (2.2500,0.9000) (2.2625,0.9050) (2.2750,0.9100) (2.2875,0.9150) (2.3000,0.9200) (2.3125,0.9250) (2.3250,0.9300) (2.3375,0.9350) (2.3500,0.9400) (2.3625,0.9450) (2.3750,0.9500) (2.3875,0.9550) (2.4000,0.9600) (2.4125,0.9650) (2.4250,0.9700) (2.4375,0.9750) (2.4500,0.9800) (2.4625,0.9850) (2.4750,0.9900) (2.4875,0.9950) (2.5000,1.0000)};
  \addplot[s3!55] coordinates {(0.0000,0.0000) (0.0125,0.0033) (0.0250,0.0065) (0.0375,0.0098) (0.0500,0.0131) (0.0625,0.0165) (0.0750,0.0198) (0.0875,0.0232) (0.1000,0.0265) (0.1125,0.0299) (0.1250,0.0333) (0.1375,0.0367) (0.1500,0.0402) (0.1625,0.0436) (0.1750,0.0471) (0.1875,0.0506) (0.2000,0.0541) (0.2125,0.0576) (0.2250,0.0611) (0.2375,0.0646) (0.2500,0.0682) (0.2625,0.0718) (0.2750,0.0754) (0.2875,0.0790) (0.3000,0.0826) (0.3125,0.0862) (0.3250,0.0899) (0.3375,0.0935) (0.3500,0.0972) (0.3625,0.1009) (0.3750,0.1047) (0.3875,0.1084) (0.4000,0.1121) (0.4125,0.1159) (0.4250,0.1197) (0.4375,0.1235) (0.4500,0.1273) (0.4625,0.1312) (0.4750,0.1350) (0.4875,0.1389) (0.5000,0.1428) (0.5125,0.1467) (0.5250,0.1506) (0.5375,0.1545) (0.5500,0.1585) (0.5625,0.1624) (0.5750,0.1664) (0.5875,0.1704) (0.6000,0.1744) (0.6125,0.1785) (0.6250,0.1825) (0.6375,0.1866) (0.6500,0.1907) (0.6625,0.1948) (0.6750,0.1989) (0.6875,0.2031) (0.7000,0.2072) (0.7125,0.2114) (0.7250,0.2156) (0.7375,0.2198) (0.7500,0.2240) (0.7625,0.2283) (0.7750,0.2325) (0.7875,0.2368) (0.8000,0.2411) (0.8125,0.2454) (0.8250,0.2497) (0.8375,0.2541) (0.8500,0.2585) (0.8625,0.2629) (0.8750,0.2673) (0.8875,0.2717) (0.9000,0.2761) (0.9125,0.2806) (0.9250,0.2850) (0.9375,0.2895) (0.9500,0.2941) (0.9625,0.2986) (0.9750,0.3031) (0.9875,0.3077) (1.0000,0.3123) (1.0125,0.3169) (1.0250,0.3215) (1.0375,0.3261) (1.0500,0.3308) (1.0625,0.3355) (1.0750,0.3402) (1.0875,0.3449) (1.1000,0.3496) (1.1125,0.3543) (1.1250,0.3591) (1.1375,0.3639) (1.1500,0.3687) (1.1625,0.3735) (1.1750,0.3784) (1.1875,0.3832) (1.2000,0.3881) (1.2125,0.3930) (1.2250,0.3979) (1.2375,0.4028) (1.2500,0.4078) (1.2625,0.4128) (1.2750,0.4178) (1.2875,0.4228) (1.3000,0.4278) (1.3125,0.4328) (1.3250,0.4379) (1.3375,0.4430) (1.3500,0.4481) (1.3625,0.4532) (1.3750,0.4583) (1.3875,0.4635) (1.4000,0.4687) (1.4125,0.4739) (1.4250,0.4791) (1.4375,0.4843) (1.4500,0.4896) (1.4625,0.4949) (1.4750,0.5002) (1.4875,0.5055) (1.5000,0.5108) (1.5125,0.5161) (1.5250,0.5215) (1.5375,0.5269) (1.5500,0.5323) (1.5625,0.5377) (1.5750,0.5432) (1.5875,0.5486) (1.6000,0.5541) (1.6125,0.5596) (1.6250,0.5651) (1.6375,0.5707) (1.6500,0.5762) (1.6625,0.5818) (1.6750,0.5874) (1.6875,0.5930) (1.7000,0.5987) (1.7125,0.6043) (1.7250,0.6100) (1.7375,0.6157) (1.7500,0.6214) (1.7625,0.6272) (1.7750,0.6329) (1.7875,0.6387) (1.8000,0.6445) (1.8125,0.6503) (1.8250,0.6561) (1.8375,0.6620) (1.8500,0.6679) (1.8625,0.6737) (1.8750,0.6797) (1.8875,0.6856) (1.9000,0.6915) (1.9125,0.6975) (1.9250,0.7035) (1.9375,0.7095) (1.9500,0.7155) (1.9625,0.7216) (1.9750,0.7276) (1.9875,0.7337) (2.0000,0.7398) (2.0125,0.7459) (2.0250,0.7521) (2.0375,0.7582) (2.0500,0.7644) (2.0625,0.7706) (2.0750,0.7769) (2.0875,0.7831) (2.1000,0.7894) (2.1125,0.7956) (2.1250,0.8019) (2.1375,0.8083) (2.1500,0.8146) (2.1625,0.8210) (2.1750,0.8273) (2.1875,0.8337) (2.2000,0.8401) (2.2125,0.8466) (2.2250,0.8530) (2.2375,0.8595) (2.2500,0.8660) (2.2625,0.8725) (2.2750,0.8791) (2.2875,0.8856) (2.3000,0.8922) (2.3125,0.8988) (2.3250,0.9054) (2.3375,0.9120) (2.3500,0.9187) (2.3625,0.9253) (2.3750,0.9320) (2.3875,0.9387) (2.4000,0.9455) (2.4125,0.9522) (2.4250,0.9590) (2.4375,0.9658) (2.4500,0.9726) (2.4625,0.9794) (2.4750,0.9863) (2.4875,0.9931) (2.5000,1.0000)};
  \addplot[s3] coordinates {(0.0000,0.0000) (0.0125,0.0005) (0.0250,0.0011) (0.0375,0.0017) (0.0500,0.0023) (0.0625,0.0029) (0.0750,0.0035) (0.0875,0.0041) (0.1000,0.0048) (0.1125,0.0054) (0.1250,0.0061) (0.1375,0.0068) (0.1500,0.0075) (0.1625,0.0083) (0.1750,0.0090) (0.1875,0.0098) (0.2000,0.0105) (0.2125,0.0113) (0.2250,0.0122) (0.2375,0.0130) (0.2500,0.0139) (0.2625,0.0147) (0.2750,0.0156) (0.2875,0.0166) (0.3000,0.0175) (0.3125,0.0185) (0.3250,0.0195) (0.3375,0.0205) (0.3500,0.0215) (0.3625,0.0226) (0.3750,0.0236) (0.3875,0.0247) (0.4000,0.0259) (0.4125,0.0270) (0.4250,0.0282) (0.4375,0.0294) (0.4500,0.0307) (0.4625,0.0319) (0.4750,0.0332) (0.4875,0.0346) (0.5000,0.0359) (0.5125,0.0373) (0.5250,0.0387) (0.5375,0.0402) (0.5500,0.0416) (0.5625,0.0432) (0.5750,0.0447) (0.5875,0.0463) (0.6000,0.0479) (0.6125,0.0496) (0.6250,0.0512) (0.6375,0.0530) (0.6500,0.0547) (0.6625,0.0565) (0.6750,0.0584) (0.6875,0.0603) (0.7000,0.0622) (0.7125,0.0641) (0.7250,0.0661) (0.7375,0.0682) (0.7500,0.0703) (0.7625,0.0724) (0.7750,0.0746) (0.7875,0.0768) (0.8000,0.0791) (0.8125,0.0814) (0.8250,0.0837) (0.8375,0.0862) (0.8500,0.0886) (0.8625,0.0911) (0.8750,0.0937) (0.8875,0.0963) (0.9000,0.0990) (0.9125,0.1017) (0.9250,0.1044) (0.9375,0.1073) (0.9500,0.1101) (0.9625,0.1131) (0.9750,0.1161) (0.9875,0.1191) (1.0000,0.1222) (1.0125,0.1254) (1.0250,0.1286) (1.0375,0.1319) (1.0500,0.1352) (1.0625,0.1386) (1.0750,0.1421) (1.0875,0.1456) (1.1000,0.1492) (1.1125,0.1529) (1.1250,0.1566) (1.1375,0.1604) (1.1500,0.1643) (1.1625,0.1682) (1.1750,0.1722) (1.1875,0.1763) (1.2000,0.1804) (1.2125,0.1846) (1.2250,0.1888) (1.2375,0.1932) (1.2500,0.1976) (1.2625,0.2021) (1.2750,0.2066) (1.2875,0.2113) (1.3000,0.2160) (1.3125,0.2207) (1.3250,0.2256) (1.3375,0.2305) (1.3500,0.2355) (1.3625,0.2406) (1.3750,0.2457) (1.3875,0.2509) (1.4000,0.2562) (1.4125,0.2616) (1.4250,0.2671) (1.4375,0.2726) (1.4500,0.2782) (1.4625,0.2839) (1.4750,0.2896) (1.4875,0.2955) (1.5000,0.3014) (1.5125,0.3074) (1.5250,0.3135) (1.5375,0.3196) (1.5500,0.3258) (1.5625,0.3321) (1.5750,0.3385) (1.5875,0.3450) (1.6000,0.3515) (1.6125,0.3581) (1.6250,0.3648) (1.6375,0.3716) (1.6500,0.3785) (1.6625,0.3854) (1.6750,0.3924) (1.6875,0.3995) (1.7000,0.4066) (1.7125,0.4139) (1.7250,0.4212) (1.7375,0.4286) (1.7500,0.4360) (1.7625,0.4436) (1.7750,0.4512) (1.7875,0.4589) (1.8000,0.4666) (1.8125,0.4745) (1.8250,0.4824) (1.8375,0.4904) (1.8500,0.4984) (1.8625,0.5065) (1.8750,0.5147) (1.8875,0.5230) (1.9000,0.5313) (1.9125,0.5397) (1.9250,0.5482) (1.9375,0.5567) (1.9500,0.5653) (1.9625,0.5740) (1.9750,0.5828) (1.9875,0.5916) (2.0000,0.6004) (2.0125,0.6094) (2.0250,0.6184) (2.0375,0.6274) (2.0500,0.6365) (2.0625,0.6457) (2.0750,0.6550) (2.0875,0.6643) (2.1000,0.6736) (2.1125,0.6831) (2.1250,0.6925) (2.1375,0.7021) (2.1500,0.7117) (2.1625,0.7213) (2.1750,0.7310) (2.1875,0.7408) (2.2000,0.7506) (2.2125,0.7604) (2.2250,0.7703) (2.2375,0.7803) (2.2500,0.7903) (2.2625,0.8004) (2.2750,0.8105) (2.2875,0.8207) (2.3000,0.8309) (2.3125,0.8411) (2.3250,0.8514) (2.3375,0.8618) (2.3500,0.8722) (2.3625,0.8826) (2.3750,0.8931) (2.3875,0.9036) (2.4000,0.9141) (2.4125,0.9247) (2.4250,0.9354) (2.4375,0.9460) (2.4500,0.9568) (2.4625,0.9675) (2.4750,0.9783) (2.4875,0.9891) (2.5000,1.0000)};
  \addplot[s4] coordinates {(0.0000,0.0000) (0.0125,0.0000) (0.0250,0.0000) (0.0375,0.0000) (0.0500,0.0000) (0.0625,0.0000) (0.0750,0.0000) (0.0875,0.0001) (0.1000,0.0001) (0.1125,0.0001) (0.1250,0.0001) (0.1375,0.0001) (0.1500,0.0001) (0.1625,0.0001) (0.1750,0.0002) (0.1875,0.0002) (0.2000,0.0002) (0.2125,0.0002) (0.2250,0.0002) (0.2375,0.0003) (0.2500,0.0003) (0.2625,0.0003) (0.2750,0.0003) (0.2875,0.0004) (0.3000,0.0004) (0.3125,0.0004) (0.3250,0.0004) (0.3375,0.0005) (0.3500,0.0005) (0.3625,0.0006) (0.3750,0.0006) (0.3875,0.0007) (0.4000,0.0007) (0.4125,0.0008) (0.4250,0.0008) (0.4375,0.0009) (0.4500,0.0009) (0.4625,0.0010) (0.4750,0.0011) (0.4875,0.0011) (0.5000,0.0012) (0.5125,0.0013) (0.5250,0.0014) (0.5375,0.0015) (0.5500,0.0016) (0.5625,0.0017) (0.5750,0.0018) (0.5875,0.0020) (0.6000,0.0021) (0.6125,0.0022) (0.6250,0.0024) (0.6375,0.0025) (0.6500,0.0027) (0.6625,0.0029) (0.6750,0.0031) (0.6875,0.0033) (0.7000,0.0035) (0.7125,0.0037) (0.7250,0.0040) (0.7375,0.0042) (0.7500,0.0045) (0.7625,0.0048) (0.7750,0.0051) (0.7875,0.0055) (0.8000,0.0058) (0.8125,0.0062) (0.8250,0.0066) (0.8375,0.0070) (0.8500,0.0075) (0.8625,0.0079) (0.8750,0.0084) (0.8875,0.0090) (0.9000,0.0095) (0.9125,0.0102) (0.9250,0.0108) (0.9375,0.0115) (0.9500,0.0122) (0.9625,0.0130) (0.9750,0.0138) (0.9875,0.0147) (1.0000,0.0156) (1.0125,0.0165) (1.0250,0.0176) (1.0375,0.0187) (1.0500,0.0198) (1.0625,0.0210) (1.0750,0.0223) (1.0875,0.0237) (1.1000,0.0251) (1.1125,0.0267) (1.1250,0.0283) (1.1375,0.0300) (1.1500,0.0318) (1.1625,0.0336) (1.1750,0.0356) (1.1875,0.0377) (1.2000,0.0399) (1.2125,0.0423) (1.2250,0.0447) (1.2375,0.0473) (1.2500,0.0500) (1.2625,0.0528) (1.2750,0.0558) (1.2875,0.0589) (1.3000,0.0622) (1.3125,0.0656) (1.3250,0.0692) (1.3375,0.0730) (1.3500,0.0769) (1.3625,0.0809) (1.3750,0.0852) (1.3875,0.0896) (1.4000,0.0942) (1.4125,0.0990) (1.4250,0.1040) (1.4375,0.1092) (1.4500,0.1145) (1.4625,0.1201) (1.4750,0.1258) (1.4875,0.1318) (1.5000,0.1379) (1.5125,0.1442) (1.5250,0.1508) (1.5375,0.1575) (1.5500,0.1644) (1.5625,0.1715) (1.5750,0.1788) (1.5875,0.1863) (1.6000,0.1939) (1.6125,0.2018) (1.6250,0.2098) (1.6375,0.2180) (1.6500,0.2264) (1.6625,0.2350) (1.6750,0.2437) (1.6875,0.2526) (1.7000,0.2616) (1.7125,0.2708) (1.7250,0.2802) (1.7375,0.2897) (1.7500,0.2993) (1.7625,0.3091) (1.7750,0.3190) (1.7875,0.3290) (1.8000,0.3392) (1.8125,0.3494) (1.8250,0.3598) (1.8375,0.3703) (1.8500,0.3809) (1.8625,0.3916) (1.8750,0.4023) (1.8875,0.4132) (1.9000,0.4242) (1.9125,0.4352) (1.9250,0.4463) (1.9375,0.4575) (1.9500,0.4688) (1.9625,0.4801) (1.9750,0.4915) (1.9875,0.5030) (2.0000,0.5145) (2.0125,0.5261) (2.0250,0.5377) (2.0375,0.5494) (2.0500,0.5611) (2.0625,0.5729) (2.0750,0.5847) (2.0875,0.5965) (2.1000,0.6084) (2.1125,0.6203) (2.1250,0.6323) (2.1375,0.6443) (2.1500,0.6563) (2.1625,0.6683) (2.1750,0.6804) (2.1875,0.6925) (2.2000,0.7046) (2.2125,0.7168) (2.2250,0.7289) (2.2375,0.7411) (2.2500,0.7533) (2.2625,0.7655) (2.2750,0.7778) (2.2875,0.7900) (2.3000,0.8023) (2.3125,0.8146) (2.3250,0.8269) (2.3375,0.8392) (2.3500,0.8515) (2.3625,0.8638) (2.3750,0.8762) (2.3875,0.8885) (2.4000,0.9009) (2.4125,0.9132) (2.4250,0.9256) (2.4375,0.9380) (2.4500,0.9504) (2.4625,0.9628) (2.4750,0.9752) (2.4875,0.9876) (2.5000,1.0000)};
  \addplot[densely dashed, s4] coordinates {(0.0000,0.0000) (1.5000,0.0000) (2.5000,1.0000)};
  \node[font=\scriptsize, anchor=south west] at (axis cs:2.33,1.005) {$c=(2.5,1)$};
  \addplot[only marks, mark=*, mark size=1.4pt, black] coordinates {(0,0) (2.5,1)};
\end{axis}
\end{tikzpicture}
\caption{Award paths of the log-exponential family with $\theta=-\infty,-5,-2,-0.5,0,0.5,2,5,+\infty$ for a two-claimant example. The dashed paths are $\CEA$ and $\CEL$; the straight one is $\PROP$.}\label{rtheta}
\end{figure}

 The parameter ranks the family in the Lorenz order: for $\theta\le\theta'$, the rule $r^\theta$ Lorenz dominates $r^{\theta'}$ (Proposition~\ref{prop:lorenz}), that is, at every problem and for every $k$, the sum of the $k$ smallest awards chosen by $r^\theta$ is at least the sum of the $k$ smallest awards chosen by $r^{\theta'}$. The family is a chain in this order, with CEA at $\theta=-\infty$ as its maximal and CEL at $\theta=+\infty$ as its minimal member. For two claimants the comparison can be read off Figure~\ref{rtheta}: the award paths do not cross, and the path belonging to the smaller parameter is everywhere closer to the diagonal. The interpretation of $\theta$ in our two applications is in each case a reading of this ranking. In taxation it is progressivity: awards are post-tax amounts, so a smaller parameter distributes them more evenly and the induced equal-sacrifice schedule is more progressive. In auto-deleveraging it is risk sensitivity: under the coherent-risk criterion used to select a rule in \citet{ADL}, the risk of the resulting allocation is monotone in $\theta$, CEA at $\theta=-\infty$ being risk-minimising, and increasing $\theta$ giving progressively less weight to risk reduction. We develop this connection in more detail in a companion paper.

Depending on the parameter $\theta$, log-exponential rules may satisfy additional desirable properties: the members with $\theta\in[-\infty,0]$ are merging-proof, those with $\theta\in[0,+\infty]$ are splitting-proof, and the proportional rule is the only member satisfying both properties. Moreover, $\CEA$ is the only member that is invariant under claims truncation, and $\CEL$ is the only member that satisfies minimal rights first (Section~\ref{sec:compat}).

The proof of our result has two parts. In the strictly interior case, that is, when no claimant is ever fully paid or left with nothing at an interior endowment, Young's equal-sacrifice theorem provides a utility $U$ that represents the gain maps additively. Applying the same argument to the dual rule provides a utility $V$ for the loss maps. Compatibility of these two utility representations yields a functional equation which we call the bridge equation. The composition axioms imply that both utilities are strictly concave, and after establishing $C^1$ regularity, the bridge equation differentiates to
\[
\frac{U'(c)}{U'(x)}+\frac{V'(c)}{V'(c-x)}=1,\qquad 0<x<c,
\]
where $U$ is the gain-side utility and $V$ is the dual loss-side utility. 
The only solutions are the log-exponential utilities. If strict interiority fails, that is, if some claimant is fully paid or receives nothing at an interior endowment, a separate contagion argument forces CEA or CEL.

\subsection{Related literature}\label{sec:literature}

\citet{Young1987} introduced the class of rules we call Young's rules and characterised them by equal treatment of equals (ETE), continuity (CONT) and consistency (CONS).\footnote{Young's own term is \emph{parametric rule}. We follow the terminology of \citet{Thomson2019} and speak of Young's rules, and of a rule having a Young representation, since the term ``parameter'' is ambiguous between the argument $\lambda$ of the representation and the representing function itself.} \citet{Kaminski2000} represented Young's rules as systems of connected vessels, and \citet{Kaminski2006} generalised Young's result to arbitrary separable type spaces.

\emph{One composition versus two.} \citet{Young1988} characterised the \emph{equal-sacrifice} rules using one composition axiom together with ETE, CONT, CONS, strict resource monotonicity and strict order preservation. Within this class, scale invariance selects the isoelastic (power) utilities. Young's composition axiom is our composition down (CD): equal-sacrifice rules satisfy ETE, CONS and CD, but generically violate composition up (CU). Example~3 in \cite{Moulin2000} uses $u(x)=-1/x$ to illustrate this. Our result adds the second composition axiom and drops both the strictness conditions and continuity: the equal-sacrifice class then reduces to the one-parameter log-exponential subfamily $u_\theta=\log\varphi_\theta$, together with the boundary rules CEA and CEL which are excluded by strictness.

Without the strictness conditions, \citet{ChambersMorenoTernero2017} obtain the generalised equal-sacrifice rules. Such a rule is described by brackets on the claims axis and a continuous increasing sacrifice function on every nondegenerate bracket. Claims below the active bracket are fully honoured, claimants in the active bracket equalise sacrifice subject to the bracket floor, and larger claims are capped at a common award.

\emph{Moulin's characterisation and open problem.} \citet{Moulin2000} studies scale-invariant rules under our axioms without ETE (his Upper and Lower Composition are our CD and CU). The solutions are priority compositions of proportional, weighted-gains and weighted-losses components,\footnote{We use \citet{Moulin2000}'s terminology here; the weighted-gains and the weighted-losses rules are the weighted versions of CEA and of CEL, respectively.} and exactly three rules satisfying ETE remain: PROP, CEA and CEL, which are precisely the scale-invariant members of our family. Dropping scale invariance turns these three isolated rules into the connected family $\{r^\theta\}$. Moulin poses the problem of characterising the rules without scale invariance and gives one non-homogeneous example, his Example~4: the two-claimant rule satisfying $y_2/x_2=(e^{y_1}-1)/(e^{x_1}-1)$. In our notation, this is an asymmetric hybrid which combines the $\theta=1$ parametric function $\varphi_1^{-1}(\lambda\varphi_1(\cdot))$ for agent~1 with the proportional ($\theta=0$) function for agent~2. Our theorem resolves the symmetric (ETE) version of this problem.

\citet{Chambers2006} studies the corresponding fixed-population problem: CD and CU alone characterise the difference rules. Every $r^\theta$ is a difference rule, with paths traced by our gain maps, and our theorem identifies which difference rules remain after adding ETE and CONS. In the variable-population model, Chambers introduced the asymmetric logarithmic-proportional rules, which give consistent extensions of Moulin's Example~4. Since no symmetric non-homogeneous rule satisfying CONS, CD and CU was known, he conjectured that these axioms together with symmetry characterise PROP, CEA and CEL. The family $r^\theta$ refutes this conjecture: every member with $\theta\in\R\setminus\{0\}$ satisfies ETE, is consistent, bi-compositional and non-homogeneous.\footnote{We impose ETE rather than anonymity throughout. For consistent rules the choice is immaterial: ETE and consistency together imply anonymity \citep[Lemma~3]{ChambersThomson2002}, and Step~2 in the proof of Proposition~\ref{prop:auto-cont} derives the same conclusion from bilateral consistency alone.}

\emph{The family outside the rationing literature.} 
The rules $r^\theta$ are not new as mathematical objects. As we discuss in Remark~\ref{rem:copula} claims rules can be mapped to copula from the statistics literature, the two composition axioms become associativity and radial symmetry of an associated copula, and the family is that of \citet{Frank1979}; his Theorem 4.1 classifies the radially symmetric members. Our contribution is the axiomatic selection. Two features of the claims model have no counterpart in that literature: claims range over a half-line rather than a fixed interval, so the parameter must be shown independent of the reference claim; and the solutions with interior idempotents, which Frank's classification admits as a second branch of ordinal sums, collapse here to CEA and CEL.

\emph{Transversal families.} Several prominent one-parameter interpolations between the focal rules are not bi-compositional and meet our family only at shared endpoints. The non-endpoint members of the Talmud rule and the TAL family \citep{MorenoTerneroVillar2006} connect CEA and CEL but violate at least one composition axiom. Thomson's compromise families between PROP and CEA \citep{Thomson2015compromise} use claims-dependent weighted averages and cone partitions, may violate ETE, and intersects with our class only at PROP and CEA. On the asymmetric side, \citet{Stovall2014} characterises monotone-path rules through collective rationality, while \citet{MoulinStong2002} obtain the standard-of-gains and standard-of-losses families in the probabilistic discrete model from one composition axiom and monotonicity. In the deterministic discrete model, \citet{Moulin2000} shows that CONS, CD and CU leave only the priority rules. Thus, the continuum characterised here is a divisible-good phenomenon. See \cite{Thomson2003,Thomson2015,Thomson2019} for surveys of the model.

\section{Model, preliminaries and main result}\label{sec:setting}

\subsection{Claims problems and rules}

There is a set $N^\ast$ of {\bf potential agents}, $|N^\ast| \ge 3$, either finite or countably infinite. A {\bf claims problem} is a triple $(N, c, E)$ with a finite, nonempty set $N \subseteq N^\ast$ of agents, a claims vector $c \in \R^N_{+}$, with component $c_i$ for each agent $i$, and an endowment $0 \le E \le \norm{c} := \sum_{i \in N} c_i$. The difference $\norm{c}-E$ is the {\bf deficit} of the problem, and $c_i-r_i(N,c,E)$ is the {\bf loss} of agent $i$. A {\bf rule} $r$ assigns to each problem an awards vector $r(N,c,E) \in \R^N_+$ with $0 \le r_i \le c_i$ for each $i$ and $\sum_{i\in N} r_i(N,c,E) = E$. We suppress $N$ where no confusion arises.

\subsection{Axioms}
We work with the following four axioms. See \cite{Thomson2019} and the introduction for extensive motivation.
\begin{description}[leftmargin=2.2em, style=nextline]
  \item[Equal treatment of equals (ETE).] Agents with the same claims shall receive the same awards, $c_i = c_j \implies r_i(c,E) = r_j(c,E)$.
    \item[Composition down (CD).] Applying the rule sequentially along decreasing endowments shall lead to the same allocation as applying the rule to the smallest endowment immediately: for $0 \le E' \le E \le \norm{c}$,
  $ r(c, E') = r\big(r(c,E),\, E'\big). $
  \item[Composition up (CU).] Applying the rule sequentially along increasing endowments shall lead to the same allocation as applying the rule to the largest endowment immediately: for $0 \le E \le E' \le \norm{c}$,
  $ r(c, E') = r(c,E) + r\big(c - r(c,E),\, E' - E\big). $
  \item[Bilateral consistency (CONS).] Suppose the rule has been applied to a problem and two of the claimants are singled out. Consider the two-claimant problem in which these two have the same claims as before and the endowment is the sum of the two amounts they were awarded. Consistency requires that the rule award each of them the same amount in this problem as in the original one: for every problem, every pair $\{i,j\} \subseteq N$, and $x = r(N,c,E)$,
  $ (x_i, x_j) = r\big(\{i,j\}, (c_i,c_j),\, x_i + x_j\big). $

\end{description}

Consistency also has a group-wise version, in which an arbitrary subgroup, and not only a pair, is singled out. We impose the bilateral version only, but the group-wise version is the hypothesis of Young's theorem, so we name it here:
\begin{description}[leftmargin=2.2em, style=nextline]
  \item[Consistency over subgroups (CONS$^\ast$).] For every problem, every nonempty $A \subseteq N$, and $x = r(N,c,E)$,
  $ x_A = r\big(A, c_A,\, \textstyle\sum_{i \in A} x_i\big). $
\end{description}
CONS$^\ast$ implies CONS. Under either composition axiom the two are equivalent (Lemma~\ref{lem:fullcons}), and CONS$^\ast$ enters our arguments only through that lemma.

The four axioms imply order preservation (Lemma~\ref{lem:ordpres}). 
\begin{description}[leftmargin=2.2em, style=nextline]
  \item[Order preservation.] A claimant with a weakly larger claim shall receive a weakly larger award and incur a weakly larger loss: $c_i \ge c_j \implies (r_i(c,E) \ge r_j(c,E)\text{ and } c_i - r_i(c,E) \ge c_j - r_j(c,E))$.
\end{description}
Less straightforwardly, the four axioms imply continuity, as we show in Proposition~\ref{prop:auto-cont} below.
\begin{description}[leftmargin=2.2em, style=nextline]
  \item[Continuity (CONT).] Small changes in the data of a problem should not lead to large changes in awards: for each fixed population $N$, $r(N,\cdot,\cdot)$ is jointly continuous in $(c,E)$.
\end{description}

\subsection{\texorpdfstring{The family $r^\theta$}{The family r-theta}}

For $\theta \in \R$ define
\[
\varphi_\theta(x) := \frac{e^{\theta x} - 1}{\theta} \quad (\theta \neq 0), \qquad \varphi_0(x) := x,
\]
a strictly increasing continuous bijection of $[0,\infty)$ onto $[0,\infty)$ for $\theta\ge0$, and onto $[0,1/|\theta|)$ for $\theta<0$. In the latter case, $\lambda\varphi_\theta(c)$ remains in this range for every $\lambda\in[0,1]$. Note $1 + \theta\varphi_\theta(x) = e^{\theta x}$, so $\varphi_\theta$ is the solution family of the Cauchy-type equation $1+\theta\varphi(a+b) = (1+\theta\varphi(a))(1+\theta\varphi(b))$.

The rule $r^\theta$ awards
\begin{equation}\label{eq:rtheta}
r^\theta_i(c,E) = \varphi_\theta^{-1}\big(\lambda\, \varphi_\theta(c_i)\big), \qquad \lambda \in [0,1] \text{ determined by } \textstyle\sum_i \varphi_\theta^{-1}(\lambda \varphi_\theta(c_i)) = E.
\end{equation}
The map $\lambda \mapsto \sum_i \varphi_\theta^{-1}(\lambda \varphi_\theta(c_i))$ is continuous and strictly increasing from $0$ to $\norm{c}$ (when $c \neq 0$), so \eqref{eq:rtheta} is well defined. We have $r^0 = \PROP$. We write $r^{+\infty} := \CEL$ (awards $(c_i - s)_+$ with $\sum_i (c_i-s)_+ = E$) and $r^{-\infty} := \CEA$ (awards $\min(c_i, s)$ with $\sum_i \min(c_i,s) = E$); Proposition~\ref{prop:endpoints} below will justify this notation. Equivalently, the rule is the equal sacrifice rule in the sense of~\cite{Young1988} for the function $u_{\theta}:=\log\varphi_\theta$.

The parameter admits a distributional reading: it ranks the family in the Lorenz order. For $x,y\in\R^N$ with $\norm{x}_1=\norm{y}_1$, we say that $x$ {\bf Lorenz dominates} $y$ if for each $k\le|N|$ the sum of the $k$ smallest coordinates of $x$ is at least the sum of the $k$ smallest coordinates of $y$; domination is {\bf strict} if one of these inequalities is strict. A rule Lorenz dominates another rule if its awards vector does so at every problem.

\begin{proposition}[Lorenz ranking]\label{prop:lorenz}
Let $-\infty\le\theta\le\theta'\le+\infty$. Then $r^\theta$ Lorenz dominates $r^{\theta'}$. If $\theta<\theta'$, the domination is strict at every problem with an interior endowment and at least two distinct positive claims; in particular the two rules differ at every such problem.
\end{proposition}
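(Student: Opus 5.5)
The plan is to reduce the Lorenz comparison to a single-crossing property of the two award vectors, and to reduce single crossing to a pairwise comparison of two-claimant award paths using bilateral consistency. Fix a problem $(N,c,E)$ and put $x=r^\theta(c,E)$, $y=r^{\theta'}(c,E)$. Order the agents so that $c_1\le\dots\le c_n$. Both rules are order preserving: the maps $c\mapsto\varphi_\theta^{-1}(\lambda\varphi_\theta(c))$ are nondecreasing, and CEA and CEL are clearly order preserving. Hence this claims order is also an order of both $x$ and $y$ by size, with ties in claims giving equal awards. The $k$ smallest coordinates of $x$ and of $y$ are therefore $x_1,\dots,x_k$ and $y_1,\dots,y_k$. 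So it suffices to show that $\sum_{i\le k}(x_i-y_i)\ge0$ for every $k$.

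\textbf{Step 1 (single crossing).} I will show that there are no $i,j$ with $c_i<c_j$, $x_i<y_i$ and $x_j>y_j$. Given this, all agents with $x_i>y_i$ come before all agents with $x_i<y_i$ in claims order. Since $\sum_i(x_i-y_i)=0$, every partial sum is then nonnegative. To prove the claim, use bilateral consistency of both rules: $(x_i,x_j)$ lies on the $r^\theta$-path of the pair problem $(c_i,c_j)$, and $(y_i,y_j)$ lies on the $r^{\theta'}$-path. Write $a_\theta(b)$ for the award of the smaller claimant on the $r^\theta$-path when the larger claimant receives $b$. Suppose we know that (i) $a_\theta(b)\ge a_{\theta'}(b)$ for every $b$, and (ii) $a_{\theta'}$ is nondecreasing. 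Then $x_i=a_\theta(x_j)\ge a_{\theta'}(x_j)\ge a_{\theta'}(y_j)=y_i$, which is a contradiction. For CEL the function $a$ is multivalued at $b\le c_j-c_i$, where the smaller claimant gets $0$, but there the inequality is trivial. Similarly, CEA needs care at the kink where the smaller claimant is fully paid. I will treat the endpoint members directly: the CEA path is the pointwise highest monotone path in the rectangle $[0,c_i]\times[0,c_j]$ consistent with order preservation, and the CEL path is the lowest.

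\textbf{Step 2 (path comparison, the main obstacle).} For finite $\theta<\theta'$ let $h=\varphi_{\theta'}\circ\varphi_\theta^{-1}$. Then $h(0)=0$, $h$ is increasing, and explicitly $h(t)=((1+\theta t)^{\theta'/\theta}-1)/\theta'$, with the obvious exponential or logarithmic forms when $\theta$ or $\theta'$ is $0$. Put $\gamma=\varphi_\theta(c_i)<\delta=\varphi_\theta(c_j)$, and let the larger claimant's award be $b=\varphi_\theta^{-1}(\lambda\delta)$. Then $a_\theta(b)=\varphi_\theta^{-1}(\lambda\gamma)$. The $\theta'$-path through the same $b$ has ratio $h(\lambda\delta)/h(\delta)$, and it gives the smaller claimant $\varphi_\theta^{-1}\circ h^{-1}\big(h(\gamma)h(\lambda\delta)/h(\delta)\big)$. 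Hence (i) is equivalent to
\[
\frac{h(\lambda\delta)}{h(\delta)}\le\frac{h(\lambda\gamma)}{h(\gamma)}\qquad(\gamma<\delta,\ \lambda\in(0,1)),
\]
that is, $t\mapsto h(\lambda t)/h(t)$ is nonincreasing. This holds if and only if the elasticity $t h'(t)/h(t)$ is nondecreasing. I expect this elasticity computation for the explicit $h$ to be the technical core. First I would substitute $s=\log(1+\theta t)$, which turns $h$ into $\big(e^{(\theta'/\theta)s}-1\big)/\theta'$ composed with $t=(e^s-1)/\theta$. The monotonicity then reduces to a comparison of the functions $s\mapsto s e^{ps}/(e^{ps}-1)$ for two exponents $p$. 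The strict version of this comparison would give strict elasticity increase for $\theta<\theta'$. Fact (ii) follows directly from the formula for the path.

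\textbf{Step 3 (strictness).} Assume $\theta<\theta'$, $0<E<\norm{c}$, and at least two distinct positive claims. Strict monotonicity of the elasticity makes the inequality in (i) strict at interior points of the path. Then take a pair with $0<c_i<c_j$. Since $E$ is interior, $x\ne y$ on that pair: for finite parameters $\lambda\in(0,1)$, and for the endpoints the kink structure separates them. This gives $x_i>y_i$ for the smaller claim. By single crossing, the partial sum up to the last index with $x_k>y_k$ is then strictly positive. Cases involving $\pm\infty$ are handled by the direct extremality argument of Step 1, checking that the CEA or CEL path differs from every interior path at interior endowments.
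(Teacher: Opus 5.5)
Your proof is correct, and it follows a different route from the paper's.

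\textbf{How the paper argues.} It proves the two-claimant case at a fixed endowment, using strict concavity of $u_\theta\circ u_{\theta'}^{-1}$ (the Arrow--Pratt comparison $u_\theta''/u_\theta'=-1/\varphi_\theta$). It obtains CEA and CEL as limits via Proposition~\ref{prop:endpoints}. It lifts to arbitrary populations by citing Theorem~5 of \citet{Thomson2012lorenz}, which uses order preservation, endowment monotonicity and CONS. Strictness comes from CONS together with transitivity.

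\textbf{How your argument differs.} You compare the two-claimant paths at a common award $b$ of the larger claimant, not at a common endowment. With monotonicity of the paths, this gives a single-crossing property of $x-y$ in the claims order, which yields the Lorenz inequalities directly. The fixed-$b$ formulation is what makes this elementary lifting work: inside an $n$-claimant problem, the pair totals $x_i+x_j$ and $y_i+y_j$ generally differ.

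\textbf{What each buys.} Your version is self-contained, proves a slightly stronger structural statement, and handles CEA and CEL by their extremality among order-preserving paths rather than by limits. The paper's version is shorter because it delegates the lifting to a general theorem.

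\textbf{The step you flag as the main obstacle.} It is the same fact the paper uses, and it takes one line. Write $t=\varphi_\theta(x)$ and use $\varphi_\theta'/\varphi_\theta=1/\varphi_{-\theta}$. The elasticity of $h$ is then $\varphi_{-\theta}(x)/\varphi_{-\theta'}(x)$. Its logarithmic $x$-derivative is $1/\varphi_\theta(x)-1/\varphi_{\theta'}(x)>0$, because $\varphi_\theta(x)=\int_0^x e^{\theta u}\,du$ is strictly increasing in $\theta$. The substitution $s=\log(1+\theta t)$ is unnecessary and breaks down at $\theta=0$.

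\textbf{Details to fix.}
\begin{itemize}
\item Neither endpoint path is multivalued as a function of $b$. They are the graphs of $\min(b,c_i)$ and $(b-c_j+c_i)_+$. These are exactly the largest and smallest smaller-claimant awards allowed by order preservation in awards and in losses. For the finite members, order preservation in losses follows from the loss-map formula in Lemma~\ref{lem:dualident}.
\item In Step~3, ``$x_i>y_i$ for the smaller claim'' does not follow for an arbitrary pair, since the pair totals under $x$ and $y$ differ. You only need $x\neq y$, which your pair argument does deliver. Zero sum plus single crossing then gives a strictly positive partial sum.
\item The case $\theta=-\infty$, $\theta'=+\infty$ is not covered by ``differs from every interior path''. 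Either use transitivity through $\PROP$, as the paper does, or argue directly. The CEA and CEL pair-paths meet only at $(0,0)$ and $(c_i,c_j)$. At an interior endowment, CEA gives every positive claim a positive award and CEL leaves every positive claim a positive loss. So $x=y$ would place the pair point at an interior point of both paths, which is impossible.
\end{itemize}
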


The proof is in Appendix~\ref{app:Lorenz}. Its two-claimant case rests on the curvature of $u_\theta=\log\varphi_\theta$, which is decreasing in $\theta$, and the passage to arbitrary populations on the lifting theorem for Lorenz comparisons of \citet{Thomson2012lorenz}. 

\subsection{Main theorem}

\begin{theorem}[Main characterisation]\label{thm:main}
A rule $r$ satisfies \textup{ETE}, \textup{CONS}, \textup{CD} and \textup{CU} if and only if there is $\theta\in[-\infty,+\infty]$ such that
\[
r=r^\theta.
\]
\end{theorem}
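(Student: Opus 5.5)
The proof splits into sufficiency and necessity.

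\emph{Sufficiency.} For finite $\theta$, ETE and CONS* are immediate from the equal-sacrifice form \eqref{eq:rtheta}: a common $\lambda$ serves every subgroup. For CD, write $x=r^\theta(c,E)$ with parameter $\lambda$. Dividing $E'$ with claims $x$ uses awards $\varphi_\theta^{-1}(\mu\varphi_\theta(x_i))=\varphi_\theta^{-1}(\mu\lambda\varphi_\theta(c_i))$, so the composition is $r^\theta(c,E')$ by uniqueness of the parameter. For CU, use the dual description. The loss $c_i-x_i$ is the equal-sacrifice loss for $u_{-\theta}$, because
\[
\varphi_{-\theta}(c-x)=\frac{e^{-\theta(c-x)}-1}{-\theta}
\]
and the identity $1+\theta\varphi_\theta(a+b)=(1+\theta\varphi_\theta(a))(1+\theta\varphi_\theta(b))$ turn $\varphi_\theta(x)=\lambda\varphi_\theta(c)$ into $\varphi_{-\theta}(c-x)=\lambda'\varphi_{-\theta}(c)$ with $\lambda'$ common to all claimants. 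Hence the dual of $r^\theta$ is $r^{-\theta}$, and CU for $r^\theta$ is CD for $r^{-\theta}$. The endpoints CEA and CEL are classical and satisfy all four axioms.

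\emph{Necessity.} Let $r$ satisfy the four axioms.
\begin{enumerate}[label=(\roman*)]
\item Establish order preservation (Lemma~\ref{lem:ordpres}), continuity (Proposition~\ref{prop:auto-cont}) and CONS* (Lemma~\ref{lem:fullcons}).
\item Observe that the dual rule $r^d(c,E)=c-r(c,\norm{c}-E)$ also satisfies the axioms, since CD and CU swap under duality.
\item Split into cases. If at some problem with interior endowment some claimant receives $0$ or is fully paid, run a contagion argument. Using CD and CU along the endowment path and CONS to transfer to other pairs, a single corner award should propagate until $r$ equals CEL (zero awards to small claims) or CEA (full awards to small claims). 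I would first prove this for two claimants and then lift it by consistency.
\item In the strictly interior case, Young's theorem (ETE, CONT, CONS*) gives a representation $r_i=f(c_i,\lambda)$. CD forces the paths $\{f(c,\lambda)\}_\lambda$ to form a semigroup-compatible family, which yields the equal-sacrifice form with a utility $U$, as in Young's 1988 argument, with strictness supplied by interiority. Applying the same argument to $r^d$ gives a loss-side utility $V$. Requiring both descriptions to define the same rule gives the bridge equation: $U(c)-U(x)$ and $V(c)-V(c-x)$ are related by a common monotone reparametrisation.
\end{enumerate}

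\emph{Main obstacle: regularity and solving the bridge equation.} I would first use the composition axioms to show that $U$ and $V$ are strictly concave. This gives one-sided derivatives everywhere and differentiability almost everywhere. I would then bootstrap to $C^1$: on pairs, the reparametrisation is monotone and continuous, and the bridge relation expresses $U'$ at one point through $V'$ at others, so differentiability propagates everywhere. Differentiating then gives
\[
\frac{U'(c)}{U'(x)}+\frac{V'(c)}{V'(c-x)}=1 .
\]
Fix $c$ and set $g=U'$, $h=V'$. The equation becomes a Pexider-type relation, $g(c)/g(x)=1-h(c)/h(c-x)$. Varying $c$ and comparing should force $g$ to be exponential times a constant, giving $U'\propto e^{\theta x}/\varphi_\theta(x)$ up to additive normalisation. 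This identifies $U=\log\varphi_\theta$ and $V=\log\varphi_{-\theta}$. Finally, a priori $\theta$ could depend on a reference claim, because claims range over a half-line. I would show it is constant by overlapping the intervals $(0,c)$ for different $c$ and using uniqueness of the solution on each. Together with the corner case, this gives $r=r^\theta$ for some $\theta\in[-\infty,+\infty]$.
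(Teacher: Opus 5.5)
Your overall architecture matches the paper's: sufficiency by composing gain maps plus duality; necessity via continuity, a Young representation, a boundary/interior split, a bridge equation between a gain-side $U$ and a loss-side $V$, then concavity, $C^1$ regularity and the marginal-utility equation. The gaps are in the two steps that carry the real content, and the hint you give for the first one is wrong.

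\textbf{Solving the marginal-utility equation.} You propose that $g=U'$ is forced to be an exponential times a constant. It is not. For $r^\theta$ one has, up to a positive factor, $U'=1/\varphi_{-\theta}(x)=\theta/(1-e^{-\theta x})$, and $U'=1/x$ at $\theta=0$; this blows up at $0$, and your own formula $U'\propto e^{\theta x}/\varphi_\theta(x)$ contradicts the claim. The exponential object is the ratio $V'/U'$.

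The paper sets $g:=1/U'$, $h:=1/V'$, $q:=g/h$ and proceeds as follows.
\begin{enumerate}
\item Write the equation with $c=x+y$ as the cocycle $h(x+y)=h(y)+\frac{q(x)}{q(x+y)}h(x)$.
\item Expand $h(x+y+z)$ in the two possible ways to get $q(x+z)/q(x+y+z)=q(z)/q(y+z)$.
\item Deduce from this that $q(0^+)\in(0,\infty)$ exists.
\item Let $z\downarrow0$ to obtain the multiplicative Cauchy equation for $q/q(0^+)$, hence $q\propto e^{-\theta x}$.
\item Symmetrise $h(x+y)=h(y)+e^{\theta y}h(x)$ in $x,y$ (additive Cauchy when $\theta=0$) to get $h\propto\varphi_\theta$ and $g\propto\varphi_{-\theta}$.
\end{enumerate}
``Varying $c$ and comparing'' does not produce this. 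Since the equation holds for all $0<x<c$, the argument yields one $\theta$ on all of $(0,\infty)$, so your overlapping-intervals step is unnecessary.

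\textbf{The boundary case.} This is only asserted (``should propagate''). The missing mechanism is the one that turns a single corner award into a cap map.
\begin{enumerate}
\item Iterate a proper gain map $T$ that has a positive fixed point.
\item Composition closure and compactness of the Young parameter give an idempotent limit $\ell_T\in\T$ with $\ran\ell_T=\Fix T=[0,s]$.
\item A CU/CD/ETE argument excludes $\ell_T(c)<s$ for $c>s$, so $\min(\cdot,s)\in\T$.
\item CEA then holds on pairs $(s,b)$, and CU shifts it to all pairs with $\min\{a,b\}\le s$.
\item A common Young parameter transfers this to the caps $\min(\cdot,w)$ for $w\le s$.
\item Composing the loss maps $(\cdot-w/n)_+$ (CU) reaches every cap.
\end{enumerate}
You plan to do this first for two claimants, but the transfer from pairs containing $s$ to arbitrary pairs is exactly what the common gain maps of the Young representation supply. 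That representation comes from consistency with a third claimant, and the paper uses $|N^\ast|\ge3$ here; you invoke it only in the interior case.

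\textbf{Smaller steps that are stated rather than argued.}
\begin{itemize}
\item Concavity needs the two-sided contraction $0<T(c_2)-T(c_1)<c_2-c_1$. This is strict increase of the proper gain and loss maps, obtained from one-point determinacy and an iteration argument under strict interiority.
\item The $C^1$ step needs a $c$ at which the merely monotone $\beta$ is differentiable at $U(c)-U(y_1)$. A corner of $U$ at $y_1$ then becomes a reversed corner of $V$ at $c-y_1$, which contradicts concavity of $V$.
\item For finite $N^\ast$, Young's 1987 theorem does not apply verbatim. The representation must come from the fixed-population result of Dietzenbacher, Tamura and Thomson.
\end{itemize}
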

\begin{figure}
\centering
\includegraphics[width=0.9\textwidth]{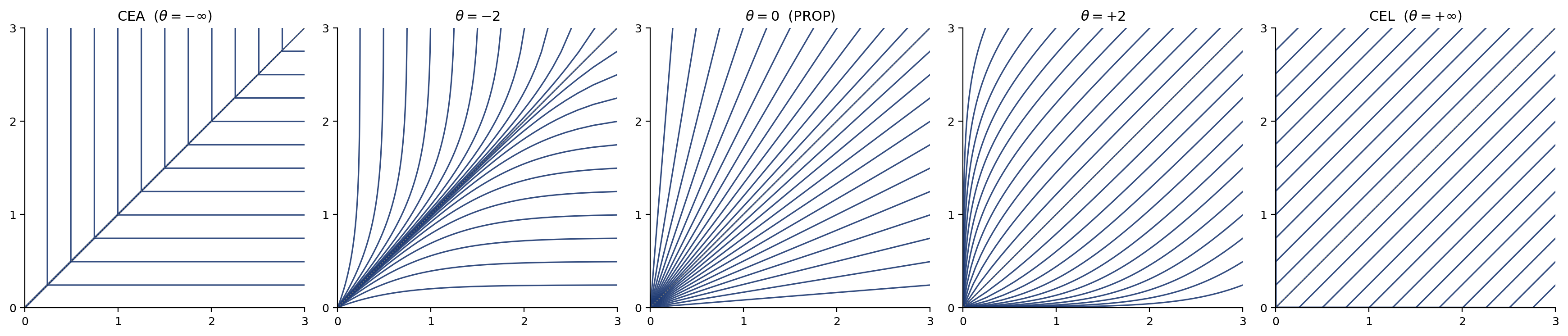}
\caption{Award trees of the family, ordered by $\theta$. Each displayed
curve is the award path of one claims vector, traced as the endowment runs
from $0$ to $\norm{c}$; the terminal claims vectors shown are on
the boundary of the window. In the full tree, the award path for $c$ is the
unique route from the origin to $c$ along the branches. For finite $\theta$,
distinct branches meet only at the origin; for $\CEA$ and $\CEL$, distinct
award paths share nontrivial segments. At $\theta=0$ the branches are rays.
Along every non-diagonal branch, the ratio of the smaller to the larger
award decreases as one moves outward when $\theta<0$, and increases when
$\theta>0$; equivalently, the branches bend away from the diagonal for
$\theta<0$ and towards it for $\theta>0$.}\label{fig:trees}
\end{figure}
The proof occupies Section~\ref{sec:mainproof}, drawing on the preliminaries collected in Section~\ref{sec:prelim}.

\begin{remark}[Role of the population bound]\label{rem:popbound}
The assumption $|N^\ast|\ge3$ is used both in the continuity argument and in the boundary case. With only two potential claimants, the composition axioms allow a larger class of difference rules; see \citet{Chambers2006}.
\end{remark}

\begin{remark}[The parameter cannot depend on the pair]\label{rem:samepair}
Suppose that for each pair $\{i,j\}\subseteq N^*$ the corresponding
two-claimant rule is $r^{\theta^{ij}}$. ETE and CONS$^*$ imply
anonymity \citep[Lemma~3]{ChambersThomson2002}. Hence all two-claimant
rules coincide up to relabelling, and Proposition~\ref{prop:lorenz} implies
that $\theta^{ij}=\theta$ for every pair. The proof of  Lemma~3 in \cite{ChambersThomson2002} requires five potential claimants. When $|N^*|\in\{3,4\}$,
the same conclusion follows from the direct three-claimant argument used
in the proof of continuity.
\end{remark}

\subsection{Preliminaries}\label{sec:prelim}
We collect four preliminary facts. First we show that $\{\textup{ETE}, \textup{CONS}, \textup{CD},\textup{CU}\}$ imply CONT. Second, CONT together with ETE and CONS$^\ast$ (which follows from CONS together with CD (or CU)) allows us to invoke the classical result of \cite{Young1987} (or its finite-population version in \cite{DietzenbacherTamuraThomson2024}), to restrict attention to rules admitting a Young representation. This representation supplies the gain and loss maps used throughout the proof. Third, the full axiom system $\{\textup{ETE}, \textup{CONS}, \textup{CD},\textup{CU}\}$ is closed under duality; the family $r^{\theta}$ is self-dual under the involution $\theta\leftrightarrow-\theta$. This allows us to use symmetry arguments at several crucial points. Fourth, the composition axioms give semigroup structures on the gain-map and loss-map families, respectively.

\subsubsection{Continuity from the remaining axioms}\label{sec:auto-cont}

\begin{proposition}[Implied continuity]\label{prop:auto-cont}
Let $|N^\ast|\ge3$. Every rule satisfying \textup{ETE}, \textup{CONS}, \textup{CD} and \textup{CU} also satisfies \textup{CONT}. \end{proposition}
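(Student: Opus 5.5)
The proof proceeds in three stages: continuity in the endowment for fixed claims, a reduction to two-claimant problems, and continuity in the claims.

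\emph{Step 1: continuity and monotonicity in $E$.} Fix $N$ and $c$. For $E\le E'$, CU gives $r(c,E')=r(c,E)+r(c-r(c,E),E'-E)$. The second term is nonnegative, so $E\mapsto r(c,E)$ is coordinatewise nondecreasing (resource monotonicity). It is also $1$-Lipschitz in each coordinate, since the coordinates are nonnegative and sum to $E'-E$. Hence $r(c,\cdot)$ is continuous in $E$ with no further work. By CD, the award path $\{r(c,E):0\le E\le\norm{c}\}$ is a monotone continuous curve from $0$ to $c$. The award vectors along it are exactly the claims vectors $x\le c$ whose problems are compatible with $c$: $r(x,\cdot)$ traces the initial segment of the path of $c$.

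\emph{Step 2: anonymity and reduction to two claimants.} Next I show that the two-claimant rules on different pairs coincide up to relabelling. This is where $|N^\ast|\ge3$ enters, and it is the ``direct three-claimant argument'' cited in Remark~\ref{rem:popbound}. Take $i,j,k$ with $c_i=c_k$. Apply CONS to the three-claimant problem $(c_i,c_j,c_k)$ and use ETE between $i$ and $k$. Then the pair $\{i,j\}$ and the pair $\{k,j\}$ receive the same awards at every endowment along a common path. Since by Step 1 every two-claimant endowment is attained along the three-claimant path, the rule on $\{i,j\}$ equals the rule on $\{k,j\}$ after relabelling.

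Given this, I reduce to two claimants. Define $f(a,b;t)$ as the award of the $a$-claimant in the two-claimant problem with claims $(a,b)$ and endowment $t$. By CONS, $r(N,c,E)$ is the unique vector $x$ with $\sum_i x_i=E$ such that each pair is in two-claimant equilibrium. Uniqueness holds because $r$ is a single-valued rule, and for the limiting argument I will use the monotonicity from Step 1. It therefore suffices to prove joint continuity of the two-claimant rule in $(a,b,t)$. After that, a standard compactness argument finishes the general case: take any limit point $x^\ast$ of $r(N,c^n,E^n)$. It satisfies the pairwise conditions and the budget constraint in the limit, so it is a consistent allocation for $(c,E)$. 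It must then equal $r(N,c,E)$, because two distinct consistent allocations with the same total would force some pair to be allocated differently at the same pair-endowment, which contradicts single-valuedness.

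\emph{Step 3: continuity in the claims (the main obstacle).} Continuity in $t$ alone does not give joint continuity. The difficulty is to control how $f(a,b;t)$ moves when the claims move. I would use both composition axioms to obtain monotonicity in the claims. By CD, $r((a,b),t)=r(r((a',b),t'),t)$ whenever $(a,b)$ lies on the path of $(a',b)$. By CU, the loss-side analogue holds, and the CU path from $(a,b)$ reaches $(a',b')$. The award paths from $0$ form a tree (paths either nest or separate, by CD) that is ordered, non-crossing in the plane, and continuous by Step 1. The same holds for the loss paths, by CU.

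I then argue that the two-claimant rule has monotone award paths. Increasing $a$ shifts the path in a sign-definite direction, because two distinct paths cannot cross without violating CD at the crossing point. A monotone family of continuous curves that fill the quadrant, with each curve parametrised $1$-Lipschitz by its endowment, is jointly continuous unless there is a ``gap'' where a limit of paths differs from the path of the limit claims. At such a gap, I would use the dual tree from CU, whose curves pass through the corner points $c$ and are also non-crossing. I expect to show that a gap in the award tree forces a crossing between an award path and a loss path that violates either CD or CU.

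Making this gap argument rigorous, in particular near the boundary where one claimant is fully paid (the CEA and CEL-like behaviour), is the step I expect to be hardest.
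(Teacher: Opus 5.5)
There is a genuine gap, and it sits in Step 3, which carries the actual content of the proposition. Your Steps 1 and 2 agree with the paper: monotone, 1-Lipschitz dependence on the endowment, and a common anonymous bilateral rule $R$ obtained from a three-agent problem with two equal claims. For Step 2 you also need the mirror replacement $c_j=c_k$ to connect all ordered pairs.

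Step 3 never shows that $R$ is continuous in the claims, and the sketch gives no route to it. Non-crossing of award paths under CD gives at most a weak ordering: the paths of $(a,b)$ and $(a',b)$ may share long initial segments, as under CEA and CEL. The claimed sign-definite shift of the paths as $a$ increases is not derived. The assertion that a ``gap'' forces a forbidden crossing between an award path and a loss path is exactly what needs proof, and no mechanism is offered.

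More fundamentally, your Step 3 uses only the two-claimant rule, where CONS is vacuous. You would therefore be proving that every two-claimant rule satisfying ETE, CD and CU is continuous. Nothing in the plan establishes this, and the paper does not need it: the third potential agent is used exactly here (cf.\ Remark~\ref{rem:popbound}).

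The missing idea is to compare $(a,b)$ with $(a+\delta,b)$ inside one three-agent problem with claims $(a,b,a+\delta)$ for agents $(i,j,k)$.
\begin{itemize}
\item By continuity in the endowment, choose the three-agent endowment so that $x_i+x_j=E$.
\item Apply order preservation to the reduced problem on $\{i,k\}$: in awards (from ETE and CD) and in losses (from ETE and CU). This gives $0\le x_k-x_i\le\delta$. This is where both composition axioms do their work; your proposal never invokes order preservation.
\item By CONS and Step 2, $(x_i,x_j)=R((a,b),E)$ and $(x_k,x_j)=R((a+\delta,b),E')$ with $E'=x_k+x_j\in[E,E+\delta]$.
\item Endowment monotonicity moves the latter to $R((a+\delta,b),E)$ at $\ell_1$-cost $E'-E$.
\end{itemize}
This yields $\norm{R((a+\delta,b),E)-R((a,b),E)}_\infty\le\delta$. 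Anonymity covers changes in the second claim, and together with Step 1 this makes $R$ jointly Lipschitz.

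A smaller point concerns your reduction to two claimants. Uniqueness of a pairwise-consistent vector with total $E$ does not follow from single-valuedness, because two such vectors $x\ne y$ generally give a pair different pair-endowments. It does follow from endowment monotonicity of $R$. Take $i,j$ with $x_i>y_i$ and $x_j<y_j$. If $x_i+x_j\ge y_i+y_j$, monotonicity forces $x_j\ge y_j$; otherwise it forces $x_i\le y_i$, a contradiction either way. This is the converse-consistency step (Chun) that the paper uses. With it and a jointly continuous $R$, your compactness argument goes through.
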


The proof is in Appendix~\ref{app:cont}. Its first step is the endowment-monotonicity consequence of either composition axiom (Lemma~\ref{lem:resmon}); the remaining steps use both composition axioms, ETE, CONS and the availability of a third potential claimant.

\subsubsection{Young representation and gain and loss maps}

A rule $r$ admits a {\bf Young representation}, if there is $f : \R_+ \times [0,1] \to \R_+$, jointly continuous, weakly increasing in $\lambda$, with $f(c_0,0) = 0$ and $f(c_0,1) = c_0$ for every $c_0 \in \R_+$, such that for every problem $(N,c,E)$ with $|N|\ge 2$,
\[ r_i(N, c, E) = f(c_i, \lambda) \qquad \text{for any } \lambda \text{ solving } \textstyle\sum_{j \in N} f(c_j,\lambda) = E. \]
A solving $\lambda$ exists for every feasible endowment, the budget map $\lambda \mapsto \sum_{j\in N} f(c_j,\lambda)$ being continuous from $0$ to $\norm{c}$, and the awards do not depend on the choice of solving $\lambda$. We call a rule admitting a Young representation a {\bf Young rule}. By construction, Young's rules satisfy ETE and CONS$^\ast$.

Young's rules can be described through gain/loss maps:
For a Young rule with representation $f$, let $T_\lambda : \R_+ \to \R_+$, $T_\lambda(c_0) := f(c_0,\lambda)$ be the {\bf gain map}, and $L_\lambda := \id - T_\lambda$ be the {\bf loss map} for parameter $\lambda$. Write $\T := \{T_\lambda : \lambda\in[0,1]\}$ for the set of all gain maps and $\Lf := \{L_\lambda : \lambda \in [0,1]\}$ for the set of all loss maps for the rule. Gain/loss maps are continuous and $\lambda\mapsto T_\lambda (c_0)$ is weakly increasing and continuous.

Note that the zero and the identity maps $T_0 = 0$, $T_1 = \id$ are gain maps (and loss maps) and that $0 \le T_\lambda, L_\lambda  \le \id$ for each $\lambda$. We call a gain map $T \in \T$  {\bf proper} if it is neither the zero map $T \neq 0$ nor the identity map $T \neq \id$; likewise for $\Lf$. Note that $T_\lambda$ is proper iff $L_\lambda$ is proper. Finally, we call a Young rule {\bf strictly interior} if $0 < T(c_0) < c_0$ for every proper $T \in \T$ and every $c_0 > 0$: no claimant with a positive claim is ever fully paid or left with nothing, except at the two extreme endowments. The classification proceeds by distinguishing the strictly interior rules (Section~\ref{sec:interior}) from the remaining ones (Section~\ref{sec:boundary}).

The following lemma is straightforward:
\begin{lemma}[Realisability]\label{lem:real}
For a Young rule with representation $f$, any $N$ with $|N|\ge 2$, any $c \in \R^N_+$ with $\norm{c}>0$, and any $\lambda \in [0,1]$, the problem $(N, c, E)$ with $E := \sum_j f(c_j,\lambda)$ has awards $\big(f(c_i,\lambda)\big)_{i\in N} = \big(T_\lambda(c_i)\big)_{i \in N}$.
\end{lemma}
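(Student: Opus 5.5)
The statement to prove is the Realisability lemma (Lemma~\ref{lem:real}). It follows directly from the definition of a Young representation. The only subtlety is that the representation is stated as ``for any $\lambda$ solving'' the budget equation, so we must check two things. First, the given $\lambda$ is itself a solving value. Second, the awards are then read off at that $\lambda$.

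Fix a Young rule with representation $f$, a population $N$ with $|N|\ge2$, claims $c\in\R^N_+$ with $\norm{c}>0$, and $\lambda\in[0,1]$. Set $E:=\sum_{j\in N} f(c_j,\lambda)$. We first check that $(N,c,E)$ is a claims problem, i.e.\ that $0\le E\le\norm{c}$. The map $f$ is weakly increasing in its second argument, with $f(c_j,0)=0$ and $f(c_j,1)=c_j$. Hence $0\le f(c_j,\lambda)\le c_j$ for each $j$, and summing over $j$ gives $0\le E\le \norm{c}$. So the problem is feasible.

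Next, by construction $\lambda$ solves $\sum_{j\in N} f(c_j,\lambda)=E$. The definition of a Young representation says that $r_i(N,c,E)=f(c_i,\lambda')$ for any $\lambda'$ solving this equation. Taking $\lambda'=\lambda$ gives $r_i(N,c,E)=f(c_i,\lambda)$, which equals $T_\lambda(c_i)$ by the definition of the gain map $T_\lambda$.

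There is no real obstacle here. The one point needing care is that the definition asserts the awards are independent of which solving value is chosen. We use that assertion as stated, rather than proving afresh that $f(c_i,\cdot)$ is constant on the solution set of the budget equation.
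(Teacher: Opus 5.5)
Your proof is correct and is the direct unpacking of the definition that the paper intends; the paper calls the lemma straightforward and gives no written proof. You check feasibility $0\le E\le\norm{c}$ from the monotonicity and boundary values of $f$, then read off the awards at the given solving $\lambda$. The independence of the choice of solving value, which you take as given, is part of the definition, and it also follows in one line from weak monotonicity of $f$ in $\lambda$.
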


The family $r^{\theta}$ is part of the Young family with $f(c_0,\lambda)=\tfrac{1}{\theta}\log(\lambda(e^{\theta c_0}-1)+1)$. See Figure~\ref{fig:gain} for the gain maps of several examples from the $r^{\theta}$ family.

\begin{figure}[t]
\centering
\begin{tikzpicture}
\begin{groupplot}[group style={group size=5 by 1, horizontal sep=0.55cm}, gainpanel]

\nextgroupplot[title={$\mathrm{CEA}=r^{-\infty}$}, ytick={1,2}, yticklabels={$1$,$2$},
               ylabel={award $T(c)$}, ylabel style={font=\scriptsize}]
  \addplot[gray!55, densely dotted] coordinates {(0,0) (2,2)};        
  \addplot[s1] coordinates {(0,0) (0.4,0.4) (2,0.4)};
  \addplot[s2] coordinates {(0,0) (0.8,0.8) (2,0.8)};
  \addplot[s3] coordinates {(0,0) (1.2,1.2) (2,1.2)};
  \addplot[s4] coordinates {(0,0) (1.6,1.6) (2,1.6)};

\nextgroupplot[title={$r^{-1}$}]
  \addplot[gray!55, densely dotted] coordinates {(0,0) (2,2)};
  \addplot[s1]{-ln(1 - 0.381282*(1-exp(-x)))};
  \addplot[s2]{-ln(1 - 0.636860*(1-exp(-x)))};
  \addplot[s3]{-ln(1 - 0.808187*(1-exp(-x)))};
  \addplot[s4]{-ln(1 - 0.923022*(1-exp(-x)))};

\nextgroupplot[title={$\mathrm{PROP}=r^{0}$}]
  \addplot[gray!55, densely dotted] coordinates {(0,0) (2,2)};
  \addplot[s1]{0.2*x};
  \addplot[s2]{0.4*x};
  \addplot[s3]{0.6*x};
  \addplot[s4]{0.8*x};

\nextgroupplot[title={$r^{+1}$}]
  \addplot[gray!55, densely dotted] coordinates {(0,0) (2,2)};
  \addplot[s1]{ln(1 + 0.076979*(exp(x)-1))};
  \addplot[s2]{ln(1 + 0.191818*(exp(x)-1))};
  \addplot[s3]{ln(1 + 0.363137*(exp(x)-1))};
  \addplot[s4]{ln(1 + 0.618719*(exp(x)-1))};

\nextgroupplot[title={$\mathrm{CEL}=r^{+\infty}$}]
  \addplot[gray!55, densely dotted] coordinates {(0,0) (2,2)};
  \addplot[s1] coordinates {(0,0) (1.6,0) (2,0.4)};
  \addplot[s2] coordinates {(0,0) (1.2,0) (2,0.8)};
  \addplot[s3] coordinates {(0,0) (0.8,0) (2,1.2)};
  \addplot[s4] coordinates {(0,0) (0.4,0) (2,1.6)};

\end{groupplot}

\node[font=\scriptsize, anchor=north]
  at ($(group c3r1.south)+(0,-0.10cm)$) {claim $c$};
\end{tikzpicture}
\caption{Gain-map families, ordered by $\theta$ (equal-sacrifice for $u_\theta=\log\varphi_\theta$). Each curve is the member with award $\sigma$ at the reference claim $c=2$, for $\sigma\in\{0.4,0.8,1.2,1.6\}$; the dotted line is the identity $T(c)=c$. }\label{fig:gain}
\end{figure}

The Young representation for rules satisfying \textup{ETE}, \textup{CONS} and \textup{CONT} is available in two ways. When the potential population $N^\ast$ is infinite it rests on \cite{Young1987}'s theorem; when $N^\ast$ is finite with $|N^\ast|\ge 3$, Young's variable-population construction no longer applies verbatim, and either composition axiom supplies the missing ingredient through the fixed-population characterisation of  \cite{DietzenbacherTamuraThomson2024}, as we show in Appendix~\ref{app:cont}.

One caveat applies to both cases: the axiom we impose is CONS, whereas Young's theorem is stated for \textup{CONS}$^\ast$. The gap is closed by two elementary population-free lemmas recorded in Appendix~\ref{app:foundations}: either composition axiom implies endowment monotonicity (Lemma~\ref{lem:resmon}), and endowment monotonicity upgrades \textup{CONS} to \textup{CONS}$^\ast$ (Lemma~\ref{lem:fullcons}). 

\begin{proposition}[Young representation (Y)]\label{thm:Y}
Let $|N^\ast|\ge 3$ and let $r$ satisfy \textup{ETE}, \textup{CONT}, \textup{CONS} and at least one of \textup{CD}, \textup{CU}. Then $r$ admits a Young representation.
\end{proposition}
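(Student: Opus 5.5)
Our route to Proposition~\ref{thm:Y} is to reduce it to known characterisations of Young's rules. The work goes into supplying the hypotheses those theorems need: full consistency \textup{CONS}$^\ast$, and, for a finite population, whatever replaces Young's variable-population construction. The first step is population-free. Either composition axiom implies endowment monotonicity, i.e.\ $E\le E'$ gives $r(c,E)\le r(c,E')$ coordinatewise (Lemma~\ref{lem:resmon}). Under \textup{CD}, $r(c,E)=r(r(c,E'),E)\le r(c,E')$ because awards are bounded by claims. Under \textup{CU}, $r(c,E')-r(c,E)$ is an awards vector and hence nonnegative. Endowment monotonicity then upgrades \textup{CONS} to \textup{CONS}$^\ast$ (Lemma~\ref{lem:fullcons}). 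Take a subgroup $A$ and compare $x_A$ with $y:=r(A,c_A,\sum_A x_i)$. If they differ, there are $i,j\in A$ with $y_i>x_i$ and $y_j<x_j$. Applying bilateral consistency to the pair $\{i,j\}$, both in the original problem and in the subgroup problem, gives two two-claimant problems with the same claims whose endowments are ordered. Monotonicity then forces a contradiction in one coordinate.

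If $N^\ast$ is infinite, we now have \textup{ETE}, \textup{CONT} and \textup{CONS}$^\ast$ on a variable population. \citet{Young1987}'s theorem applies directly and yields a continuous $f$, weakly increasing in $\lambda$ with $f(\cdot,0)=0$ and $f(\cdot,1)=\id$, representing $r$ on every problem with $|N|\ge2$. One check is needed: Young's normalisation of the parameter range can be rescaled to $[0,1]$ by a monotone reparametrisation, which preserves joint continuity.

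The finite case, $3\le|N^\ast|<\infty$, is the main obstacle. Young's argument builds $f$ by embedding arbitrarily many claims into one problem, which a bounded population forbids. Here we would invoke the fixed-population characterisation of \citet{DietzenbacherTamuraThomson2024}. For fixed $N^\ast$ it gives a Young representation for rules satisfying \textup{ETE}, \textup{CONT}, \textup{CONS}$^\ast$ together with a regularity condition that, as we understand it, amounts to endowment monotonicity or a compatible monotone path structure. The task is to verify that condition from \textup{CD} or \textup{CU}, using the monotonicity already established. We then need that the representation obtained for the grand population restricts correctly to every subpopulation $N\subseteq N^\ast$ with $|N|\ge2$. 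This follows from \textup{CONS}$^\ast$ together with realisability: every subproblem with $|N|\ge2$ embeds into a problem on $N^\ast$ by adding claimants, e.g.\ with zero claims, whose awards are then zero, so the same $f$ serves. The last point to check is that adding zero claimants leaves the subproblem's awards unchanged under consistency. We expect the care to be needed in matching the exact hypotheses of the cited fixed-population theorem, rather than in any computation.
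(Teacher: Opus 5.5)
Your proposal is correct and follows the paper's proof: composition gives endowment monotonicity, which upgrades \textup{CONS} to \textup{CONS}$^\ast$ (your pairwise argument is the content of the cited lemma of Chun), then Young (1987) for infinite $N^\ast$, and for finite $N^\ast$ the Dietzenbacher--Tamura--Thomson theorem applied to $S=r(N^\ast,\cdot,\cdot)$, transferred to subpopulations by padding with zero claims and using \textup{CONS}$^\ast$. One precision is missing: that theorem's hypotheses are \textup{ETE}, endowment monotonicity, \textup{CONT} and \emph{partial-implementation invariance} (the complement's claims are set to zero and the subgroup keeps its allotted total), not \textup{CONS}$^\ast$ itself, and you derive this property by applying \textup{CONS}$^\ast$ once to the original problem and once to the zero-padded one.
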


\begin{proof}
By the population-free Lemmas~\ref{lem:resmon} and \ref{lem:fullcons} (Appendix~\ref{app:foundations}), the composition axiom upgrades \textup{CONS} to \textup{CONS}$^\ast$. If $N^\ast$ is infinite, the result is then \cite{Young1987} (Theorem~1). The finite-population case, in which the composition axiom is essential beyond the consistency upgrade, is proved in Appendix~\ref{app:cont}.
\end{proof}
\subsubsection{Duality}
For a rule $r$ the {\bf dual rule} is
$r^\ast(N, c,E) := c - r(N, c, \norm{c} - E)$, that is, the rule that for each problem divides the deficit $\norm{c}-E$ as $r$ divides the endowment.

It is straightforward to see that the dual rule of $r^{\theta}$ is $r^{-\theta}$ so that the family is closed under duality  (in particular $\PROP$ is the only self-dual rule in that family and we have the well-known duality between CEL and CEA):
\begin{lemma}[Duality identity]\label{lem:dualident}
For $\theta \in \R$ and $\lambda \in [0,1]$: if $\varphi_\theta(y) = \lambda\,\varphi_\theta(c)$ with $0 \le y \le c$, then
\begin{equation}\label{eq:dualident}
\varphi_{-\theta}(c - y) = (1-\lambda)\,\varphi_{-\theta}(c).
\end{equation}
Consequently $(r^\theta)^\ast = r^{-\theta}$, and the loss maps of $r^\theta$ are
\begin{equation}\label{eq:lossmaps}
L_\lambda(c) = c - \varphi_\theta^{-1}\big(\lambda \varphi_\theta(c)\big) = \varphi_{-\theta}^{-1}\big((1-\lambda)\,\varphi_{-\theta}(c)\big).
\end{equation}
\end{lemma}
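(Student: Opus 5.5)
The plan is to verify \eqref{eq:dualident} by direct computation, using the identity $1+\theta\varphi_\theta(x)=e^{\theta x}$ and its counterpart $1-\theta\varphi_{-\theta}(x)=e^{-\theta x}$. The case $\theta=0$ is trivial: $y=\lambda c$ gives $c-y=(1-\lambda)c$. For $\theta\neq0$, the hypothesis $\varphi_\theta(y)=\lambda\varphi_\theta(c)$ reads $e^{\theta y}-1=\lambda(e^{\theta c}-1)$. Compute
\[
\varphi_{-\theta}(c-y)=\frac{1-e^{-\theta(c-y)}}{\theta}=e^{-\theta c}\,\frac{e^{\theta c}-e^{\theta y}}{\theta}.
\]
Substituting $e^{\theta y}=1+\lambda(e^{\theta c}-1)$ gives $e^{\theta c}-e^{\theta y}=(1-\lambda)(e^{\theta c}-1)$. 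Hence
\[
\varphi_{-\theta}(c-y)=(1-\lambda)\,\frac{1-e^{-\theta c}}{\theta}=(1-\lambda)\varphi_{-\theta}(c).
\]
For $\theta<0$ there is one range issue to address. We need $\lambda\varphi_\theta(c)$ to lie in the range of $\varphi_\theta$, which the paper already noted, so that $y$ exists and is unique with $0\le y\le c$. On the other side $\varphi_{-\theta}$ is a bijection of $[0,\infty)$, so both sides are well defined.

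For the loss-map formula \eqref{eq:lossmaps}, take $y=T_\lambda(c)=\varphi_\theta^{-1}(\lambda\varphi_\theta(c))$. Then $L_\lambda(c)=c-y$, and by the identity just shown $L_\lambda(c)=\varphi_{-\theta}^{-1}((1-\lambda)\varphi_{-\theta}(c))$, using that $\varphi_{-\theta}$ is injective.

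For $(r^\theta)^\ast=r^{-\theta}$, fix a problem $(N,c,\norm{c}-E)$ and let $\lambda$ solve the $r^\theta$ budget equation there. The awards are $y_i=T_\lambda(c_i)$, so
\[
(r^\theta)^\ast_i(c,E)=c_i-y_i=\varphi_{-\theta}^{-1}\big((1-\lambda)\varphi_{-\theta}(c_i)\big),
\]
which is the gain map of $r^{-\theta}$ with parameter $\mu=1-\lambda$. These amounts sum to $\norm{c}-(\norm{c}-E)=E$, so $\mu$ solves the $r^{-\theta}$ budget equation at $(N,c,E)$. By the well-definedness of \eqref{eq:rtheta}, where the budget map is strictly increasing and the solution therefore unique, this gives $r^{-\theta}(c,E)$.

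No step is a real obstacle. The only care needed is the sign and range bookkeeping for $\theta<0$, and the degenerate case $c=0$, where all awards are zero and the claim is trivial. The endpoint duality $\CEA^\ast=\CEL$ is well known, and it is not part of the lemma, which is stated for $\theta\in\R$.
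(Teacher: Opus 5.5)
Your proposal is correct and follows essentially the paper's own route: the same direct exponential computation from $e^{\theta y}=1+\lambda(e^{\theta c}-1)$, then reading the losses of $r^\theta$ at parameter $\lambda$ as the $r^{-\theta}$ awards at $1-\lambda$, which sum to $E$. Your added remarks on range bookkeeping and on uniqueness of the budget-solving parameter are sound and only make explicit what the paper leaves implicit.
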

\begin{proof}
For $\theta\ne 0$ (the case $\theta = 0$ is trivial): $e^{\theta y} = 1 + \lambda(e^{\theta c}-1)$, so
\[ \varphi_{-\theta}(c-y) = \frac{1 - e^{-\theta(c-y)}}{\theta} = \frac{e^{\theta c} - e^{\theta y}}{\theta\, e^{\theta c}} = \frac{(1-\lambda)(e^{\theta c}-1)}{\theta\, e^{\theta c}} = (1-\lambda)\,\varphi_{-\theta}(c). \]
For $(r^\theta)^\ast = r^{-\theta}$: let $y := r^\theta(c, \norm{c}-E)$ with parameter $\lambda$. By \eqref{eq:dualident} the losses $c_i - y_i$ satisfy $\varphi_{-\theta}(c_i-y_i) = (1-\lambda)\varphi_{-\theta}(c_i)$ and sum to $E$; so $c - y = r^{-\theta}(c, E)$, i.e.\ $(r^\theta)^\ast(c,E) = r^{-\theta}(c,E)$. Equation \eqref{eq:lossmaps} is \eqref{eq:dualident} read as a statement about maps.
\end{proof}
Letting $\theta\to\pm\infty$ recovers CEA and CEL.
\begin{proposition}[Endpoints as limits; duality at the endpoints]\label{prop:endpoints}
Pointwise in every problem: $r^\theta \to \CEL$ as $\theta \to +\infty$ and $r^\theta \to \CEA$ as $\theta \to -\infty$. Moreover
\[
\CEA^\ast = \CEL,
\]
so the duality $(r^\theta)^\ast = r^{-\theta}$ extends to $\theta \in [-\infty,+\infty]$.
\end{proposition}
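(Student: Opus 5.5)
The plan has three parts: the limit $\theta\to+\infty$, the limit $\theta\to-\infty$, and the duality $\CEA^\ast=\CEL$, whose proof is purely algebraic.

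\emph{Limit $\theta\to+\infty$.} Fix a problem $(N,c,E)$ with $0<E<\norm{c}$; the extreme endowments are trivial because every rule gives $0$ or $c$ there. For $\theta>0$ the gain map is $T(c_i)=\frac1\theta\log\big(1+\lambda_\theta(e^{\theta c_i}-1)\big)$. Reparametrise by $s_\theta:=-\frac1\theta\log\lambda_\theta\ge0$. Pointwise in $s\ge 0$,
\[
\tfrac1\theta\log\big(1+e^{-\theta s}(e^{\theta c_i}-1)\big)\to (c_i-s)_+ \qquad (\theta\to\infty).
\]
This follows from $\max(a,b)\le \frac1\theta\log(e^{\theta a}+e^{\theta b})\le \max(a,b)+\frac{\log 2}{\theta}$, applied after writing $1+e^{-\theta s}(e^{\theta c_i}-1)$ as $e^{\theta(c_i-s)}+(1-e^{-\theta s})$. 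The second summand lies in $[0,1]$, so the convergence is in fact uniform in $s\ge0$.

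Let $g_\theta(s):=\sum_i T^{\theta}_s(c_i)$ and $g(s):=\sum_i (c_i-s)_+$. Then $g_\theta\to g$ uniformly on $[0,\infty)$. Since $E<\norm{c}$, the CEL level $s^\ast$ solving $g(s^\ast)=E$ is unique, because $g$ is strictly decreasing wherever it is positive. Each $g_\theta$ is continuous and decreasing, so a standard argument shows that the solution $s_\theta$ of $g_\theta(s_\theta)=E$ converges to $s^\ast$. Uniform convergence in $s$ then gives $r^\theta_i(c,E)=T^\theta_{s_\theta}(c_i)\to(c_i-s^\ast)_+=\CEL_i(c,E)$.

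\emph{Limit $\theta\to-\infty$.} I would transfer this case via duality rather than redo the estimate. By Lemma~\ref{lem:dualident}, $r^{-\theta}(c,E)=c-r^{\theta}(c,\norm{c}-E)$. As $\theta\to+\infty$ the right-hand side tends to $c-\CEL(c,\norm{c}-E)=\CEL^\ast(c,E)$. It therefore suffices to show $\CEL^\ast=\CEA$, which is equivalent to $\CEA^\ast=\CEL$ because duality is an involution.

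\emph{Duality $\CEA^\ast=\CEL$.} Let $y=\CEA(c,\norm{c}-E)$, so $y_i=\min(c_i,t)$ with $\sum_i\min(c_i,t)=\norm{c}-E$. Then $c_i-y_i=(c_i-t)_+$, and these losses sum to $E$. Hence $c-y=\CEL(c,E)$ with $s=t$, which is the claim. Combining this with the two limits extends the duality $(r^\theta)^\ast=r^{-\theta}$ to $\theta\in[-\infty,+\infty]$.

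The main obstacle is the convergence of the implicitly defined level $s_\theta$, that is, making sure the rule's parameter does not escape to infinity or oscillate. Uniform convergence of $g_\theta$ together with strict monotonicity of $g$ near $s^\ast$ handles this. Concretely, for any $\varepsilon>0$ we have $g(s^\ast-\varepsilon)>E>g(s^\ast+\varepsilon)$, and these strict inequalities carry over to $g_\theta$ once $\theta$ is large, which confines $s_\theta$ to $(s^\ast-\varepsilon,s^\ast+\varepsilon)$.
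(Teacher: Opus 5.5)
Your proof is correct and follows the paper's route: the same reparametrisation $\lambda=e^{-\theta s}$, the same two-sided bound $(c_i-s)_+\le y_i\le (c_i-s)_++\tfrac{\log 2}{\theta}$, convergence of the implicit level $s_\theta$ to the $\CEL$ level, and the case $\theta\to-\infty$ obtained from Lemma~\ref{lem:dualident} together with endpoint duality. The differences are cosmetic: you locate $s_\theta$ between $s^\ast-\varepsilon$ and $s^\ast+\varepsilon$ using the monotonicity of $g_\theta$, where the paper uses a subsequence argument, and you verify $\CEA^\ast=\CEL$ directly, where the paper verifies $\CEL^\ast=\CEA$; the two are equivalent because duality is an involution.
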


\begin{proof}
\emph{Duality at the endpoints.} If $\sum_i (c_i - s)_+ = \norm{c} - E$, then $c_i - (c_i - s)_+ = \min(c_i, s)$ and $\sum_i \min(c_i,s) = E$; hence $\CEL^\ast(c,E) = c - \CEL(c,\norm{c}-E) = \CEA(c,E)$.

\emph{Limit $\theta \to +\infty$.} Fix $(c,E)$ with $0 < E < \norm{c}$ and $n := |N|$ (the degenerate endowments are trivial). For $\theta > 0$ the solving parameter satisfies $\lambda \in (0,1)$; write $\lambda = e^{-\theta s}$ with $s = s(\theta) > 0$. Solving $e^{\theta y_i} = 1 + \lambda(e^{\theta c_i}-1)$ exactly:
\[
y_i(\theta) = c_i - s + \tfrac{1}{\theta}\log\!\big(1 + (e^{\theta s}-1)\,e^{-\theta c_i}\big).
\]
The logarithm is $\ge 0$, so $y_i \ge c_i - s$; and $y_i \ge 0$ always; so $y_i \ge (c_i-s)_+$. For the upper bound, $1 + (e^{\theta s}-1)e^{-\theta c_i} \le 1 + e^{\theta(s-c_i)} \le 2\max\big(1, e^{\theta(s-c_i)}\big)$, whence
\[ y_i \;\le\; c_i - s + (s-c_i)_+ + \tfrac{\log 2}{\theta} \;=\; (c_i - s)_+ + \tfrac{\log 2}{\theta}. \]
Summing over $i$ and writing $\zeta(s) := \sum_i (c_i - s)_+$:
\[ \zeta\big(s(\theta)\big) \;\le\; E \;\le\; \zeta\big(s(\theta)\big) + \tfrac{n\log 2}{\theta}. \]
The function $\zeta$ is continuous and strictly decreasing on $[0,\max_i c_i]$ from $\norm{c}$ to $0$, so there is a unique $s_E\in(0,\max_i c_i)$ with $\zeta(s_E)=E$, the $\CEL$ parameter. For all sufficiently large $\theta$, the display gives $\zeta(s(\theta))>0$, hence $s(\theta)\in[0,\max_i c_i]$. If $s(\theta)$ did not converge to $s_E$, compactness would give a subsequence converging to some $s\ne s_E$; the display would imply $\zeta(s)=E$, a contradiction. Thus $s(\theta)\to s_E$, and the two-sided bound on $y_i$ gives $y_i(\theta)\to(c_i-s_E)_+=\CEL_i(c,E)$.

\emph{Limit $\theta \to -\infty$.} By Lemma~\ref{lem:dualident} and the case just proved,
\[ r^\theta(c,E) = c - r^{-\theta}\big(c, \norm{c}-E\big) \;\longrightarrow\; c - \CEL\big(c,\norm{c}-E\big) = \CEA(c,E) \]
using $\CEA^\ast = \CEL$, established above.
\end{proof}

As a consequence of our main theorem, the self-duality of our family will imply in particular the self-duality of the axiom system we choose. However, it will be important to establish this result independently, as it will allow us to use symmetry arguments in many places of the proof.
\begin{lemma}[Self-duality of the axiom system]\label{lem:duality}
$r^\ast$ is a rule and $(r^\ast)^\ast = r$. Each of \textup{ETE} and \textup{CONS} holds for $r$ iff it holds for $r^\ast$; and $r$ satisfies \textup{CD} iff $r^\ast$ satisfies \textup{CU} \textup{(}and vice versa\textup{)}. If $r$ is a Young rule with representation $f$, then $r^\ast$ is a Young rule with representation $\tilde f(c,\mu):=c-f(c,1-\mu)$, whose gain maps are the loss maps of $f$, proper gain maps corresponding to proper loss maps and conversely.
\end{lemma}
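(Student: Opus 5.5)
The plan is to verify each claim directly from the definition $r^\ast(N,c,E)=c-r(N,c,\norm{c}-E)$, using nothing beyond bookkeeping with the substitution $E\mapsto\norm{c}-E$.

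\emph{Rule and involution.} Since $0\le r_i(c,\norm{c}-E)\le c_i$, we get $0\le r^\ast_i\le c_i$. The sum is $\norm{c}-(\norm{c}-E)=E$, so $r^\ast$ is a rule. Applying the definition twice gives $(r^\ast)^\ast(c,E)=c-\bigl(c-r(c,E)\bigr)=r(c,E)$.

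\emph{ETE.} If $c_i=c_j$ then $r_i(c,\norm{c}-E)=r_j(c,\norm{c}-E)$, so $r^\ast_i=r^\ast_j$. The converse follows by involution.

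\emph{CONS.} Let $x=r^\ast(N,c,E)$ and $y=r(N,c,\norm{c}-E)=c-x$. Consistency of $r$ at the pair $\{i,j\}$ gives $(y_i,y_j)=r(\{i,j\},(c_i,c_j),y_i+y_j)$. Since $y_i+y_j=(c_i+c_j)-(x_i+x_j)$, this reads
\[
(x_i,x_j)=(c_i,c_j)-r\bigl(\{i,j\},(c_i,c_j),c_i+c_j-(x_i+x_j)\bigr)=r^\ast\bigl(\{i,j\},(c_i,c_j),x_i+x_j\bigr).
\]
Again the converse follows by involution.

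\emph{CD for $r$ implies CU for $r^\ast$.} This is the only step that needs care, so I expect it to be the main obstacle, if only for keeping indices straight. Take $E\le E'$, set $D=\norm{c}-E\ge D'=\norm{c}-E'$, and let $y=r(c,D)$, so that $r^\ast(c,E)=c-y$ and $c-r^\ast(c,E)=y$. Then
\[
r^\ast\bigl(y,E'-E\bigr)=y-r\bigl(y,\norm{y}-(E'-E)\bigr),
\]
where $\norm{y}=D$, so $\norm{y}-(E'-E)=D'$. By CD, $r(y,D')=r(r(c,D),D')=r(c,D')$. Hence
\[
r^\ast(c,E)+r^\ast\bigl(c-r^\ast(c,E),E'-E\bigr)=(c-y)+y-r(c,D')=r^\ast(c,E'),
\]
which is CU for $r^\ast$. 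The converse (CU for $r^\ast$ implies CD for $r$) is obtained by reading the same chain of equalities backwards, using $D'=\norm{y}-(E'-E)$ with $E'-E$ ranging over $[0,D]$. Applying the equivalence to $r^\ast$ and using $(r^\ast)^\ast=r$ gives the reverse pairing, CU for $r$ if and only if CD for $r^\ast$.

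\emph{Young representation.} Given $f$, set $\tilde f(c,\mu)=c-f(c,1-\mu)$. This is jointly continuous and weakly increasing in $\mu$, with $\tilde f(c,0)=c-c=0$ and $\tilde f(c,1)=c$, and $0\le\tilde f\le c$. If $\lambda$ solves $\sum_j f(c_j,\lambda)=\norm{c}-E$, then $\mu=1-\lambda$ solves $\sum_j\tilde f(c_j,\mu)=E$, and $r^\ast_i=c_i-f(c_i,\lambda)=\tilde f(c_i,\mu)$. Conversely, any solving $\mu$ yields a solving $\lambda=1-\mu$, so the awards are independent of the choice. The gain maps of $\tilde f$ are $\tilde T_\mu=\id-T_{1-\mu}=L_{1-\mu}$. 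Since $T$ is $0$ or $\id$ exactly when $\id-T$ is $\id$ or $0$, proper maps correspond to proper maps.
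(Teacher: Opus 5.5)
Your proposal is correct and follows the paper's proof step for step: the rule property and involution, ETE, the CONS computation via $y_i+y_j=(c_i+c_j)-(x_i+x_j)$, and the representation $\tilde f(c,\mu)=c-f(c,1-\mu)$ with $\tilde T_\mu=\id-T_{1-\mu}=L_{1-\mu}$. The one difference is that the paper simply cites \citet{Moulin2000} for the duality of CD and CU, whereas you verify it directly, using $\norm{r(c,D)}=D$ and a chain of equalities that can be read in both directions, which makes the lemma self-contained.
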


\begin{proof}
$r^\ast$ is a rule: $0 \le r_i(c, \norm{c}-E) \le c_i$ gives $0 \le r^\ast_i \le c_i$, and $\sum_i r^\ast_i = \norm{c} - (\norm{c}-E) = E$. Involutivity and ETE are immediate from the formula. For CONS, let $x^\ast = r^\ast(N,c,E)$, so $x^\ast = c - x$ with $x = r(N,c,\norm{c}-E)$. For a pair $\{i,j\}$, CONS for $r$ gives $(x_i,x_j) = r((c_i,c_j), x_i+x_j)$; since $x_i + x_j = (c_i+c_j) - (x^\ast_i + x^\ast_j)$,
\[ (x^\ast_i, x^\ast_j) = (c_i,c_j) - r\big((c_i,c_j), (c_i+c_j) - (x^\ast_i+x^\ast_j)\big) = r^\ast\big((c_i,c_j), x^\ast_i + x^\ast_j\big). \]
For CD/CU, the composition axioms are dual properties \citep{Moulin2000}: whenever a rule satisfies one, its dual satisfies the other. Involutivity turns each implication into an equivalence.

Let $r$ be a Young rule with representation $f$ and define
\[
 \tilde f(c_0,\mu) := c_0-f(c_0,1-\mu).
\]
Then $\tilde f$ is jointly continuous and weakly increasing in its second argument, with $\tilde f(c_0,0)=c_0-f(c_0,1)=0$ and $\tilde f(c_0,1)=c_0-f(c_0,0)=c_0$. Given $(c,E)$, let $\lambda$ solve $\sum_j f(c_j,\lambda)=\norm{c}-E$; then
$r^\ast_i(c,E)=c_i-f(c_i,\lambda)=\tilde f(c_i,1-\lambda)$ and
$\sum_j\tilde f(c_j,1-\lambda)=\norm{c}-(\norm{c}-E)=E$, so $1-\lambda$
solves the $\tilde f$-budget and yields the awards. Hence $r^\ast$ is a Young rule with representation $\tilde f$, and its gain maps are $\tilde T_\mu=\id-T_{1-\mu}$, which run over $\Lf$. Finally $T$ is proper iff $\id-T$ is proper.

\end{proof}

\subsubsection{Composition as semigroup closure}\label{sec:semigroup}
We say that $\T$ (that $\Lf$) is closed under composition, if for all $\lambda,\mu\in[0,1]$ there is $\nu \equiv\nu(\mu,\lambda)$ with $T_\mu \circ T_\lambda = T_\nu$ ( with $L_\mu\circ L_\lambda=L_\nu$).
The content of the composition axioms at the level of the Young representation is that the gain maps and the loss maps compose, respectively. 
\begin{lemma}[Composition]\label{lem:criterion}
Let $r$ admit a Young representation, then:
\textup{(a)} $r$ satisfies \textup{CD} iff $\T$ is closed under composition;
\textup{(b)} $r$ satisfies \textup{CU} iff $\Lf$ is closed under composition.
\end{lemma}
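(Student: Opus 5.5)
I would prove part (a) directly and obtain part (b) from it by duality. By Lemma~\ref{lem:duality}, $r$ satisfies CU iff $r^\ast$ satisfies CD. Moreover $r^\ast$ is a Young rule whose gain maps are exactly the loss maps of $r$. So (a) applied to $r^\ast$ says that $r$ satisfies CU iff $\Lf$ is closed under composition, which is (b). The work is therefore in (a), and the only tool needed is Lemma~\ref{lem:real}.

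\emph{Closure implies CD.} Fix a problem $(N,c,E)$ with $|N|\ge2$, $\norm{c}>0$, and let $0\le E'\le E$. Choose $\lambda$ with $x:=r(c,E)=(T_\lambda(c_i))_i$. Next choose $\mu$ solving $\sum_i f(x_i,\mu)=E'$; this exists because the budget map for claims $x$ runs continuously from $0$ to $\norm{x}=E$. (If $\norm{x}=0$, then $E'=0$ and the statement is trivial.) Then $r(x,E')=(T_\mu(x_i))_i=(T_\mu\circ T_\lambda(c_i))_i$.

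By closure, $T_\mu\circ T_\lambda=T_\nu$ for some $\nu$. Hence $\sum_i T_\nu(c_i)=E'$, so $\nu$ solves the budget equation for $(c,E')$. The Young representation then gives $r(c,E')=(T_\nu(c_i))_i$. This equals $r(x,E')$, which is CD. One-agent problems are trivial, since the award must equal $E$.

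\emph{CD implies closure.} Fix $\lambda,\mu$ and a claim $c_0>0$; the case $c_0=0$ is trivial. Since $|N^\ast|\ge3$, I can build a problem containing $c_0$, and the key step is to produce a single $\nu$ that works for every $c_0$ simultaneously.

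Take any $c_1>0$ and the two-agent claims $(c_0,c_1)$. By Lemma~\ref{lem:real}, at $E=T_\lambda(c_0)+T_\lambda(c_1)$ the awards are $x=(T_\lambda(c_0),T_\lambda(c_1))$. Let $E'=T_\mu(x_0)+T_\mu(x_1)$; then $r(x,E')=(T_\mu T_\lambda(c_0),T_\mu T_\lambda(c_1))$. By CD this equals $r(c,E')=(T_\nu(c_0),T_\nu(c_1))$ for some $\nu$ solving the budget.

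This $\nu$ could a priori depend on the pair $(c_0,c_1)$. Two observations remove the dependence.
\begin{itemize}
\item The set $\Lambda(c_0):=\{\nu:T_\nu(c_0)=T_\mu T_\lambda(c_0)\}$ is a nonempty closed interval, because $\nu\mapsto T_\nu(c_0)$ is continuous and monotone.
\item The two-agent argument above shows $\Lambda(c_0)\cap\Lambda(c_1)\ne\emptyset$ for every pair of claims.
\end{itemize}
Pairwise intersecting closed intervals in $[0,1]$ have a common point, by the one-dimensional Helly property: $\sup_c\inf\Lambda(c)\le\inf_c\sup\Lambda(c)$. Any $\nu$ in the common intersection gives $T_\nu=T_\mu\circ T_\lambda$ on all of $\R_+$.

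An alternative I would keep in reserve is to use problems with many agents via CONS$^\ast$. The interval argument is cleaner, and it is the only step that is not routine bookkeeping; there I need to take care that the representation's awards do not depend on the choice of solving $\lambda$.
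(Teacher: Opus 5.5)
Your proposal is correct and follows the paper's proof essentially step for step. Part (b) is obtained from (a) via the duality lemma. For closure $\Rightarrow$ CD, you use the budget-solving parameter of the composed map. For CD $\Rightarrow$ closure, you use the closed intervals $\{\nu : f(a,\nu)=T_\mu T_\lambda(a)\}$, show they intersect pairwise via two-claimant problems, and conclude with the one-dimensional Helly property.
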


\begin{proof}
Fix $\lambda,\mu$ and write $S := f(\cdot,\lambda)$, $Q := f(\cdot,\mu)$. For $a > 0$ let
\[ I(a) := \{\nu \in [0,1] : f(a,\nu) = Q(S(a))\}. \]
$I(a)$ is \emph{closed} (continuity of $f$ in $\nu$), an \emph{interval} (the preimage of a point under a weakly increasing continuous function), and \emph{nonempty}: $f(a,\cdot)$ is continuous from $0$ to $a$, and $Q(S(a)) \in [0, S(a)] \subseteq [0,a]$, so the intermediate value theorem applies.

\emph{Pairwise intersection.} Fix $a, b > 0$ and consider the two-claimant problem with claims $(a,b)$. At endowment $E := S(a)+S(b)$ the awards are $x := (S(a), S(b))$ (Lemma~\ref{lem:real}). Set $E' := Q(S(a)) + Q(S(b)) \le E$. On one hand, $r(x, E') = \big(Q(S(a)), Q(S(b))\big)$ ($\mu$ solves the budget for $(x,E')$). On the other hand, some $\nu$ solves the budget $f(a,\nu)+f(b,\nu) = E'$, and $r\big((a,b),E'\big) = (f(a,\nu), f(b,\nu))$. By CD these award vectors are equal, so $\nu \in I(a) \cap I(b)$.

\emph{Common point.} The $\{I(a)\}_{a>0}$ are closed subintervals of $[0,1]$ with pairwise nonempty intersections; hence $\sup_a \min I(a) \le \inf_a \max I(a)$, and any $\nu^\ast$ in between lies in $\bigcap_{a>0} I(a)$ (Helly's theorem in dimension one). For $a = 0$, $f(0,\nu^\ast) = 0 = Q(S(0))$ automatically. Then $f(\cdot,\nu^\ast) = Q \circ S$ which is composition of the gain map, i.e. one direction of (a).

For the other direction of (a), let $0 \le E' \le E \le \norm{c}$. Pick $\lambda$ solving $\sum_j f(c_j,\lambda) = E$ and set $x := r(c,E) = (T_\lambda(c_j))_j$. Pick $\mu$ solving $\sum_j f(x_j,\mu) = E'$ (the sum runs continuously from $0$ to $\norm{x} = E \ge E'$). By closure, $T_\mu \circ T_\lambda = T_\nu$ for some $\nu$; then $\sum_j f(c_j,\nu) = \sum_j T_\mu(x_j) = E'$, so $\nu$ solves the budget for $(c,E')$ and
\[ r(c,E') = \big(f(c_j,\nu)\big)_j = \big(T_\mu(x_j)\big)_j = r(x, E'), \]
which is CD. This proves (a).

Part (b) follows from part (a) by duality. By Lemma~\ref{lem:duality}, $r^\ast$ is a Young rule with gain family $\Lf$, and $r$ satisfies CU iff $r^\ast$ satisfies CD.\end{proof}

\section{Proof}\label{sec:mainproof}
We proceed as follows. First we establish that our family satisfies the axiom system. This is straightforward. Second, we establish the characterisation in the strictly interior regime. The strictly interior members of our class are exactly the rules $r^{\theta}$ with $\theta\in\mathbb{R}$. The argument uses the following steps: for a strictly interior Young rule, CD (or CU) turns the rule into an equal-sacrifice rule in the sense of Young, Proposition~\ref{prop:flow}. By duality, CD and CU together imply that the dual rule is an equal-sacrifice rule too, or put differently, the original rule is equal-sacrifice in losses for a loss-side utility function, Corollary~\ref{cor:dualflow}. The gain-side and loss-side utility are connected by a "bridge equation" Equation~(\ref{eq:starstar}). We then solve this equation. The utility functions in the dual equal-sacrifice representations are strictly concave, Lemma~\ref{lem:concave} and are continuously differentiable, Lemma~\ref{lem:C1}. Equation~(\ref{eq:starstar}) in differentiated form becomes the marginal utility equation~(\ref{eq:G}) whose solutions are of the log-exponential form, by Propositions~\ref{prop:marginal} and~\ref{thm:GsolveC0}. 

Third, we consider Young rules that are not strictly interior and establish that they are exactly the two rules CEL and CEA. The argument uses the following steps: iterating a proper gain map gives (by the semi-group property) another gain map, which in the limit converges either to the zero gain map or to a CEA-type gain map (Lemma~\ref{lem:idem}). A CEA gain map for some $\lambda$ forces CEA gain maps for all $\lambda,$ Proposition~\ref{thm:cea}. Dually, if there is no CEA gain map for any $\lambda$, then applying Proposition~\ref{thm:cea} to the dual rules gives CEA of the dual and hence CEL for the original rule (Corollary~\ref{cor:cel}).
\subsection{Sufficiency}\label{sec:sufficiency}
The family satisfies the four characterising axioms.

\begin{proposition}[Sufficiency]\label{prop:sufficiency}
For every $\theta \in [-\infty,+\infty]$, $r^\theta$ satisfies \textup{ETE},  \textup{CONS}, \textup{CD}, \textup{CU}.
\end{proposition}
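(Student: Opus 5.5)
The plan is to treat finite $\theta$ through the Young representation and the semigroup criterion (Lemma~\ref{lem:criterion}). The endpoints $\theta=\pm\infty$ are then handled directly, using duality where possible.

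\emph{Finite $\theta$.} The rule $r^\theta$ is a Young rule with
\[
f(c_0,\lambda)=\varphi_\theta^{-1}\bigl(\lambda\varphi_\theta(c_0)\bigr).
\]
This $f$ is jointly continuous and weakly increasing in $\lambda$, and it satisfies $f(c_0,0)=0$ and $f(c_0,1)=c_0$. Hence $r^\theta$ satisfies ETE and CONS$^\ast$, and therefore CONS. For CD, Lemma~\ref{lem:criterion}(a) reduces the task to closure of $\T$ under composition. Since $T_\lambda=\varphi_\theta^{-1}\circ(\lambda\,\cdot)\circ\varphi_\theta$, we get
\[
T_\mu\circ T_\lambda=\varphi_\theta^{-1}\circ(\mu\lambda\,\cdot)\circ\varphi_\theta=T_{\mu\lambda}.
\]
For CU, Lemma~\ref{lem:dualident} gives $L_\lambda=\varphi_{-\theta}^{-1}\circ((1-\lambda)\,\cdot)\circ\varphi_{-\theta}$. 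The same computation then shows $L_\mu\circ L_\lambda=L_\nu$ with $1-\nu=(1-\mu)(1-\lambda)$, and Lemma~\ref{lem:criterion}(b) applies. Alternatively, $(r^\theta)^\ast=r^{-\theta}$ satisfies CD, so $r^\theta$ satisfies CU by Lemma~\ref{lem:duality}.

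One point needs care: the range of $\varphi_\theta$ for $\theta<0$ is bounded. As already noted in the paper, $\lambda\varphi_\theta(c)$ stays in the range, so the conjugation formulas are well defined.

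\emph{Endpoints.} It suffices to treat CEA. CEL $=\CEA^\ast$ by Proposition~\ref{prop:endpoints}, and Lemma~\ref{lem:duality} transfers ETE and CONS and swaps CD with CU.

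CEA is a Young rule with gain maps $T(c_0)=\min(c_0,s)$ for $s\in[0,\infty]$. To get the parameter into $[0,1]$, reparametrise by $s=\lambda/(1-\lambda)$. The resulting $f(c_0,\lambda)=\min(c_0,\lambda/(1-\lambda))$ is continuous, with value $c_0$ at $\lambda=1$. So ETE and CONS follow as before. CD holds because $\min(\min(c_0,s),t)=\min(c_0,\min(s,t))$, so $\T$ is closed under composition.

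For CU, the loss maps are $L(c_0)=(c_0-s)_+$. These satisfy
\[
\bigl((c_0-s)_+-t\bigr)_+=(c_0-s-t)_+,
\]
which is again a loss map, so Lemma~\ref{lem:criterion}(b) applies.

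I would also check CU for CEA directly as a sanity check. In the problem $(c-\min(c,s),\,E'-E)$, CEA gives each agent $\min((c_i-s)_+,t)$. Adding back $\min(c_i,s)$ yields $\min(c_i,s+t)$, with the budget matching.

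\emph{Main obstacle.} The only delicate step is verifying that the endpoints fit the Young representation framework (the continuity of $f$ at $\lambda=1$ after reparametrisation) so that Lemma~\ref{lem:criterion} applies. If this is awkward, I would instead verify CD and CU for CEA directly from the formula $\min(c_i,s)$ as above, and take CONS and ETE from the explicit form. The finite-$\theta$ case is purely a conjugation computation.
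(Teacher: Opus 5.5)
Your proposal is correct and follows essentially the same route as the paper. For finite $\theta$ you use the Young representation $f(c_0,\lambda)=\varphi_\theta^{-1}(\lambda\varphi_\theta(c_0))$ together with closure of the gain maps under composition, obtaining CD from Lemma~\ref{lem:criterion}(a) and CU by duality; your direct loss-map computation $1-\nu=(1-\mu)(1-\lambda)$ is an equally valid variant. At the endpoints you use the cap parametrisation $\min(c_0,\lambda/(1-\lambda))$ for $\CEA$, with closure of both the caps and the truncations $(\cdot-s)_+$, and obtain $\CEL$ by duality, exactly as the paper does.
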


\begin{proof}
\emph{Finite $\theta$.} ETE is immediate. $r^\theta$ is a Young rule with $f(c_0,\lambda) = \varphi_\theta^{-1}(\lambda\varphi_\theta(c_0))$ which implies CONS. The gain maps compose within the family,
\[ \varphi_\theta^{-1}\big(\mu\,\varphi_\theta(\varphi_\theta^{-1}(\lambda\varphi_\theta(c)))\big) = \varphi_\theta^{-1}\big(\mu\lambda\,\varphi_\theta(c)\big), \]
so Lemma~\ref{lem:criterion}(a) applies to get CD. The same argument gives CD for $r^{-\theta}$; since $(r^{-\theta})^\ast=r^\theta$ by Lemma~\ref{lem:dualident}, Lemma~\ref{lem:duality} gives CU for $r^\theta$.

\emph{Endpoints.} $\CEA$ is a Young rule with $f(c_0,\lambda) := \min\big(c_0, \tfrac{\lambda}{1-\lambda}\big)$ for $\lambda<1$, $f(c_0,1) := c_0$. ETE and CONS are immediate. Its gain maps are the caps $\{\min(\cdot, s)\}_{s\in[0,\infty]}$ that are closed under composition, $\min(\min(\cdot,s),s') = \min(\cdot,\min(s,s'))$. Its loss maps are $\{(\cdot - s)_+\}$ that are closed under composition $((\cdot - s')_+ - s)_+ = (\cdot - s - s')_+$. This yields CD and CU. $\CEL$ follows symmetrically (or by duality: $\CEA^\ast = \CEL$ by Proposition~\ref{prop:endpoints}, and Lemma~\ref{lem:duality}).
\end{proof}

\subsection{The strictly interior case}\label{sec:interior}

Throughout this section we consider a strictly interior Young rule $r$ that satisfies CD and CU.
 With Lemma~\ref{lem:duality}, $r^\ast$
is also a Young rule with gain family $\Lf$ satisfying CD and CU. It is strictly interior, since $T$ is proper iff $\id-T$ is
and $0<T(c)<c$ iff $0<c-T(c)<c$. Therefore, any statement we subsequently prove about $r$ gives an equivalent statement about $r^\ast$.

A useful observation for the subsequent proof is that strict interiority makes a proper gain map uniquely determined by its value at one claim.  
\begin{lemma}[One-point determinacy]\label{lem:same}
A proper $T\in\T$ is uniquely determined by its value at any one positive claim. 
\end{lemma}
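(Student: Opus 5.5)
Let $T=T_\lambda$ and $T'=T_\mu$ be proper gain maps that agree at some claim $a>0$, say $T(a)=T'(a)=:y$. I want to show that $T=T'$, and the plan is to derive a contradiction with strict interiority from a discrepancy at a second claim, using only \textup{CD} and Lemma~\ref{lem:real}.

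\emph{Step 1 (ordering).} Since $\lambda\mapsto T_\lambda(c_0)$ is weakly increasing for every $c_0$, the gain maps are pointwise ordered. Without loss of generality $\lambda\le\mu$, so $T\le T'$ everywhere. Suppose, for contradiction, that $T(b)<T'(b)$ for some $b$. At $b=0$ both maps vanish, so $b>0$. By strict interiority, $0<y<a$.

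\emph{Step 2 (two realised problems).} Take two claimants with claims $(a,b)$. By Lemma~\ref{lem:real}:
\begin{itemize}[leftmargin=2em]
\item at $E:=y+T'(b)$ the awards are $x:=(y,T'(b))$;
\item at $E':=y+T(b)$ the awards are $(y,T(b))$.
\end{itemize}
Since $T(b)<T'(b)$, we have $E'<E$. Applying \textup{CD} gives
\[ r\big((y,T'(b)),\,E'\big)=r\big((a,b),E'\big)=(y,\,T(b)). \]

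\emph{Step 3 (contradiction).} Now look at the problem with claims $(y,z)$, where $z:=T'(b)$, and endowment $E'$. The rule is Young, so its awards there are $(T_\rho(y),T_\rho(z))$ for some $\rho$. By Step~2 this means $T_\rho(y)=y$ and $T_\rho(z)=T(b)<z$.
\begin{itemize}[leftmargin=2em]
\item $T_\rho\neq\id$, because $T_\rho(z)<z$.
\item $T_\rho\neq0$, because $T_\rho(y)=y>0$.
\end{itemize}
So $T_\rho$ is proper. But it fully pays the positive claim $y$, since $T_\rho(y)=y$. This contradicts strict interiority. Hence $T(b)=T'(b)$ for every $b$, and $T=T'$.

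The only delicate point is Step~3: one must make sure the awards at the composed problem come from a \emph{proper} gain map, rather than merely some representation parameter. This is handled by the two strict inequalities $T_\rho(z)<z$ and $T_\rho(y)>0$, both inherited from $T(b)<T'(b)$ and $y>0$. Nothing beyond \textup{CD}, Lemma~\ref{lem:real} and the monotonicity of $f$ in $\lambda$ is needed. By the duality remark preceding the lemma, the same statement holds for proper loss maps.
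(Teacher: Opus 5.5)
Your proof is correct and follows the paper's argument essentially step for step. You realise both maps on the claim pair and apply CD to pass to the problem $\bigl((y,T'(b)),E'\bigr)$; the gain map generating that problem is then proper but fully pays the positive claim $y$, which contradicts strict interiority. The only difference is cosmetic: you use monotonicity of $f$ in $\lambda$ to arrange $T\le T'$ up front, whereas the paper rules out the reverse inequality by interchanging $T$ and $T'$.
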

\begin{proof}
Let $T,T'\in\T$ be proper and suppose $T(\rho)=T'(\rho)=:v>0$ for some $\rho>0$. If $T(b)<T'(b)$ for some $b>0$, set $E_1:=v+T(b)$ and $E_2:=v+T'(b)$. Realisability and CD give
\[
 (v,T(b))
 =r\big((\rho,b),E_1\big)
 =r\big(r((\rho,b),E_2),E_1\big)
 =r\big((v,T'(b)),E_1\big).
\]
By (Y), the last problem is generated by some $\hat T\in\T$ with $\hat T(v)=v$ and $\hat T(T'(b))=T(b)<T'(b)$. Thus $\hat T$ is neither $0$ nor $\id$, but it fixes the positive point $v$, contradicting strict interiority. Interchanging $T$ and $T'$ rules out the opposite inequality, so $T=T'$.
\end{proof}
A first crucial observation is that the two composition axioms together make gain maps contractive, which implies concavity of the utilities in the Young representation.

\begin{lemma}[Contraction Lemma]\label{lem:contr}
Every proper $T\in\T$ and every proper $L\in\Lf$ is strictly increasing on $(0,\infty)$; consequently, for every proper $T\in\T$ and all $0<c_1<c_2$,
\begin{equation}\label{eq:twosided}
 0 \;<\; T(c_2)-T(c_1) \;<\; c_2-c_1\,.
\end{equation}
\end{lemma}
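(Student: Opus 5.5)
By Lemma~\ref{lem:duality} and the remarks at the start of this section, $r^\ast$ is again a strictly interior Young rule satisfying CD and CU, and its gain family is $\Lf$. It therefore suffices to prove that every proper $T\in\T$ is strictly increasing on $(0,\infty)$; the statement for proper $L\in\Lf$ is the same statement applied to $r^\ast$. The two-sided bound \eqref{eq:twosided} then follows at once. Write $L=\id-T$, which is proper whenever $T$ is. Then
\[
T(c_2)-T(c_1)=(c_2-c_1)-\big(L(c_2)-L(c_1)\big).
\]
Strict monotonicity of $T$ gives the left inequality, and strict monotonicity of $L$ gives the right one.

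\emph{Weak monotonicity.} First I would record that $T$ and $L$ are weakly increasing. By Lemma~\ref{lem:real}, the awards of the two-claimant problem $((c_1,c_2),T(c_1)+T(c_2))$ are $(T(c_1),T(c_2))$. Order preservation (Lemma~\ref{lem:ordpres}) then gives $T(c_1)\le T(c_2)$ and $L(c_1)\le L(c_2)$ whenever $c_1\le c_2$.

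\emph{Excluding flat pieces.} Suppose now that some proper $T$ satisfies $T(c_1)=T(c_2)=:v$ with $0<c_1<c_2$; strict interiority gives $0<v<c_1$.
\begin{itemize}
\item By Lemma~\ref{lem:criterion}(a), $T_\mu\circ T\in\T$ for every $\mu$. As $\mu$ runs over $[0,1]$, its value at $c_1$, namely $T_\mu(v)$, runs continuously over $[0,v]$.
\item Take any proper $S\in\T$ with $S(c_1)\le v$. Some $T_\mu\circ T$ agrees with $S$ at $c_1$, so by Lemma~\ref{lem:same} we get $S=T_\mu\circ T$. Hence $S$ is constant on $[c_1,c_2]$.
\item Equivalently, the loss map $\id-S$ has slope exactly $1$ on $[c_1,c_2]$.
\end{itemize}
So flatness propagates to the whole ``lower'' part of $\T$, measured by the value at $c_1$.

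\emph{Propagation via CU.} I would now use Lemma~\ref{lem:criterion}(b) to move the flat interval toward the origin.
\begin{itemize}
\item Fix such an $S$ with $L_S:=\id-S$, and let $L_\kappa$ be any proper loss map.
\item The composite $L_\kappa\circ L_S=:L_\rho$ is a loss map. Its gain map $T_\rho=\id-L_\rho$ satisfies $T_\rho(c_1)=c_1-L_\kappa(L_S(c_1))$.
\item Choose $L_\kappa$ close to $\id$, so that $T_\kappa$ is small. Then $T_\rho(c_1)\le v$ should hold, because $L_S(c_1)=c_1-S(c_1)$ and $T_\kappa$ is small.
\item By the previous step, $L_\rho$ then has slope $1$ on $[c_1,c_2]$.
\item Since $L_S$ maps $[c_1,c_2]$ isometrically onto $[L_S(c_1),L_S(c_2)]$, it follows that $L_\kappa$ has slope $1$ there. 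Equivalently, $T_\kappa$ is flat on the interval $[c_1-S(c_1),\,c_2-S(c_1)]$, which has the same length $c_2-c_1$ and is shifted left by $S(c_1)$.
\end{itemize}
Iterating, with the shift $S(c_1)$ bounded below along the iteration, should produce a proper gain map that is flat on an interval $[0,\delta]$ with $\delta>0$. That map would satisfy $T'(\delta)=T'(0)=0$, contradicting strict interiority.

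\emph{Main obstacle.} The hardest part is this propagation. It requires controlling the choices so that:
\begin{itemize}
\item the new flat map stays in the regime ``value at the left endpoint of the interval $\le$ the common flat value'', so that one-point determinacy can be reapplied at the shifted interval;
\item the shifts do not shrink to zero, so that the interval actually reaches $0$.
\end{itemize}
If this bookkeeping becomes awkward, my fallback is a direct three-point argument. I would take a two-claimant problem with claims in the flat region and run CU from the awards $(v,v)$. Using ETE on the remaining losses $(c_1-v,c_2-v)$, I would try to exhibit a proper loss map fixing a positive point, again contradicting strict interiority via Lemma~\ref{lem:same} applied to $r^\ast$.
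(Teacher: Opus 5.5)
Several parts of your proposal are correct: the reduction to $\T$ via $r^\ast$, the derivation of \eqref{eq:twosided} from strict increase of $T$ and $\id-T$, weak monotonicity from Lemma~\ref{lem:ordpres}, and the observation that every proper $S$ with $S(c_1)\le v$ equals some $T_\mu\circ T$ and is therefore flat on $[c_1,c_2]$. The gap is the propagation step, and it cannot be closed by bookkeeping because the construction has an invariant.

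\textbf{Why the propagation cannot reach $0$.} Start one round from a map flat on $[a,b]$ with value $v$, where $v<a$ by strict interiority. You pick $S$ with $S(a)=w$, and properness of $T_\kappa$ forces $w<v$. You obtain $T_\kappa$ flat on $[a-w,b-w]$ with value $v'=T_\kappa(a-w)=T_\rho(a)-w\le v-w$. Hence $a'-v'\ge a-v$: the quantity ``left endpoint minus flat value'' never decreases. Every flat interval you produce therefore has left endpoint above $c_1-v>0$. Put differently, each shift is bounded by the current flat value, and the flat value drops by at least the shift, so the total shift is less than $v<c_1$. The direct CU computation from $(e,e)$ to $(v,v)$ at claims $(c_1,c_2)$ has the same invariant. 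Your fallback supplies no mechanism either: the residual claims $(c_1-v,c_2-v)$ are unequal, so ETE says nothing about them, and no loss map with a positive fixed point is exhibited.

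\textbf{The missing idea.} Apply one-point determinacy at the \emph{right} endpoint and compose on the \emph{right}.
\begin{enumerate}
\item By continuity of $\lambda\mapsto f(c_2,\lambda)$, choose a proper $S\in\T$ with $S(c_2)=c_1$.
\item By Lemma~\ref{lem:criterion}(a), $T\circ S\in\T$. It is proper, and $(T\circ S)(c_2)=T(c_1)=T(c_2)$.
\item Lemma~\ref{lem:same} then gives $T\circ S=T$, hence $T=T\circ S^n$ for every $n$.
\item The sequence $S^n(c_2)$ decreases to a fixed point of $S$, which strict interiority forces to be $0$.
\item Therefore $v=T(c_2)=T(S^n(c_2))\to T(0)=0$, a contradiction.
\end{enumerate}
This proves injectivity of $T$ directly. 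Continuity together with $T(0)=0<T(x)$ then yields strict increase, so your weak-monotonicity step is not needed. Only CD is used here; CU enters solely through the duality argument for $\Lf$ that you already have.
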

\begin{proof}

Let $T\in\T$ be proper. If $T(x_1)=T(x_2)=:v$ for $0<x_1<x_2$, continuity of $\lambda\mapsto f(x_2,\lambda)$ gives a proper $S\in\T$ with $S(x_2)=x_1$. By Lemma~\ref{lem:criterion}(a), $T\circ S\in\T$, and
\[
 (T\circ S)(x_2)=T(x_1)=T(x_2).
\]
Both $T\circ S$ and $T$ are proper, so Lemma~\ref{lem:same} gives $T\circ S=T$, hence $T\circ S^n=T$ for every $n$. Strict interiority implies $0<S(x)<x$ for $x>0$, so $S^n(x_2)\downarrow\ell$ for some $\ell\ge0$. Continuity gives $S(\ell)=\ell$, and strict interiority forces $\ell=0$. Therefore
\[
 v=T(x_2)=T(S^n(x_2))\longrightarrow T(0)=0,
\]
a contradiction. Thus $T$ is injective. A continuous injective map on an interval is strictly monotone, and $T(0)=0<T(x)$ for $x>0$ rules out the decreasing case; hence $T$ is strictly increasing. For the loss maps, $r^\ast$ satisfies the standing hypotheses and its gain family is $\Lf$. The statement just proved, applied to $r^\ast$, therefore makes every proper $L\in\Lf$ strictly increasing. Since $T$ is proper if and only if $\id-T$ is proper, \eqref{eq:twosided} is the conjunction of strict increase of $T$ and of $\id-T$ at $c_1<c_2$.
\end{proof}

\subsubsection{The equal-sacrifice representation and the bridge equation}\label{sec:flowcoords}
A main ingredient in our proof is Young's equal-sacrifice representation, read as an additive representation of the gain maps. The construction is as follows. CD for a Young rule turns the gain maps into a semigroup, as we have observed previously. We can parametrise the elements of $\T$ by taking a fixed gain level and choosing as parameter of a gain map the unique claim amount that reaches that level. If the semigroup operation is continuous and strictly increasing in that parameter, Acz\'el's associativity theorem applies and we find an additive representation of the semigroup. Strict monotonicity of the operation is the demanding requirement in our context. \citet{Young1988} guarantees strict monotonicity by requiring the axioms of strict resource monotonicity and strict order preservation in awards. Our strict interiority assumption does the same job, through Lemma~\ref{lem:contr} and strict resource monotonicity (which is a trivial consequence of strict interiority). We verify this in Appendix~\ref{app:young}.

\begin{proposition}[Young--Acz\'el  equal-sacrifice representation]\label{prop:flow}
For a strictly interior Young rule $r$ satisfying \textup{CD}, the set  $\T$ can be re-parameterised as
\[
 \T=\{\,G_t:t\in[0,\infty)\,\}\cup\{G_\infty\},
 \qquad
 G_t(c):=U^{-1}\big(U(c)-t\big)\ (c>0),\quad G_t(0):=0,\quad G_\infty:=0,
\]
for a continuous strictly increasing $U:(0,\infty)\to\R$ with $U(0^+)=-\infty$, so that $\T$ is a one-parameter semigroup:
\[
 U\big(G_t(c)\big)=U(c)-t \quad(t<\infty),\qquad
 G_t\circ G_s=G_{t+s}\quad(t,s\in[0,\infty]),\qquad G_0=\id.
\]
For each $c>0$, $t\mapsto G_t(c)$ is a continuous strictly decreasing bijection of $(0,\infty)$ onto $(0,c)$.
\end{proposition}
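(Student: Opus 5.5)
The plan is to parametrise the gain maps by where they send a fixed reference claim, turn composition into a continuous strictly monotone associative operation on an interval, and apply Aczél's theorem to get an additive representation.

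\emph{Parametrisation.} Fix a reference claim $\rho>0$. By Lemma~\ref{lem:same}, each proper $T\in\T$ is determined by $T(\rho)\in(0,\rho)$. Conversely, since $\lambda\mapsto f(\rho,\lambda)$ is continuous from $0$ to $\rho$, every value in $(0,\rho)$ is attained. So $\sigma\mapsto T^{(\sigma)}$, where $T^{(\sigma)}(\rho)=\sigma$, is a bijection from $(0,\rho)$ onto the proper gain maps. Adjoin $T^{(\rho)}=\id$ and $T^{(0)}=0$.

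\emph{The operation.} By Lemma~\ref{lem:criterion}(a), $T^{(\sigma)}\circ T^{(\tau)}=T^{(\sigma\ast\tau)}$ for a unique $\sigma\ast\tau$, namely $\sigma\ast\tau=T^{(\sigma)}(\tau)$. This operation is associative, since it is composition of maps, and commutative, since gain maps are ordered pointwise by $\lambda$. I would then verify three properties:
\begin{enumerate}
\item \emph{Strict monotonicity in $\tau$.} This is Lemma~\ref{lem:contr}.
\item \emph{Strict monotonicity in $\sigma$.} One-point determinacy forces the family to be strictly pointwise ordered: two proper maps that agree at one positive point coincide. So $\sigma<\sigma'$ gives $T^{(\sigma)}(\tau)<T^{(\sigma')}(\tau)$.
\item \emph{Continuity.} The map $\sigma\mapsto T^{(\sigma)}(c)$ is continuous. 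To see this, pick $\lambda(\sigma)$ with $f(\rho,\lambda(\sigma))=\sigma$. Any limit point $\lambda^\ast$ of $\lambda(\sigma_n)$ satisfies $f(\rho,\lambda^\ast)=\sigma$ by joint continuity of $f$, and determinacy then gives $f(c,\lambda^\ast)=T^{(\sigma)}(c)$. Joint continuity of $(\sigma,\tau)\mapsto T^{(\sigma)}(\tau)$ then follows from monotonicity in each variable plus separate continuity.
\end{enumerate}

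\emph{Aczél and the utility $U$.} With these properties, Aczél's theorem on continuous strictly monotone associative operations on an interval gives a continuous strictly monotone $h$ with $h(\sigma\ast\tau)=h(\sigma)+h(\tau)$. Put $h(\sigma)=-\log$-type normalised so that $h(\rho)=0$ and $h>0$ on $(0,\rho)$. Define $t(T^{(\sigma)}):=h(\sigma)$, so composition adds parameters.

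To get $U$, set $U(c):=-t(T_c)$ for $c\le\rho$, where $T_c$ is the gain map with $T_c(\rho)=c$. Then $U(G_t(c))=U(c)-t$ follows from additivity, since $G_t\circ T_c$ sends $\rho$ to $G_t(c)$. For claims $c>\rho$, the interval is not reachable from $\rho$ by a gain map. Here I would extend by choosing a larger reference $\rho'$ and patching: on the overlap the two additive representations differ by an additive constant, by uniqueness in Aczél's theorem (affine uniqueness, with scale fixed by the common semigroup). Letting $\rho'\to\infty$ gives $U$ on $(0,\infty)$.

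\emph{Remaining properties.} Iterating a proper map drives points to $0$ (as in the proof of Lemma~\ref{lem:contr}), so $t\to\infty$ along these iterates; hence $U(0^+)=-\infty$ and $h$ is unbounded. Because of that unboundedness, the image of $t$ is all of $[0,\infty)$. The claim that $t\mapsto G_t(c)$ is a strictly decreasing bijection onto $(0,c)$ is then a restatement of $G_t(c)=U^{-1}(U(c)-t)$ combined with continuity and these range facts.

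The main obstacle is the Aczél step on the boundary. The operation must be shown to fit the hypotheses on the half-open interval $(0,\rho]$: it has identity $\rho$, and it has no idempotents other than the endpoints, which uses strict interiority. The patching across reference claims is a second, more technical difficulty. Appendix~\ref{app:young} is presumably where these are handled, essentially following \citet{Young1988}.
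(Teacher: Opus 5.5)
Your argument is correct, but it is not the route the paper takes. The paper never runs Acz\'el's theorem itself. It checks the hypotheses of Young's equal-sacrifice theorem: strict resource monotonicity from Lemma~\ref{lem:same}, strict order preservation from Lemma~\ref{lem:contr}, and CD rewritten as Young's composition axiom for the associated tax method. It then applies \citet[Theorem~1]{Young1988} to the variable-population rule $\hat r$ generated by $f$, and reads off $U(T(c))=U(c)-t(T)$ by realising each proper $T$ on two-claimant problems. Finally, $U(0^+)=-\infty$ and the fact that $t$ ranges over $(0,\infty)$ are obtained exactly as in your last step.

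You instead open the black box. The same two lemmas give you a continuous operation on $(0,\rho]$ that is strictly increasing in each variable and has identity $\rho$, and Acz\'el's theorem gives the additive generator directly. The generator is unique up to a positive factor, and the identity pins its additive constant. This lets you glue the representations for $\rho<\rho'$ via $T'_c=T_c\circ T'_\rho$, where $T'_c$ is the gain map sending $\rho'$ to $c$. As a result, $t$ is intrinsic on proper maps and $U$ changes only by the additive constant $-t(T'_\rho)$.

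Your route is self-contained. It uses only two-claimant realisability and the semigroup structure of $\T$. It therefore needs neither the auxiliary rule $\hat r$ nor the paper's somewhat informal claim about where continuity enters Young's proof. The paper's route is shorter on the page because it outsources precisely the Acz\'el step and the extension to all claims to Young.

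One small correction: your justification of commutativity (``gain maps are ordered pointwise'') does not establish it. You do not need it, however; commutativity is a consequence of the Acz\'el representation, not one of its hypotheses.
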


\begin{remark}[Normalisation of $U$]
For the $r^{\theta}$ family, $U$ is a positive affine transformation of $\log\varphi_\theta$. In the normalisation $U=\log\varphi_\theta$ one has $G_t(c)=\varphi_\theta^{-1}\big(e^{-t}\varphi_\theta(c)\big)$, and the gain map at Young parameter $\lambda$ is $G_{-\log\lambda}$. As usual, replacing $U$ by $aU+b$ with $a>0$ only rescales the time parameter.
\end{remark}
The second composition axiom now adds the decisive structure: by duality, the loss maps have an analogous representation, and combining the two utilities gives a functional equation with considerable force:
\begin{corollary}[Dual representation; bridge equation]\label{cor:dualflow}
For a strictly interior Young rule $r$ satisfying \textup{CD} and \textup{CU}, there is a continuous strictly increasing $V : (0,\infty)\to\R$, $V(0^+)=-\infty$, such that the proper members of $\Lf$ are $\{H_u := V^{-1}(V(\cdot)-u) : u \in (0,\infty)\}$. Each proper loss $L_t := \id - G_t$ equals $H_{\beta(t)}$ for a strictly decreasing bijection $\beta : (0,\infty)\to(0,\infty)$, so that
\begin{equation}\label{eq:starstar}
V\big(c - G_t(c)\big) = V(c) - \beta(t) \qquad \text{for all } c>0,\ t>0. \tag{$\star\star$}
\end{equation}
\end{corollary}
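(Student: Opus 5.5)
The plan is to apply Proposition~\ref{prop:flow} to the dual rule $r^\ast$ and then match the two parametrisations. By Lemma~\ref{lem:duality}, $r^\ast$ is a Young rule with representation $\tilde f(c,\mu)=c-f(c,1-\mu)$ and gain family $\Lf$. It satisfies CD because $r$ satisfies CU, and it is strictly interior, as noted at the start of this section. Proposition~\ref{prop:flow} applied to $r^\ast$ then yields a continuous strictly increasing $V:(0,\infty)\to\R$ with $V(0^+)=-\infty$ such that
\[
\Lf=\{H_u:u\in[0,\infty)\}\cup\{0\},\qquad H_u:=V^{-1}(V(\cdot)-u).
\]
Here $H_0=\id$ and $H_\infty=0$, so the proper members of $\Lf$ are exactly the $H_u$ with $u\in(0,\infty)$.

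Next we define $\beta$. For $t\in(0,\infty)$ the gain map $G_t$ is proper, so $L_t:=\id-G_t$ is a proper loss map and equals $H_u$ for some $u\in(0,\infty)$. This $u$ is unique, since $u\mapsto H_u(c)$ is strictly decreasing for each fixed $c>0$ (final clause of Proposition~\ref{prop:flow} for $r^\ast$). We set $\beta(t):=u$. Then \eqref{eq:starstar} is simply $V(c-G_t(c))=V(H_{\beta(t)}(c))=V(c)-\beta(t)$, read off the definition of $H_u$.

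It remains to show that $\beta$ is a strictly decreasing bijection of $(0,\infty)$. Fix a reference claim $\rho>0$ and write
\[
\beta(t)=V(\rho)-V\big(\rho-G_\rho(t)\big),\qquad G_\rho(t):=G_t(\rho).
\]
The map $t\mapsto G_t(\rho)$ is a continuous strictly decreasing bijection of $(0,\infty)$ onto $(0,\rho)$. Hence $t\mapsto\rho-G_t(\rho)$ is a continuous strictly increasing bijection onto $(0,\rho)$, and composing with the strictly increasing $V$ makes $\beta$ continuous and strictly decreasing.

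For surjectivity there are two routes. The direct one uses the limits: as $t\to0^+$, $\rho-G_t(\rho)\to0$, so $V(\rho-G_t(\rho))\to-\infty$ and $\beta\to+\infty$; as $t\to\infty$, $\rho-G_t(\rho)\to\rho$, so $\beta\to0$ by continuity of $V$ at $\rho$. Thus $\beta$ maps onto $(0,\infty)$. Alternatively, every proper $H_u$ is a proper loss map $\id-T$ with $T$ a proper gain map, hence $T=G_t$ for some $t\in(0,\infty)$; this gives surjectivity directly.

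The only delicate point is ensuring that the proper members of $\Lf$ correspond exactly to $u\in(0,\infty)$, i.e.\ that the endpoint cases $u=0,\infty$ are the identity and zero maps. This follows from the final clause of Proposition~\ref{prop:flow} applied to $r^\ast$: $0<H_u(c)<c$ for $u\in(0,\infty)$, while $H_0=\id$ and $H_\infty=0$. Everything else is bookkeeping.
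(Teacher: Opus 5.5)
Your proposal is correct and follows the paper's route. You apply Proposition~\ref{prop:flow} to the dual rule $r^\ast$, whose gain family is $\Lf$, match the two parametrisations of the proper loss maps to define $\beta$, and read off $(\star\star)$ in the $V$-coordinate. The only difference is that you get monotonicity of $\beta$ (and, as a bonus, continuity) by evaluating $(\star\star)$ at a reference claim $\rho$, whereas the paper argues from $L_t$ increasing and $H_u$ decreasing pointwise, and obtains bijectivity because both families parametrise the same proper loss maps.
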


\begin{proof}
Apply Proposition~\ref{prop:flow} to the dual rule $r^\ast$, whose gain maps are the loss maps of $r$. The resulting maps $\{H_u\}_{u>0}$ and the maps $\{L_t:=\id-G_t\}_{t>0}$ therefore parameterise the same proper loss maps. Hence there is a unique bijection $\beta:(0,\infty)\to(0,\infty)$ such that $L_t=H_{\beta(t)}$. As $t$ increases, $L_t$ increases pointwise, whereas $H_u$ decreases with $u$, so $\beta$ is strictly decreasing. Writing $L_t=H_{\beta(t)}$ in the $V$-coordinate gives \eqref{eq:starstar}.
\end{proof}
\subsubsection{The marginal-utility equation}\label{sec:smooth}

If $U$ and $V$ are $C^1$ with strictly positive derivatives, then differentiating equation \eqref{eq:starstar} in $c$ at fixed $t$ gives
\[
V'\big(c-G_t(c)\big)\big(1-\partial_c G_t(c)\big)=V'(c),
\]
while differentiating $U(G_t(c)) = U(c)-t$ \textup{(}Proposition~\ref{prop:flow}\textup{)} yields $\partial_c G_t(c)=U'(c)/U'(G_t(c))$. Setting $y:=G_t(c)$, which sweeps $(0,c)$ as $t$ sweeps $(0,\infty)$, and eliminating $\partial_c G_t$ gives
\begin{equation}\label{eq:G}
\frac{U'(c)}{U'(y)} + \frac{V'(c)}{V'(c-y)} = 1, \qquad 0 < y < c. \tag{G}
\end{equation}

One checks that the pair $U'=1/(C\varphi_{-\theta})$, $V'=1/(\widetilde C\varphi_{\theta})$ \textup{(}for constants $C, \widetilde C > 0$\textup{)} solves \eqref{eq:G}; in fact the verification is nothing but the duality identity: the two ratios in \eqref{eq:G} become $\varphi_{-\theta}(y)/\varphi_{-\theta}(c)$ and $\varphi_\theta(c-y)/\varphi_\theta(c)$, and if $\lambda$ denotes the first, Lemma~\ref{lem:dualident}, applied with the sign of $\theta$ reversed, evaluates the second as $1-\lambda$. By the logarithmic-derivative identity
\begin{equation}\label{eq:logderiv}
\frac{\varphi_\theta'(x)}{\varphi_\theta(x)} = \frac{1}{\varphi_{-\theta}(x)},
\end{equation}
this choice gives $U' = \tfrac1C\,\varphi_\theta'/\varphi_\theta$, i.e.\ $U = \tfrac1C\log\varphi_\theta + \text{const}$, so that $U(G_t(c)) = U(c)-t$ becomes $\varphi_\theta(G_t(c)) = e^{-Ct}\varphi_\theta(c)$. Thus, with $\lambda=e^{-Ct}$, these maps are exactly the gain maps of $r^\theta$; the constant $C$ merely rescales the parameter $t$.

\begin{proposition}[Marginal-utility equation and candidate solutions]\label{prop:marginal}
For the strictly interior Young rule $r$ satisfying \textup{CD} and \textup{CU}, let $U,V$ and $\beta$ be as obtained in Proposition~\ref{prop:flow} and Corollary~\ref{cor:dualflow}. If $U$ and $V$ are $C^1$ on $(0,\infty)$ with strictly positive derivatives, then \eqref{eq:G} holds. Moreover, the pairs $U'=1/(C\varphi_{-\theta})$, $V'=1/(\widetilde C\varphi_{\theta})$ \textup{(}$C, \widetilde C > 0$, $\theta\in\R$\textup{)} solve \eqref{eq:G}, and whenever $U = \tfrac1C\log\varphi_\theta$ up to an additive constant, the proper members of $\T$ are exactly the proper gain maps of $r^\theta$.
\end{proposition}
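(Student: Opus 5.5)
The statement has three parts: the derivation of \eqref{eq:G}, the verification that the log-exponential pairs solve it, and the identification of the gain maps when $U$ is log-exponential. I would treat them in that order, since each uses only Proposition~\ref{prop:flow}, Corollary~\ref{cor:dualflow} and Lemma~\ref{lem:dualident}.

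\emph{Derivation of \eqref{eq:G}.} Fix $t\in(0,\infty)$. By Proposition~\ref{prop:flow}, $G_t(c)=U^{-1}(U(c)-t)$. Since $U$ is $C^1$ with $U'>0$, the inverse function theorem makes $U^{-1}$ a $C^1$ function, so $G_t$ is $C^1$ on $(0,\infty)$. Differentiating $U(G_t(c))=U(c)-t$ gives $\partial_cG_t(c)=U'(c)/U'(G_t(c))$. Next differentiate \eqref{eq:starstar} in $c$. The right-hand side has derivative $V'(c)$ because $\beta(t)$ does not depend on $c$. On the left, $c-G_t(c)\in(0,c)$ by strict interiority, so $V'$ is evaluated inside the domain and the chain rule applies. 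Eliminating $\partial_cG_t$ gives
\[
V'(c-y)\Bigl(1-\frac{U'(c)}{U'(y)}\Bigr)=V'(c),\qquad y=G_t(c).
\]
Dividing by $V'(c-y)>0$ yields \eqref{eq:G} at the point $y=G_t(c)$. By the last clause of Proposition~\ref{prop:flow}, $t\mapsto G_t(c)$ is a bijection of $(0,\infty)$ onto $(0,c)$, so every $y\in(0,c)$ arises and \eqref{eq:G} holds on the whole region $0<y<c$.

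\emph{Candidate solutions.} Take $U'=1/(C\varphi_{-\theta})$ and $V'=1/(\widetilde C\varphi_\theta)$. The constants cancel in each ratio, so the two terms of \eqref{eq:G} are $\varphi_{-\theta}(y)/\varphi_{-\theta}(c)=:\lambda\in(0,1)$ and $\varphi_\theta(c-y)/\varphi_\theta(c)$. Apply Lemma~\ref{lem:dualident} with $-\theta$ in place of $\theta$: the hypothesis $\varphi_{-\theta}(y)=\lambda\varphi_{-\theta}(c)$ gives $\varphi_\theta(c-y)=(1-\lambda)\varphi_\theta(c)$. The two terms therefore sum to $1$. For $\theta=0$ this reduces to $y/c+(c-y)/c=1$.

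\emph{Identification of the gain maps.} Suppose $U=\tfrac1C\log\varphi_\theta$ up to an additive constant. Then $U(G_t(c))=U(c)-t$ becomes $\varphi_\theta(G_t(c))=e^{-Ct}\varphi_\theta(c)$, so $G_t(c)=\varphi_\theta^{-1}(\lambda\varphi_\theta(c))$ with $\lambda=e^{-Ct}$. As $t$ runs over $(0,\infty)$, $\lambda$ runs bijectively over $(0,1)$. For $\theta<0$ the point $\lambda\varphi_\theta(c)$ stays in the range of $\varphi_\theta$, so $\varphi_\theta^{-1}$ is defined there. By Proposition~\ref{prop:flow} the proper members of $\T$ are exactly $\{G_t\}_{t\in(0,\infty)}$. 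These are therefore exactly the maps $\varphi_\theta^{-1}(\lambda\varphi_\theta(\cdot))$ with $\lambda\in(0,1)$, which are the proper gain maps of $r^\theta$.

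No step is a real obstacle; the only care needed is in the first part. One must check that $G_t$ is differentiable, which follows from the $C^1$ inverse of $U$ with $U'>0$, and that $y=G_t(c)$ covers all of $(0,c)$, which follows from the bijectivity clause of Proposition~\ref{prop:flow}.
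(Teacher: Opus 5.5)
Your proposal is correct and follows the paper's argument step for step: differentiate $U(G_t(c))=U(c)-t$ and \eqref{eq:starstar} in $c$, then eliminate $\partial_c G_t$ using that $y=G_t(c)$ sweeps $(0,c)$. You then verify the candidate pairs with Lemma~\ref{lem:dualident} with the sign of $\theta$ reversed, read off $\varphi_\theta(G_t(c))=e^{-Ct}\varphi_\theta(c)$, and also spell out two checks the paper leaves implicit: that $G_t$ is differentiable because $U^{-1}$ is $C^1$, and that $U(c)-t$ lies in the range of $U$.
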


\begin{proof}
The calculation above; for the last statement, $\varphi_\theta(G_t(c)) = e^{-Ct}\varphi_\theta(c)$ identifies $G_t = \varphi_\theta^{-1}\big(\lambda\,\varphi_\theta(\cdot)\big)$ with $\lambda = e^{-Ct}$, and $t\in(0,\infty)$ corresponds to $\lambda\in(0,1)$.
\end{proof}

\subsubsection{Regularity of the utilities}\label{subsec:autoreg}
To justify the marginal utility equation of the previous subsection, we now establish regularity of $U$ and $V$. First, 
the contraction lemma, written in the equal-sacrifice representation, implies strict concavity of both utilities. Consequently, they are locally Lipschitz and differentiable almost everywhere.  
\begin{lemma}[Concavity of the utilities]\label{lem:concave}
$U$ and $V$ are strictly concave on $(0,\infty)$.
\end{lemma}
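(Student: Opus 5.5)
We derive strict concavity of $U$ from the right-hand inequality of the Contraction Lemma (Lemma~\ref{lem:contr}), rewritten in the coordinates of Proposition~\ref{prop:flow}. The same argument for $V$ then follows by duality: $r^\ast$ is a strictly interior Young rule satisfying CD and CU, its gain maps are the loss maps of $r$, and Lemma~\ref{lem:contr} covers proper loss maps as well. Applying the $U$-argument to $r^\ast$ gives the claim for $V$. So it suffices to treat $U$.

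\textbf{Step 1 (a midpoint-type inequality).} Fix $0<y_1<y_2$ and $t>0$, and set $c_i:=G_{-t}$-preimages. More concretely, let $c_i>y_i$ be defined by $U(c_i)=U(y_i)+t$, i.e.\ $G_t(c_i)=y_i$. These exist because $t\mapsto G_t(c)$ maps $(0,\infty)$ onto $(0,c)$ and $U$ is a continuous increasing bijection onto its range. We need $U$ to be unbounded above so that the preimage exists for all $t$. If it is not, we restrict to $t<\sup U-U(y_2)$, which suffices for local arguments. Since $G_t$ is strictly increasing, $c_1<c_2$, and Lemma~\ref{lem:contr} gives $y_2-y_1<c_2-c_1$. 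In words: shifting $U$ by the same amount $t$ moves $y_2$ further than $y_1$. Writing $W:=U^{-1}$, this reads
\[
W(s_2+t)-W(s_2) > W(s_1+t)-W(s_1)\qquad\text{whenever } s_1<s_2,\ t>0,
\]
with $s_i=U(y_i)$. Equivalently, $s\mapsto W(s+t)-W(s)$ is strictly increasing for every $t>0$.

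\textbf{Step 2 (from increasing increments to convexity of $W$).} A continuous function with $W(s+t)-W(s)$ increasing in $s$ for all $t>0$ is midpoint convex. To see this, take $s_1=a$, $s_2=a+t$: we get $W(a+2t)-W(a+t)>W(a+t)-W(a)$, i.e.\ $W(a+t)<\tfrac12\big(W(a)+W(a+2t)\big)$. That is strict midpoint convexity. Continuity upgrades it to convexity. Strictness survives, since any affine segment of $W$ would give equality in the midpoint inequality. Hence $W=U^{-1}$ is strictly convex and strictly increasing, and so $U$ is strictly concave on $(0,\infty)$. Here we use the standard fact that the inverse of a strictly increasing strictly convex function is strictly concave.

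\textbf{Main obstacle.} The delicate point is the domain issue in Step 1. $U$ may be bounded above, and then not every $(y,t)$ admits a preimage $c$. I would handle this by noting that for the midpoint inequality at $a$ and $a+2t$ we only need $c$'s corresponding to $y$'s in $U^{-1}$ of $[a,a+2t]$ shifted by $t$. Equivalently, we apply Lemma~\ref{lem:contr} directly to the pair $(c_1,c_2)=(W(a+t),W(a+2t))$ with the gain map $G_t$, whose images are $W(a)$ and $W(a+t)$. All these points lie in the range of $U$ by construction, so no extension is needed. Everything else is the routine convexity argument.
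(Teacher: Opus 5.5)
Your proof is correct and is essentially the paper's argument. Both apply the strict contraction inequality of Lemma~\ref{lem:contr} to a single proper map $G_t$ at two points to obtain a strict midpoint inequality, upgrade it by continuity, and transfer the result to $V$ via $r^\ast$. The only difference is coordinates: you prove strict midpoint convexity of $U^{-1}$ on $(-\infty,\sup U)$ and then invert, whereas the paper takes the midpoint $m=(x+z)/2$ in claim space with $t=U(z)-U(m)$ and argues by contradiction directly for $U$. The paper's version never needs a preimage outside the range of $U$, so the domain issue you flag does not arise there.
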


\begin{proof}
Fix $0<x<z$, put $m:=(x+z)/2$, and let $t:=U(z)-U(m)>0$. Then $G_t(z)=m$. If
\[
U(m)\le \frac{U(x)+U(z)}2,
\]
then
\[
U\bigl(G_t(m)\bigr)=U(m)-t=2U(m)-U(z)\le U(x),
\]
so $G_t(m)\le x$. It follows that
\[
G_t(z)-G_t(m)\ge m-x=z-m,
\]
contradicting the strict contraction inequality in Lemma~\ref{lem:contr}. Thus $U$ is strictly midpoint-concave. Continuity gives concavity, and strict midpoint concavity rules out affine subintervals; hence $U$ is strictly concave.

Apply the same argument to the dual rule $r^\ast$ and its gains $\{H_u\}$; Lemma~\ref{lem:contr} supplies the same strict contraction inequality. Thus $V$ is strictly concave as well.
\end{proof}

A finite concave function $U$ on an open interval is locally Lipschitz and has finite one-sided derivatives everywhere, with $U'_-$ and $U'_+$ nonincreasing. See \cite[Chapter~1]{RobertsVarberg1973} for these classical facts. If $U'_+(y_1) \le 0$ at some $y_1$, then $U'_+ \le 0$ on $[y_1,\infty)$ and $U$ is nonincreasing there, contradicting strict increase; hence $U'_- \ge U'_+ > 0$ everywhere, and likewise for $V$. That $U$ and $V$ have no corner at all is the content of the next lemma, which supplies the regularity needed to work with equation \eqref{eq:G}.

\begin{lemma}[$C^1$ regularity]\label{lem:C1}
$U$ and $V$ are $C^1$ on $(0,\infty)$ with continuous, strictly positive derivatives.
\end{lemma}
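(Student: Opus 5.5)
The approach is to derive one-sided versions of the marginal-utility equation \eqref{eq:G} directly from the two equal-sacrifice identities, using only the one-sided derivatives that concavity supplies. A kink in $U$ or $V$ would then break \eqref{eq:G} at some point where everything else is differentiable.

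\emph{Step 1: one-sided derivatives.} By Lemma~\ref{lem:concave} and the remarks after it, $U$ and $V$ are finite and concave on $(0,\infty)$. Their one-sided derivatives $U'_\pm$ and $V'_\pm$ exist everywhere, are strictly positive, and coincide outside a countable set (at most countably many kinks). Fix $t\in(0,\infty)$ and $c>0$, and put $y:=G_t(c)$.

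\emph{Step 2: derivatives of $G_t$.} Proposition~\ref{prop:flow} gives $G_t=U^{-1}(U(\cdot)-t)$. By Lemma~\ref{lem:contr}, $G_t$ is continuous and strictly increasing, so a right (left) increment in $c$ produces a right (left) increment in $G_t(c)$. Hence the one-sided chain rule applied to $U(G_t(c))=U(c)-t$ gives
\[
\partial_c^{\pm}G_t(c)=\frac{U'_\pm(c)}{U'_\pm(y)}.
\]
To justify this, I would write difference quotients in the form $\frac{U(G_t(c+h))-U(G_t(c))}{G_t(c+h)-G_t(c)}\cdot\frac{G_t(c+h)-G_t(c)}{h}$. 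The first factor tends to $U'_\pm(y)>0$, which lets us solve for the second factor.

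\emph{Step 3: derivatives of the loss map.} The map $c\mapsto c-G_t(c)$ is also continuous and strictly increasing, by \eqref{eq:twosided}. The same argument applied to \eqref{eq:starstar} gives
\[
V'_\pm(c-y)\bigl(1-\partial^\pm_c G_t(c)\bigr)=V'_\pm(c).
\]
Combining the last two displays yields, for every $0<y<c$ (as $t$ sweeps $(0,\infty)$, $y$ sweeps $(0,c)$), separate ``$+$'' and ``$-$'' versions of the identity
\[
\frac{U'_\pm(c)}{U'_\pm(y)}+\frac{V'_\pm(c)}{V'_\pm(c-y)}=1 .
\]

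\emph{Step 4: ruling out kinks.} Suppose $U$ has a kink at $y_0$, i.e.\ $U'_-(y_0)>U'_+(y_0)$. Choose $c>y_0$ outside the countable set where $U$ fails to be differentiable at $c$ or $V$ fails to be differentiable at $c$ or at $c-y_0$. Subtracting the ``$+$'' and ``$-$'' identities at $(c,y_0)$ cancels the $V$-terms and leaves
\[
U'(c)\bigl(1/U'_+(y_0)-1/U'_-(y_0)\bigr)=0,
\]
which is impossible since $U'(c)>0$. Symmetrically, suppose $V$ has a kink at $z_0$. Take $y:=c-z_0$ with $c>z_0$ chosen so that $U$ is differentiable at $c$ and at $c-z_0$ and $V$ is differentiable at $c$; subtracting again gives a contradiction. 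So $U$ and $V$ are differentiable everywhere.

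\emph{Step 5: conclusion.} A differentiable concave function has a continuous derivative, since $U'_+$ is right-continuous, $U'_-$ is left-continuous, and here they coincide. Positivity of the derivatives was noted after Lemma~\ref{lem:concave}.

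The main obstacle I expect is the rigorous one-sided chain rule in Steps~2--3. It is here that strict monotonicity of both $G_t$ and $\id-G_t$ (Lemma~\ref{lem:contr}) is essential, so that ``right'' maps to ``right'' and the inner difference quotients never vanish.
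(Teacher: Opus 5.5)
Your proof is correct. It differs from the paper's in which variable you differentiate.

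\textbf{The paper's route.} It differentiates $(\star\star')$ in $y$ at fixed $c$. This brings $\beta(U(c)-U(y))$ into the computation, so the paper must choose $c$ with $\beta$ differentiable at $U(c)-U(y_1)$, using Lebesgue's theorem for the monotone $\beta$. Because $y\mapsto c-y$ reverses direction, it obtains the crossed relations $V'_-(c-y_1)=\alpha U'_+(y_1)$ and $V'_+(c-y_1)=\alpha U'_-(y_1)$. The contradiction then comes from concavity of $V$. A kink of $V$ is handled by rerunning the argument on the dual bridge equation with $\beta^{-1}$.

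\textbf{Your route.} You differentiate \eqref{eq:starstar} and $U(G_t(c))=U(c)-t$ in $c$ at fixed $t$, so $\beta(t)$ is a constant and drops out entirely. In exchange you need two things:
\begin{itemize}
\item both $G_t$ and $\id-G_t$ strictly increasing (Lemma~\ref{lem:contr}), so that right increments stay right increments and the inner difference quotients never vanish;
\item a point $c$ outside the countable kink sets of $U$ and $V$, which concavity supplies without any measure theory.
\end{itemize}
Your contradiction comes from $U'(c)>0$ (respectively $V'(c)>0$) rather than from concavity of $V$. Both kinds of kink are excluded within the same pair of one-sided identities, without passing to the dual.

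As a by-product, your one-sided versions of \eqref{eq:G} give \eqref{eq:G} itself once kinks are ruled out. Your route therefore folds Lemma~\ref{lem:C1} and the derivation in Section~\ref{sec:smooth} into a single computation.
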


\begin{proof}
We claim that $U$ has no corner. Suppose for contradiction that $U$ has a corner at $y_1$, i.e.\ $U'_-(y_1) > U'_+(y_1)$. Since $U$ is continuous and strictly increasing, $c \mapsto \tau(c) := U(c) - U(y_1)$ is a homeomorphism from $(y_1, \infty)$ onto $(0, L)$, where $L := \sup U - U(y_1) \in (0,\infty]$. Being monotone, $\beta$ is differentiable almost everywhere, and $(0,L)$ has positive measure, so there is $c > y_1$ such that $\beta$ is differentiable at $\tau(c)$.

Substituting $y := G_t(c)$ in the bridge equation \eqref{eq:starstar}, so that $t = U(c) - U(y)$ and $y$ sweeps $(0,c)$ as $t$ sweeps $(0,\infty)$ \textup{(}Proposition~\ref{prop:flow}\textup{)}, gives
\begin{equation*}\tag{$\star\star'$}
V(c - y) = V(c) - \beta\bigl(U(c) - U(y)\bigr), \qquad 0 < y < c,
\end{equation*}
and take one-sided $y$-derivatives at $y = y_1$ with $c$ fixed. As $y$ increases through $y_1$, the argument $c - y$ decreases through $c - y_1$, so the left-hand side has one-sided derivatives
\[
\frac{d^+}{dy}\Big|_{y_1} V(c - y) = -V'_-(c - y_1), \qquad
\frac{d^-}{dy}\Big|_{y_1} V(c - y) = -V'_+(c - y_1).
\]
Since $\beta$ is differentiable at $\tau(c) = U(c) - U(y_1)$, the chain rule for one-sided derivatives gives, for the right-hand side,
\[
\frac{d^+}{dy}\Big|_{y_1} \beta\bigl(U(c) - U(y)\bigr) = -\beta'(\tau(c))\, U'_+(y_1), \qquad
\frac{d^-}{dy}\Big|_{y_1} \beta\bigl(U(c) - U(y)\bigr) = -\beta'(\tau(c))\, U'_-(y_1).
\]
Matching one-sided derivatives of both sides of $(\star\star')$ and defining $\alpha:=-\beta'(\tau(c))\ge0$ we obtain
\[
V'_-(c-y_1)=\alpha U'_+(y_1),
\qquad
V'_+(c-y_1)=\alpha U'_-(y_1).
\]
The first equality and strict positivity of $V'_-$ and $U'_+$ imply $\alpha>0$. The assumed corner of $U$ then gives $V'_+(c-y_1)>V'_-(c-y_1)$, contradicting concavity of $V$. Thus $U$ has no corner.

By self-duality, the dual bridge equation is the original bridge equation {$(\star\star')$} with $(U, V, \beta)$ replaced by $(V, U, \beta^{-1})$. Since $\beta^{-1}$ is a strictly decreasing bijection, it is differentiable a.e., and the argument above, applied with $(V, U, \beta^{-1})$ in place of $(U, V, \beta)$, shows that $V$ has no corner.

A concave function on an open interval that is differentiable everywhere is $C^1$: its derivative is monotone (nonincreasing for a concave function) and has no jumps, hence is continuous; see \citet[Chapter~1]{RobertsVarberg1973}. The derivatives are strictly positive and continuous throughout.
\end{proof}

\subsubsection{Uniqueness and the strictly interior case}\label{sec:uniqueness}

Lemmas~\ref{lem:concave} and \ref{lem:C1} establish the hypothesis of Proposition~\ref{prop:marginal}, so equation \eqref{eq:G} holds; it remains to show that the pairs listed there are its \emph{only} solutions.

\begin{proposition}[Uniqueness]\label{thm:GsolveC0}
Let $U,V:(0,\infty)\to\mathbb R$ be $C^1$, strictly increasing and strictly concave, with strictly positive derivatives, and suppose that \eqref{eq:G} holds. Then there exist $\theta\in\mathbb R$, $C,\widetilde C>0$ and $a,b\in\mathbb R$ such that
\[
U(x)=\frac1C\log\varphi_\theta(x)+a,
\qquad
V(x)=\frac1{\widetilde C}\log\varphi_{-\theta}(x)+b.
\]
\end{proposition}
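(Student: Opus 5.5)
The plan is to pass to reciprocal marginal utilities, reduce \eqref{eq:G} to a two-variable functional equation, upgrade its solutions to $C^1$, and then separate variables. Put $p:=1/U'$ and $q:=1/V'$. By Lemma~\ref{lem:C1} and Lemma~\ref{lem:concave} these are continuous, strictly positive and nondecreasing on $(0,\infty)$. Writing $x:=y$, $z:=c-y$ and $s:=x+z$, equation \eqref{eq:G} becomes
\[
p(x)\,q(s)+q(z)\,p(s)=p(s)\,q(s),\qquad x,z>0,\ s=x+z. \tag{P}
\]
Both ratios $p(x)/p(s)$ and $q(z)/q(s)$ lie in $(0,1)$ and sum to $1$. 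Letting $x\downarrow0$ with $s$ fixed forces $q(z)/q(s)\to1$, so the first ratio tends to $0$, giving $p(0^+)=0$; symmetrically $q(0^+)=0$. We therefore extend $p$ and $q$ continuously by $0$ at the origin.

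\emph{Regularity.} We want $p,q\in C^1$. From (P), for fixed $s$,
\[
p(x)=p(s)\bigl(1-q(s-x)/q(s)\bigr),
\]
so $p$ on $(0,s)$ is an affine image of a reflected copy of $q$, and conversely. To gain smoothness we average this identity over $s$ in a window $[a,b]$ with $x<a$:
\[
(b-a)\,p(x)=\int_a^b p(s)\,ds-\int_{a-x}^{b-x} \frac{p(w+x)}{q(w+x)}\,q(w)\,dw .
\]
The integrand involves only monotone, locally bounded functions, so the right side is locally Lipschitz in $x$. Feeding this back (now $p$ and $q$ are Lipschitz, so $p/q$ is locally Lipschitz and the convolution-type integral is $C^1$) bootstraps to $p,q\in C^1$. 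A cleaner alternative I would also try is to work with a.e.\ derivatives of the monotone functions $p,q$ and run the argument below almost everywhere, then use continuity. This regularity step is the main obstacle: the bootstrap must be set up so that each pass genuinely gains a derivative.

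\emph{Separation of variables.} Differentiate (P) once in $x$ and once in $z$; the terms involving derivatives at $s$ coincide and cancel on subtraction, leaving
\[
p'(x)\,q(s)=q'(z)\,p(s)\qquad\text{for all } x,z>0 .
\]
Hence $p'(x)=R(x+z)\,q'(z)$ with $R:=p/q>0$ continuous. If $q'(z_0)=0$ at some point, then $p'\equiv0$, which contradicts $p(0^+)=0<p$; so $q'>0$, and likewise $p'>0$. Comparing $z$ and $z+h$ gives $R(x+z+h)/R(x+z)=q'(z)/q'(z+h)$, independent of $x$. Thus $R$ is a continuous solution of the multiplicative Cauchy equation up to a constant, so $R(s)=Ke^{-\theta s}$ for some $\theta\in\R$ and $K>0$. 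Then $p'(x)=Ke^{-\theta x}\cdot e^{-\theta z}q'(z)$, which forces $e^{-\theta z}q'(z)$ to be a constant $m>0$, i.e.\ $q'(z)=me^{\theta z}$ and $p'(x)=Kme^{-\theta x}$.

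\emph{Identification.} Integrating from $0$ and using $p(0^+)=q(0^+)=0$ gives $q=m\varphi_\theta$ and $p=Km\varphi_{-\theta}$. The requirement $p/q=R$ is automatic by the identity $\varphi_{-\theta}(x)=e^{-\theta x}\varphi_\theta(x)$. Setting $C:=Km$ and $\widetilde C:=m$, this is exactly the candidate pair of Proposition~\ref{prop:marginal}: $U'=1/(C\varphi_{-\theta})$ and $V'=1/(\widetilde C\varphi_\theta)$. By the logarithmic-derivative identity \eqref{eq:logderiv}, integration gives $U=\tfrac1C\log\varphi_\theta+a$ and $V=\tfrac1{\widetilde C}\log\varphi_{-\theta}+b$. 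As a consistency check, strict concavity holds for every $\theta\in\R$, so concavity imposes no further restriction.
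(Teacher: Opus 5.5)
Your proof is correct, but it takes a different route from the paper's. For orientation: your $p,q$ are the paper's $g,h$, and your $R=p/q$ is the paper's $q$.

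\textbf{The paper's route.} The paper never differentiates $g$ or $h$. It rewrites \eqref{eq:G} as the cocycle identity $h(x+y)=h(y)+\frac{q(x)}{q(x+y)}h(x)$ and expands $h(x+y+z)$ in two ways. This gives the ratio equation $q(x+z)/q(x+y+z)=q(z)/q(y+z)$. From it the paper reads off $q(0^+)=q(a)^2/q(2a)$, and letting $z\downarrow0$ yields the multiplicative Cauchy equation. Substituting $q=q(0^+)e^{-\theta x}$ back and swapping $x,y$ gives $h=\widetilde C\varphi_\theta$ directly, with no boundary condition at the origin.

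\textbf{Your route.} Your identity $p'(x)q(s)=q'(z)p(s)$ is exactly the infinitesimal form of that associativity argument. You pay for it in three places.
\begin{itemize}
\item You need the averaging bootstrap to $C^1$. It does work. The Lipschitz bound is clearest if you keep the $x$-dependence on the monotone factor, writing the integral as $\int_a^b \frac{p(s)}{q(s)}\,q(s-x)\,ds$, because $p/q$ itself need not be monotone. The second pass then gives $C^1$ by differentiating under the integral.
\item You need the boundary facts $p(0^+)=q(0^+)=0$ to integrate from the origin.
\item You need $p',q'>0$ to form the ratio of derivatives.
\end{itemize}

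\textbf{Comparison.} The paper's argument is shorter and uses only positivity and continuity of $1/U'$ and $1/V'$, not their monotonicity, so concavity plays no role at this step. Yours is more mechanical and gives smoothness and the vanishing of $1/U'$, $1/V'$ at $0$ as by-products.

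\textbf{One small gap.} From $R(x+z+h)/R(x+z)=q'(z)/q'(z+h)$ you only get that $R(u+h)/R(u)$ is independent of $u=x+z$ on the range $u>z$. Choose $z$ below any two given points to conclude that this ratio is the same for all $u>0$, which is what the Cauchy step needs.
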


\begin{proof}
Set
\[
g:=\frac1{U'},\qquad h:=\frac1{V'},\qquad q:=\frac gh.
\]
The functions $g$ and $h$ are positive, continuous and strictly increasing. Writing $c=x+y$ in \eqref{eq:G} and rearranging gives
\begin{equation}\label{eq:h-cocycle}
h(x+y)=h(y)+\frac{q(x)}{q(x+y)}h(x),
\qquad x,y>0. \tag{H}
\end{equation}
We show that the ratio $q$ satisfies (up to a multiplicative constant) the exponential version of Cauchy's functional equation and therefore is exponential (or constant) up to a multiplicative constant. This will rather immediately imply  $g=C\varphi_{-\theta}$ and $h=\tilde{C}\varphi_\theta$ and, through integration, the desired functional form.

Evaluate $h(x+y+z)$ from \eqref{eq:h-cocycle} first as $(x+z)+y$ and then as $x+(z+y)$. Cancelling the common terms yields
\begin{align}
\frac{q(x+z)}{q(x+y+z)}=\frac{q(z)}{q(y+z)},
\qquad x,y,z>0.\label{ratio}
\end{align}
Fix $a>0$. Applying the ratio equation with $x=y=a-t$ and $z=t$ for $0<t<a$ and re-arranging gives $q(t)=q(a)^2/q(2a-t).$
Hence the limit
\[
q(0^+):=\lim_{t\downarrow0}q(t)=\frac{q(a)^2}{q(2a)}\in(0,\infty),
\]
exists and is positive and finite. Defining
 $\bar q(x):=q(x)/q(0^+)$ with $\bar q(0):=1$, and letting $z\downarrow 0$ the ratio equation~(\ref{ratio}) becomes (after re-arranging) the exponential
Cauchy equation \[
\bar q(x+y)=\bar q(x)\bar q(y).
\]
Under continuity it has exactly the solutions $\bar q(x)=e^{-\theta x}$ for some $\theta\in\mathbb R$; see \citet[Section~2.1]{Aczel1966}. Thus $q(x)=q(0^+)e^{-\theta x}$.

Substitution in \eqref{eq:h-cocycle} gives
\[
h(x+y)=h(y)+e^{\theta y}h(x).
\]
Interchanging $x$ and $y$ therefore gives
\[
(e^{\theta y}-1)h(x)=(e^{\theta x}-1)h(y).
\]
If $\theta\ne0$, this implies $h=\widetilde C\varphi_\theta$ for some $\widetilde C>0$. If $\theta=0$, equation~\eqref{eq:h-cocycle} is additive Cauchy, so continuity gives the same conclusion with $\varphi_0(x)=x$. Since $g=qh$, in either case $g=C\varphi_{-\theta}$ for some $C>0$.

Finally, the logarithmic-derivative identity \eqref{eq:logderiv} gives
\[
U'=\frac1{C\varphi_{-\theta}}
=\frac1C\frac{\varphi_\theta'}{\varphi_\theta},
\qquad
V'=\frac1{\widetilde C\varphi_\theta}
=\frac1{\widetilde C}\frac{\varphi_{-\theta}'}{\varphi_{-\theta}}.
\]
Integration yields the stated forms of $U$ and $V$.
\end{proof}

\begin{proposition}[The strictly interior case]\label{prop:interior}
For a strictly interior Young rule $r$ satisfying \textup{CD} and \textup{CU}, $r = r^\theta$ for some $\theta \in \R$.
\end{proposition}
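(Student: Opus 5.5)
The plan is to assemble the pieces already established. Let $r$ be a strictly interior Young rule satisfying CD and CU. Proposition~\ref{prop:flow} gives the gain-side utility $U$, with $\T=\{G_t\}_{t\in[0,\infty]}$ and $U(G_t(c))=U(c)-t$. Corollary~\ref{cor:dualflow} gives the loss-side utility $V$ and the decreasing bijection $\beta$ satisfying the bridge equation \eqref{eq:starstar}. Lemma~\ref{lem:concave} and Lemma~\ref{lem:C1} show that $U$ and $V$ are $C^1$, strictly concave and strictly increasing, with strictly positive derivatives. Hence Proposition~\ref{prop:marginal} applies, and \eqref{eq:G} holds. Proposition~\ref{thm:GsolveC0} then yields $\theta\in\R$, $C>0$ and $a\in\R$ with $U=\tfrac1C\log\varphi_\theta+a$.

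Next, identify the gain maps. Since $U(G_t(c))=U(c)-t$, the additive constant $a$ cancels, and we get $\varphi_\theta(G_t(c))=e^{-Ct}\varphi_\theta(c)$. By the last clause of Proposition~\ref{prop:marginal}, the proper members of $\T$ are exactly the maps $c\mapsto\varphi_\theta^{-1}(\lambda\varphi_\theta(c))$ with $\lambda=e^{-Ct}\in(0,1)$. As $t$ ranges over $(0,\infty)$, $\lambda$ ranges over all of $(0,1)$. Adding $G_0=\id$ and $G_\infty=0$ (corresponding to $\lambda=1$ and $\lambda=0$), we conclude that $\T$ equals the full gain-map family of $r^\theta$. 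Note that $U(0^+)=-\infty$ is consistent, because $\log\varphi_\theta(0^+)=-\infty$.

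Finally, conclude $r=r^\theta$ on every problem. For $|N|\ge2$, a Young rule is determined by its gain-map family. Given $(N,c,E)$, choose $\lambda$ solving the budget equation for $r$; then $r(N,c,E)=(T_\lambda(c_i))_i$. This $T_\lambda$ is some $\varphi_\theta^{-1}(\lambda'\varphi_\theta(\cdot))$, whose values sum to $E$. Since the awards of $r^\theta$ are the unique such vector (the $r^\theta$ budget map is strictly increasing in $\lambda'$ when $c\neq0$), the two awards vectors coincide. Single-claimant problems are trivial, since the award is forced to equal $E$.

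The main obstacle is not in this assembly, which is bookkeeping once the earlier results are invoked. The only delicate point is checking that the reparametrisation $t\mapsto\lambda=e^{-Ct}$ exhausts all of $(0,1)$, so that $\T$ really equals, and is not just contained in, the $r^\theta$ family. This is settled by Proposition~\ref{prop:flow}'s statement that $t\mapsto G_t(c)$ is a bijection onto $(0,c)$. In any case, the budget argument above only needs the inclusion of $\T$ in the $r^\theta$ family.
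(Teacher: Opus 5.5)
Your proposal is correct and follows essentially the same route as the paper. Like the paper, you chain Lemmas~\ref{lem:concave} and~\ref{lem:C1} into Propositions~\ref{prop:marginal} and~\ref{thm:GsolveC0} to identify $\T$ with the gain family of $r^\theta$, and then match awards through the common budget equation. The paper also notes that $\theta$ is unique, by Proposition~\ref{prop:lorenz}, but the statement as posed does not require this.
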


\begin{proof}
By Lemma~\ref{lem:C1} and Propositions~\ref{prop:marginal} and~\ref{thm:GsolveC0}, the proper members of $\T$ are exactly the proper gain maps of $r^\theta$ for some $\theta\in\R$. Together with the zero and identity maps, the two gain families therefore coincide.

For any problem with at least two claimants, choose a gain map $T\in\T$ generating its awards under $r$. The same map belongs to the gain family of $r^\theta$ and satisfies the same budget equation, so it generates the same awards under $r^\theta$. Singleton problems are fixed by budget balance. Hence $r=r^\theta$. Uniqueness of $\theta$ follows from Proposition~\ref{prop:lorenz}.
\end{proof}

\begin{remark}[Supermodularity and copulas]\label{rem:copula}
A direct consequence of the semigroup reparametrisation \emph{and} the composition axioms is that gains are supermodular: at fixed claims the additional amount an agent is awarded when the endowment increases is strictly increasing in that agent's claim, and so is the reduction in that agent's loss. Indeed, for $t_1<t_2$ and $s:=t_2-t_1$, by Proposition~\ref{prop:flow},
$G_{t_2}=G_s\circ G_{t_1}$, so $G_{t_1}-G_{t_2}=(\id-G_s)\circ G_{t_1}$,
a composition of two strictly increasing maps
\textup{(}Lemma~\ref{lem:contr}, with $G_0=\id$ and $\id-G_\infty=\id$\textup{)}. 

For the two-claimant case this yields an additional connection worth pointing out: fix a reference claim $\rho>0$ and index the gain maps by their value there: by Lemma~\ref{lem:same} there is at most one $T\in\T$ with $T(\rho)=x$, and by continuity of $\lambda\mapsto f(\rho,\lambda)$ there is one for every $x\in[0,\rho]$; write $F(x,y):=T(y)$ for that map. Supermodularity is then the rectangle inequality
\[
 F(x_2,y_2)-F(x_1,y_2)-F(x_2,y_1)+F(x_1,y_1) \;\ge\; 0 \qquad (x_1\le x_2,\ y_1\le y_2),
\]
which together with $F(0,\cdot)=0$, $F(\rho,y)=y$ and $F(x,\rho)=x$ says that $F$ is a copula on $[0,\rho]^2$. Composition closure of $\T$ makes $F$ associative, hence Archimedean with additive generator $-U$ up to an additive constant \textup{(}Section~\ref{sec:flowcoords}\textup{)}, and \textup{CU} makes it radially symmetric, $F(x,y)=x+y-\rho+F(\rho-x,\rho-y)$. The family of Theorem~\ref{thm:main} is the family so selected in the copula literature; see \citet{Frank1979}, whose Theorem~4.1 classifies the radially symmetric Archimedean case, and \citet{Nelsen2006} for the background. The implication there runs in the opposite direction to the one above: convexity of the generator is obtained from the rectangle inequality rather than the reverse.
\end{remark}

\subsection{The boundary case}\label{sec:boundary}
Throughout this section we consider a Young rule $r$ that satisfies CU and CD and is \emph{not} strictly interior. In the following, for  $T\in\T$, we write $\Fix(T):=\{c:T(c)=c\}$ for the set of fully satisfied claims, $s(T):=\sup\Fix(T)$ for the largest fully satisfied claim, and $z(T):=\sup\{c:T(c)=0\}$ for the largest fully un-satisfied claim. 
\begin{lemma}[Idempotent limits]\label{lem:idem}
For every $T \in \T$ the pointwise limit $\ell_T := \lim_{n} T^{n}$ exists and belongs to $\T$. If $T$ is proper, then $\ell_T = 0$ when $s(T) = 0$, and $\ell_T = \min(\cdot,s(T))$ when $s(T) > 0$.
\end{lemma}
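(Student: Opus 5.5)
The plan is to get existence of the limit from monotonicity, membership in $\T$ from compactness of the Young parameter, and the explicit form from the shape of $\Fix(T)$.

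\emph{Existence.} Fix $T\in\T$. Since $0\le T\le\id$, for each $c\ge0$ the sequence $T^{n}(c)$ is nonincreasing in $n$: indeed $T^{n+1}(c)=T(T^n(c))\le T^n(c)$. It is also bounded below by $0$. Hence the pointwise limit $\ell_T(c)$ exists.

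\emph{Membership in $\T$.} By Lemma~\ref{lem:criterion}(a), CD makes $\T$ closed under composition, so $T^n=T_{\lambda_n}$ for some $\lambda_n\in[0,1]$. Pass to a subsequence with $\lambda_{n_k}\to\lambda^\ast$. Joint continuity of $f$ gives $T^{n_k}(c)=f(c,\lambda_{n_k})\to f(c,\lambda^\ast)$ for every $c$. The full sequence already converges to $\ell_T(c)$, so $\ell_T=T_{\lambda^\ast}\in\T$. Note that CU is not needed for this part.

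\emph{Shape of $T$.} For the explicit form I first record two properties of every gain map.
\begin{itemize}
\item Both $T$ and $\id-T$ are weakly increasing. This follows from order preservation (Lemma~\ref{lem:ordpres}) applied to two-claimant problems $(a,b)$ with $a\ge b$ at endowment $T(a)+T(b)$, whose awards are $(T(a),T(b))$ by Lemma~\ref{lem:real}.
\item $\Fix(T)$ is an initial interval $[0,s(T)]$. It is closed by continuity of $T$. If $T(c)=c$ and $0\le b<c$, then $0\le b-T(b)\le c-T(c)=0$, so $b\in\Fix(T)$.
\end{itemize}
If $T$ is proper, then $s(T)<\infty$, since otherwise $T=\id$.

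\emph{Identification of the limit.} Continuity of $T$ gives $T(\ell_T(c))=\lim_n T^{n+1}(c)=\ell_T(c)$, so $\ell_T(c)\in\Fix(T)=[0,s(T)]$. Conversely, every $c\le s(T)$ is fixed by $T$, hence by all $T^n$, so $\ell_T(c)=c$ there. For $c>s(T)$:
\begin{itemize}
\item each $T^n$ is weakly increasing, so $\ell_T$ is weakly increasing, which gives $\ell_T(c)\ge\ell_T(s(T))=s(T)$;
\item combined with $\ell_T(c)\le s(T)$, this yields $\ell_T(c)=s(T)$.
\end{itemize}
Thus $\ell_T=\min(\cdot,s(T))$ when $s(T)>0$. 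When $s(T)=0$ we have $\Fix(T)=\{0\}$ and therefore $\ell_T=0$.

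The only step needing care is the passage from order preservation to monotonicity of $T$ and of $\id-T$, since the rest uses only $0\le T\le\id$, continuity of $T$ and compactness of $[0,1]$. If Lemma~\ref{lem:ordpres} were unavailable in this generality, I would instead derive monotonicity from CD and realisability, in the spirit of the proof of Lemma~\ref{lem:same}.
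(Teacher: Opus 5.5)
Your proof is correct. The first two steps are exactly the paper's: monotone convergence of $T^n(c)$, and $\ell_T=T_{\lambda^\ast}$ via compactness of the Young parameter.

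The identification step takes a different route. The paper never uses monotonicity of the gain maps. It first shows $\ran(\ell_T)=\Fix(T)=\Fix(\ell_T)$. It then gets the shape $[0,s(T)]$ from connectedness of the range of the continuous map $\ell_T$. To exclude $p:=\ell_T(c)<s$ for some $c>s$, it realises the pair $(s,c)$ under $\ell_T$. It uses \textup{CU} on the residual claims $(0,c-p)$ to reach $r((s,c),2s)=(s,s)$, and then \textup{CD} with \textup{ETE} to force an equal split at endowment $s+p$, which contradicts $(s,p)$.

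That contradiction is an inlined special case of the argument behind Lemma~\ref{lem:ordpres}: the \textup{CU} step replaces the intermediate-value step there that locates an endowment with equal awards. Citing Lemma~\ref{lem:ordpres} once, as you do, is shorter and exposes the structural reason: every gain map is nondecreasing and $1$-Lipschitz. Its hypotheses do hold here, because a Young rule satisfies \textup{ETE} and the section assumes \textup{CD} and \textup{CU}.

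The comparison also shows something about the axioms used. You need loss-monotonicity of $T$, which requires \textup{CU}, only to make $\Fix(T)$ an initial interval. The paper gets that shape for free from idempotence and connectedness. Combining the paper's fixed-point argument with your award-monotonicity step, which needs only \textup{ETE} and \textup{CD}, proves the lemma without \textup{CU} at all.
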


\begin{proof}
Since $T \le \id$, the iterates $T^{n}(x)$ are pointwise nonincreasing and bounded below by $0$, so $\ell_T(x) := \lim_n T^n(x)$ exists. By Lemma~\ref{lem:criterion}, $T^n = T_{\lambda_n}$ for some $\lambda_n \in [0,1]$; pass to a subsequence $\lambda_{n_k} \to \bar\lambda$. Continuity of $f$ in $\lambda$ gives $f(x,\bar\lambda) = \lim_k f(x, \lambda_{n_k}) = \ell_T(x)$ for every $x$, so $\ell_T = T_{\bar\lambda} \in \T$; in particular $\ell_T$ is continuous in $x$. Continuity of $T$ gives $T(\ell_T(x)) = \lim_n T^{n+1}(x) = \ell_T(x)$, so $\ran(\ell_T) \subseteq \Fix(T)$; and if $T(x) = x$ then $T^n(x) = x$ for every $n$, so $\Fix(T) \subseteq \Fix(\ell_T)$. Composing the two inclusions gives $\ell_T\circ\ell_T = \ell_T$, hence $\ran(\ell_T) \subseteq \Fix(\ell_T)$, and the reverse inclusion holds for any map. The chain
\[ \ran(\ell_T) \;\subseteq\; \Fix(T) \;\subseteq\; \Fix(\ell_T) \;=\; \ran(\ell_T) \]
therefore closes, and the three sets coincide.
The set is closed, by continuity of $T$, and contains $0$, since $T(0)=0$. Moreover, $\ell_T$ is continuous and $[0,\infty)$ is connected, so the set is an interval. Hence
the set is of the form $[0,s(T)]$.

Let now $T$ be proper and write $s := s(T)$. If $s = 0$, the range of $\ell_T$ is $\{0\}$, that is $\ell_T = 0$. If $s > 0$, then $s < \infty$, since $s = \infty$ would give $T = \id$. The map $\ell_T$ is then the identity on $[0,s]$, and $\ell_T(c) \le s$ for $c > s$. Suppose $p := \ell_T(c) < s$ for some $c > s$. Realise the pair $(s, c)$ via $\ell_T$ (Lemma~\ref{lem:real}) at endowment $E_0 := \ell_T(s)+\ell_T(c) = s + p$, with awards $(s, p)$. By CU, for $\Delta \in [0, c-p]$ the residual problem has claims $(0, c-p)$; its first coordinate has zero residual claim, so feasibility forces the residual award to be $(0,\Delta)$, whence
\[ r\big((s,c),\, s + p + \Delta\big) = (s, p) + (0,\Delta) = (s,\, p+\Delta), \]
i.e.\ $r((s,c), s+v) = (s, v)$ for every $v \in [p, c]$. Since $p < s < c$, take $v = s$: $r((s,c), 2s) = (s,s)$. By CD from $2s$ and ETE, for every $\hat E \le 2s$:
\[ r\big((s,c),\hat E\big) = r\big((s,s),\hat E\big) = (\hat E/2,\, \hat E/2). \]
At $\hat E=s+p<2s$, CD gives
$
r((s,c),s+p)=\left(\frac{s+p}{2},\frac{s+p}{2}\right),
$
whereas the CU argument above yields
$
r((s,c),s+p)=(s,p).
$
Since $p<s$, these two allocations are different, a contradiction.
\end{proof}

\begin{proposition}[CEA contagion]\label{thm:cea}
If for some proper $T^\ast \in \T$ we have $s^\ast:= s(T^\ast) > 0$, then $r = \CEA$.
\end{proposition}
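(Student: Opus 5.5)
The plan is to show that the gain family $\T$ consists exactly of the caps $\{\min(\cdot,s)\}_{s\in[0,\infty]}$. Once this holds, any gain map solving the budget of a problem $(N,c,E)$ is a cap $\min(\cdot,s)$ with $\sum_i\min(c_i,s)=E$, so its awards are the $\CEA$ awards and $r=\CEA$. The starting point is Lemma~\ref{lem:idem}: since $s^\ast>0$, the limit $\ell_{T^\ast}=\min(\cdot,s^\ast)$ belongs to $\T$. Hence $L^\ast:=\id-\min(\cdot,s^\ast)=(\cdot-s^\ast)_+$ belongs to $\Lf$.

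\emph{Step 1: caps at arbitrarily large levels.} By Lemma~\ref{lem:criterion}(b), $\Lf$ is closed under composition. Iterating $L^\ast$ gives $(\cdot-ns^\ast)_+\in\Lf$, so $\min(\cdot,ns^\ast)\in\T$ for every $n$. The same manoeuvre works for any two caps: if $\min(\cdot,a),\min(\cdot,b)\in\T$, then $(\cdot-a)_+\circ(\cdot-b)_+=(\cdot-a-b)_+$ shows that $\min(\cdot,a+b)\in\T$. Hence the set $S$ of cap levels in $\T$ is an additive semigroup containing $s^\ast$. Pointwise limits of caps are again caps, and the compactness argument of Lemma~\ref{lem:idem} (passing to a subsequence of Young parameters) shows that $S$ is closed.

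\emph{Step 2: collapse below a cap.} Fix a cap level $s\in S$ and a claim $c>s$. I would reuse the argument at the end of Lemma~\ref{lem:idem}. Realising $(s,c)$ via $\min(\cdot,s)$ gives $r((s,c),2s)=(s,s)$. By CD and ETE, $r((s,c),\hat E)=(\hat E/2,\hat E/2)$ for every $\hat E\le 2s$.

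Now take any $\sigma\in[0,s]$. By continuity in $\lambda$ there is $T\in\T$ with $T(s)=\sigma$. I would show that this $T$ takes the value $\sigma$ at every claim above $s$. The collapse above gives $T(c)=\sigma$ as soon as $T(c)\le 2s-\sigma$. Using order preservation (Lemma~\ref{lem:ordpres}) and continuity of $T$ in $c$, starting from $c$ just above $s$, a connectedness argument should propagate $T(c)=\sigma$ to all $c\ge s$.

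For claims $x<s$, I would use a three-claimant problem $(x,s,c)$ together with CONS (available since $|N^\ast|\ge3$) and the semigroup closure $T\circ\min(\cdot,s')\in\T$. The aim is to force $T(x)=\min(x,\sigma)$. The natural tool is Lemma~\ref{lem:idem} applied to $T$ itself: the limit $\ell_T=\min(\cdot,s(T))$ is a cap, and $T$ is constant equal to $\sigma$ above $s$. Together these should pin down $s(T)=\sigma$. I would then use CD on pairs inside $[0,s]$ to show that $T$ agrees with the cap there. The upshot is that $\sigma\in S$, so $S\supseteq[0,s]$. Since $S$ is unbounded by Step~1, $S=[0,\infty]$.

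\emph{Step 3: no other gain maps.} Let $T\in\T$ be arbitrary and choose $s\in S$ large. The argument of Step~2 applied at the level $\sigma=T(s)$ shows that $T$ coincides with $\min(\cdot,\sigma)$ on $[0,s]$. Letting $s\to\infty$ shows that $T$ is a cap, which completes the identification of $\T$.

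I expect the main obstacle to be the part of Step~2 that handles claims below the cap level: showing that a gain map with $T(s)=\sigma$ satisfies $T(x)=\min(x,\sigma)$ for $x<s$, rather than merely behaving like a cap above $s$. This is where the third claimant and CONS should enter, in the spirit of the contagion argument described for the boundary case. The first thing I would try is a three-claimant problem with claims $(x,s,c)$, $x<s<c$, at an endowment where the $s$- and $c$-claimants tie. Bilateral consistency for the pair $\{x,s\}$, combined with CD down from the cap $\min(\cdot,s)$, should then transfer the equal-division pattern to the pair $(x,s)$.
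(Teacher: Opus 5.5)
There is a genuine gap, and it is exactly the step you flag: forcing a gain map $T$ with $T(s)=\sigma<s$ to coincide with $\min(\cdot,\sigma)$ below $s$.

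The tools you propose for this step are Lemma~\ref{lem:idem} applied to $T$, CD on pairs inside $[0,s]$, CONS on three-claimant problems $(x,s,c)$, order preservation, and the collapse above $s$. They cannot do it, because a rule that is not $\CEA$ satisfies all of them. Fix $s^\ast>0$ and take the Young rule with gain family $\{\lambda\min(\cdot,s^\ast):\lambda\in[0,1]\}\cup\{\min(\cdot,w):w\in[s^\ast,\infty]\}$. It divides in proportion to the truncated claims $\min(c_i,s^\ast)$ until these are met, and by $\CEA$ thereafter. This rule has the following properties:
\begin{itemize}
\item it satisfies ETE, CONS and CONT (it is a Young rule) and CD (the family is closed under composition);
\item it contains the proper cap $\min(\cdot,s^\ast)$;
\item every gain map with $T(s^\ast)=\sigma<s^\ast$ is constant on $[s^\ast,\infty)$;
\item its cap levels form the closed, unbounded additive semigroup $\{0\}\cup[s^\ast,\infty]$;
\item the conclusion of Lemma~\ref{lem:idem} holds for each of its gain maps.
\end{itemize}
Yet $T=(\sigma/s^\ast)\min(\cdot,s^\ast)$ has $s(T)=0\ne\sigma$. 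So Lemma~\ref{lem:idem} plus constancy above $s$ does not pin down $s(T)=\sigma$. More generally, no argument that uses CU only through Lemma~\ref{lem:idem} and the large caps of your Step~1 can reach $\CEA$. The one axiom this rule fails is CU: composing the loss map $\id-\lambda\min(\cdot,s^\ast)$ after $(\cdot-w)_+$ yields the gain map $x\mapsto\min(x,w)+\lambda\min\big((x-w)_+,s^\ast\big)$, which is not in the family when $0<\lambda<1$.

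The missing idea is to use CU as a translation device, which is what the paper's Step~2 does.
\begin{itemize}
\item Your collapse, together with CU on the residual claims $(0,B-s^\ast)$, shows that $r$ is $\CEA$ on every pair $(s^\ast,B)$ with $B\ge s^\ast$.
\item Given $0<a\le s^\ast$ and $b\ge a$, set $t:=s^\ast-a$ and $B:=b+t\ge s^\ast$. Then $r((s^\ast,B),2t)=(t,t)$ with residual claims exactly $(a,b)$.
\item CU gives $r((s^\ast,B),2t+\delta)=(t,t)+r((a,b),\delta)$. The left side is known, so $r$ is $\CEA$ on $(a,b)$.
\item Any $T_\lambda$ with $T_\lambda(s^\ast)=w<s^\ast$ is then read off from the $\CEA$ pairs $(s^\ast,c)$ as $T_\lambda(c)=\min(c,w)$. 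Your Step~1 supplies the remaining caps.
\end{itemize}
In your map language: if $R\in\T$ has $R(s^\ast)=s^\ast-a$ with $0<a\le s^\ast$, closure of $\Lf$ puts $R+Q\circ(\id-R)$ in $\T$ for every $Q\in\T$. Your collapse applied to that map gives $Q(b)=Q(a)$ whenever $b\ge a$ and $Q(a)<a$.

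Finally, your Step~3 is unnecessary. Once every cap lies in $\T$, some cap balances each budget. The awards of a Young rule do not depend on the choice of solving parameter, so $r=\CEA$ follows without showing that $\T$ contains only caps.
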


\begin{proof}
 Let $\ell := \ell_{T^\ast}= \min(\cdot,s^\ast) \in \T$ from Lemma~\ref{lem:idem}.
We show that $\ell$ being the cap map forces every cap map into $\T$, and that this in turn forces $r = \CEA$.

\emph{Step 1 (claim pairs $(s^\ast,b)$ with $b \ge s^\ast$).} As $\min(\cdot,s^\ast) \in \T$, we have $r((s^\ast,b),2s^\ast)=(s^\ast,s^\ast)$. CD and ETE gives $r((s^\ast,b),\hat E) = (\hat E/2,\hat E/2)$ for $\hat E \le 2s^\ast$. With CU and residual claims $(0, b-s^\ast)$,  we have $r((s^\ast,b), 2s^\ast+\Delta) = (s^\ast, s^\ast+\Delta)$ for $\Delta \in [0, b-s^\ast]$. This is exactly $\CEA$ for claim pairs $(s^\ast,b)$.

\emph{Step 2 (claim pairs $(a,b)$ with $0<\min\{a,b\}\le s^\ast$).} Let $b \ge a$ and set $t := s^\ast - a \in [0,s^\ast)$ and $B := b + t$.  We have $B = b + s^\ast - a \ge s^\ast$, so Step~1 applies to $(s^\ast, B)$; at endowment $2t \le 2s^\ast$ it gives $r((s^\ast,B),2t) = (t,t)$, with residual claims $(s^\ast - t, B - t) = (a,b)$. CU then gives, for $\delta \in [0, a+b]$,
\[ r\big((s^\ast,B),\, 2t+\delta\big) = (t,t) + r\big((a,b),\delta\big). \]
The left side is known from Step 1: for $2t + \delta \le 2s^\ast$ (i.e.\ $\delta \le 2a$) it equals $(t+\delta/2,\, t+\delta/2)$, whence $r((a,b),\delta) = (\delta/2,\delta/2)$; for $\delta \ge 2a$ it equals $(s^\ast,\, 2t+\delta-s^\ast) = (t+a,\, t+\delta-a)$, whence $r((a,b),\delta) = (a,\, \delta-a)$. This is exactly $\CEA$ for claim pairs $(a,b)$ with $b \ge a$. Interchanging the coordinates gives the same conclusion when $b<a$.

\emph{Step 3 ($M_w := \min(\cdot,w) \in \T$ for every $w \in [0,s^\ast]$).}
For $w = s^\ast$ and for $w = 0$ the claim is trivial, so let $0 < w < s^\ast$. Since $\lambda \mapsto f(s^\ast,\lambda)$ is continuous with $f(s^\ast,0)=0$ and $f(s^\ast,1)=s^\ast$, there is a parameter $\lambda$ with $f(s^\ast,\lambda)=w$. Fix a claim $c>0$ and consider the claim pair $(s^\ast,c)$ at the endowment $E:=w+f(c,\lambda)$, which $\lambda$ solves by construction, so that the awards are $(w,f(c,\lambda))$. By Step~1 if $c\ge s^\ast$, and by Step~2  if $c\le s^\ast$, the rule on the claim pair $(s^\ast,c)$ is $\CEA$, so the awards are
$
\bigl(\min(s^\ast,u),\,\min(c,u)\bigr),
$
where $u$ is determined by
$
\min(s^\ast,u)+\min(c,u)=E.
$
The first coordinate gives $\min(s^\ast,u)=w<s^\ast$, hence $u=w$, and the second then gives
$
f(c,\lambda)=\min(c,w).
$
As $c$ was arbitrary and $f(0,\lambda)=0$, this says that $T_\lambda=M_w$.

\emph{Step 4 ($M_w \in \T$ for every $w \ge 0$).}
Write $L_w := \id - M_w = (\cdot - w)_+$, and note $L_w \circ L_v = L_{w+v}$. Given $v > 0$, choose $n \in \mathbb{N}$ with $v/n \le s^\ast$. Step~3 gives $M_{v/n} \in \T$, hence $L_{v/n} \in \Lf$, and composition closure of $\Lf$ (Lemma~\ref{lem:criterion}(b)) gives
\[ L_v \;=\; \underbrace{L_{v/n} \circ \cdots \circ L_{v/n}}_{n \text{ times}} \;\in\; \Lf, \qquad \text{hence } M_v = \id - L_v \in \T. \]

\emph{Step 5 (conclusion: $r = \CEA$).}
Singleton problems are fixed by budget balance. For any other problem, continuity of $w\mapsto\sum_i\min(c_i,w)$ supplies a cap $M_w$ that balances the endowment. By Step~4 this cap belongs to $\T$, so the Young representation selects the CEA vector. Hence $r=\CEA$.
\end{proof}

\begin{corollary}[CEL contagion]\label{cor:cel}
If some proper $T \in \T$ has $z(T) > 0$, then $r = \CEL$.
\end{corollary}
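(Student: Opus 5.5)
The plan is to pass to the dual rule and apply Proposition~\ref{thm:cea} there. By Lemma~\ref{lem:duality}, $r^\ast$ is a Young rule satisfying \textup{CD} and \textup{CU}, with representation $\tilde f(c,\mu)=c-f(c,1-\mu)$. Its gain family is $\Lf$, and proper gain maps of $r^\ast$ correspond to proper members of $\T$ via $T\mapsto \id-T$.

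Given a proper $T\in\T$ with $z(T)>0$, set $L:=\id-T$, a proper gain map of $r^\ast$. For every $c\le z(T)$, continuity of $T$ and monotonicity of $\lambda\mapsto$ values aside, we have $T(c)=0$ on $[0,z(T)]$. To see this, note that $T(z(T))=0$ by continuity. By Lemma~\ref{lem:contr}-type monotonicity we cannot simply appeal to strict increase, since $r$ is not strictly interior. Instead I would argue directly: the set $\{c: T(c)=0\}$ is closed and contains $0$, and I only need points with $T(c)=0$ and $c>0$, which exist because $z(T)>0$ is a supremum. So there is some $c_0>0$ with $T(c_0)=0$, i.e.\ $L(c_0)=c_0$. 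Hence $\Fix(L)$ contains a positive point, and $s(L)\ge c_0>0$.

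Proposition~\ref{thm:cea}, applied to $r^\ast$ (legitimate, since its standing hypotheses are only that the rule is a Young rule satisfying \textup{CD} and \textup{CU}, and $r^\ast$ is not strictly interior by the same witness), then gives $r^\ast=\CEA$. Involutivity $(r^\ast)^\ast=r$ and $\CEA^\ast=\CEL$ (Proposition~\ref{prop:endpoints}) yield $r=\CEA^\ast=\CEL$.

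The only point needing care is the first step: producing a genuine positive fixed point of $L$ from $z(T)>0$, rather than merely a supremum. This is immediate from the definition of the supremum together with $T(0)=0$. A second check is that the hypotheses of Proposition~\ref{thm:cea} transfer to $r^\ast$ through Lemma~\ref{lem:duality}, including $|N^\ast|\ge3$, which is unchanged.
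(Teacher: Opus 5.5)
Your proof is correct and follows the paper's argument: pass to the dual $r^\ast$, whose gain family is $\Lf$, note that $L=\id-T$ is proper with $s(L)>0$, apply Proposition~\ref{thm:cea} to get $r^\ast=\CEA$, and conclude $r=\CEA^\ast=\CEL$. Extracting an actual point $c_0>0$ with $T(c_0)=0$ from the supremum is a detail the paper leaves implicit. Your aside claiming $T=0$ on all of $[0,z(T)]$ is not proved as written, but it is also never used, so you can simply delete it.
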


\begin{proof}
$L := \id - T \in \Lf$ is proper and  $z(T)>0$ implies $s(L)>0$. The dual $r^\ast$ admits a Young representation; its gain family is $\Lf$. Proposition~\ref{thm:cea} applied to $r^\ast$ gives $r^\ast = \CEA$, so $r = (r^\ast)^\ast = \CEA^\ast = \CEL$ by Proposition~\ref{prop:endpoints}.
\end{proof}

\begin{proposition}[The boundary case]\label{prop:boundary}
If $r$ is not strictly interior, then $r \in \{\CEA, \CEL\}$.
\end{proposition}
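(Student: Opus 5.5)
If $r$ is not strictly interior, there is a proper $T\in\T$ and a claim $c_0>0$ with $T(c_0)=0$ or $T(c_0)=c_0$. In the second case $c_0\in\Fix(T)$, so $s(T)\ge c_0>0$ and Proposition~\ref{thm:cea} gives $r=\CEA$. In the first case $z(T)\ge c_0>0$ and Corollary~\ref{cor:cel} gives $r=\CEL$. Thus the statement reduces to unpacking the definition of strict interiority into the two quantities $s(T)$ and $z(T)$ and invoking the two contagion results.

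\textbf{Where the definitions need care.} Strict interiority requires $0<T(c)<c$ for every proper $T$ and every $c>0$. Its failure therefore produces a proper $T$ and some $c_0>0$ with $T(c_0)\le 0$ or $T(c_0)\ge c_0$. Since $0\le T\le\id$, these inequalities become the equalities $T(c_0)=0$ and $T(c_0)=c_0$.

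\textbf{The two cases.}
\begin{itemize}
\item If $T(c_0)=c_0$, then $s(T)=\sup\Fix(T)\ge c_0>0$. This is exactly the hypothesis of Proposition~\ref{thm:cea}.
\item If $T(c_0)=0$, then $z(T)=\sup\{c:T(c)=0\}\ge c_0>0$, which is the hypothesis of Corollary~\ref{cor:cel}.
\end{itemize}
Properness of $T$ holds by construction, and the standing hypotheses of the section (a Young rule satisfying CD and CU) are those under which both results were proved.

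\textbf{Where the real difficulty lies.} Both contagion results were established above, so no step here is a genuine obstacle. The only point to check is that Corollary~\ref{cor:cel} applies to $r$ itself: its proof passes to $r^\ast$, which is a Young rule satisfying CD and CU by Lemma~\ref{lem:duality}, so this is covered. The substantive work, namely the idempotent-limit argument of Lemma~\ref{lem:idem} and the cap-propagation steps, is already done.
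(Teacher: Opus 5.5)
Your proof is correct and matches the paper's argument: failure of strict interiority gives a proper $T\in\T$ with $s(T)>0$ or $z(T)>0$, and the two cases are handled by Proposition~\ref{thm:cea} and Corollary~\ref{cor:cel}. You additionally use $0\le T\le\id$ to turn the failed strict inequalities into $T(c_0)\in\{0,c_0\}$, which makes explicit a step the paper leaves implicit.
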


\begin{proof}
Since $r$ is not strictly interior, there is a proper $T\in\T$ such that $s(T)>0$ or $z(T)>0$. If $s(T)>0$ Proposition~\ref{thm:cea} implies $r=\CEA$. If $z(T)>0$,  Corollary~\ref{cor:cel} implies $r=\CEL$.
\end{proof}

\subsection{Proof of Theorem~\ref{thm:main}}\label{sec:proof}

\begin{proof}[Proof of Theorem~\ref{thm:main}]
Sufficiency is Proposition~\ref{prop:sufficiency}. For necessity, let $r$ satisfy ETE, CONS, CD and CU. Proposition~\ref{prop:auto-cont} supplies CONT, and Proposition~\ref{thm:Y} then supplies a Young representation. If the Young representation is strictly interior, Proposition~\ref{prop:interior} gives $r=r^\theta$ for a unique finite $\theta\in\R$. Otherwise Proposition~\ref{prop:boundary} gives $r\in\{\CEA,\CEL\}$.
\end{proof}

\begin{corollary}[The homogeneous case]\label{cor:moulin}
The rules that satisfy homogeneity, \textup{ETE}, \textup{CONS}, \textup{CD} and \textup{CU} are $\{\PROP,\CEA,\CEL\}$. 
\end{corollary}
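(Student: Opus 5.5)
By Theorem~\ref{thm:main}, a rule satisfying \textup{ETE}, \textup{CONS}, \textup{CD} and \textup{CU} is $r^\theta$ for some $\theta\in[-\infty,+\infty]$. The proof therefore reduces to two checks: PROP, CEA and CEL are homogeneous, and $r^\theta$ fails homogeneity for every $\theta\in\R\setminus\{0\}$. Sufficiency of the three rules then follows from Proposition~\ref{prop:sufficiency}. Homogeneity of each is immediate from its definition. For PROP, awards are $\lambda c_i$. For CEA and CEL, the cap or deduction $s$ rescales with the problem: if $\sum_i\min(c_i,s)=E$, then $\sum_i\min(\alpha c_i,\alpha s)=\alpha E$, and similarly for $(c_i-s)_+$.

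For necessity, let $\theta\in\R\setminus\{0\}$. I will compare $r^\theta$ with $r^{\theta'}$ directly. Since $1+\theta\varphi_\theta(\alpha x)=e^{\theta\alpha x}$, we have $\varphi_\theta(\alpha x)=\alpha\,\varphi_{\alpha\theta}(x)$. Hence, for $\alpha>0$, the parametric function satisfies
\[
\varphi_\theta^{-1}\big(\lambda\varphi_\theta(\alpha c_i)\big)=\alpha\,\varphi_{\alpha\theta}^{-1}\big(\lambda\varphi_{\alpha\theta}(c_i)\big).
\]
It follows that $r^\theta(\alpha c,\alpha E)=\alpha\,r^{\alpha\theta}(c,E)$ for every problem: the same $\lambda$ solves both budget equations.

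If $r^\theta$ were homogeneous, this identity would give $r^{\alpha\theta}(c,E)=r^\theta(c,E)$ for every problem and every $\alpha>0$. Take $\alpha=2$; since $\theta\neq0$, we have $2\theta\neq\theta$. Now choose a problem with an interior endowment and two distinct positive claims, for example $c=(1,2)$ and $E=1$. Proposition~\ref{prop:lorenz} says that $r^{2\theta}$ and $r^\theta$ differ at such a problem, which is a contradiction. So the only homogeneous rules in the family are those with $\theta\in\{-\infty,0,+\infty\}$, namely CEA, PROP and CEL.

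The one step needing care is the scaling identity $\varphi_\theta(\alpha x)=\alpha\varphi_{\alpha\theta}(x)$. It is a one-line computation, $\tfrac{e^{\theta\alpha x}-1}{\theta}=\alpha\cdot\tfrac{e^{(\alpha\theta)x}-1}{\alpha\theta}$, and it also holds for $\theta<0$, where the range restriction is preserved because $\lambda\in[0,1]$. Everything else follows from the earlier results, so I expect no real obstacle. As an alternative to invoking Proposition~\ref{prop:lorenz}, one could compute a two-claimant example explicitly.
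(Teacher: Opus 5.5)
Your proposal is correct and follows the paper's proof essentially step for step. The scaling identity $\varphi_\theta(\alpha x)=\alpha\varphi_{\alpha\theta}(x)$ gives $r^\theta(\alpha c,\alpha E)=\alpha r^{\alpha\theta}(c,E)$, so homogeneity of a finite member would force $r^{\alpha\theta}=r^\theta$, which the strictness part of Proposition~\ref{prop:lorenz} rules out unless $\theta=0$; $\PROP$, $\CEA$ and $\CEL$ are homogeneous by inspection. Your only additions are spelling out the reduction via Theorem~\ref{thm:main} and Proposition~\ref{prop:sufficiency} and fixing $\alpha=2$ with a concrete test problem, both harmless.
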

\begin{proof}
For finite $\theta$, the identity $\varphi_\theta(\alpha x)=\alpha\varphi_{\alpha\theta}(x)$ gives
\[
r^{\theta/\alpha}(\alpha c,\alpha E)=\alpha r^\theta(c,E),
\qquad \alpha>0.
\]
 Equivalently, $r^\theta(\alpha c,\alpha E)=\alpha r^{\alpha\theta}(c,E)$. A finite member is homogeneous only if $r^{\alpha\theta}=r^\theta$ for every $\alpha>0$, which by Proposition~\ref{prop:lorenz} forces $\theta=0$. The endpoints CEA and CEL are homogeneous directly from their definitions.
\end{proof}

\subsection{Independence of the axioms}\label{sec:independence}

None of the four axioms can be dropped. Without ETE, the heterogeneous
log-exponential rules, which award $\varphi_{\theta_i}^{-1}\big(\lambda\,
\varphi_{\theta_i}(c_i)\big)$ for a profile $\theta \in \R^{N^\ast}$ with a
common budget-solving $\lambda$, satisfy CONS, CD and CU; the two-claimant
instance pairing $\theta_1 = 1$ with $\theta_2 = 0$ is Example~4 of
\citet{Moulin2000}. Without CU, the equal-sacrifice rule for $u(x) = -1/x$
satisfies ETE, CONS and CD by \citet{Young1988}, and Example~3 of
\citet{Moulin2000} shows that it violates CU. Without CD, the dual of that
rule, which equalises sacrifices in losses for the same utility, satisfies ETE,
CONS and CU and violates CD, by Lemma~\ref{lem:duality}. Without CONS, fix
$\theta \ne \theta'$ and let the rule act as $r^\theta$ on two-claimant
problems and as $r^{\theta'}$ on all larger ones; ETE, CD and CU constrain each
population separately and continue to hold, while CONS fails. None of
these rules is a member of $\{r^\theta\}$. 

\subsection{Compatibility with other properties}\label{sec:compat}

Since $1+\theta\varphi_\theta(x)=e^{\theta x}$,
\begin{equation}\label{eq:phimult}
\varphi_\theta(a+b)=\varphi_\theta(a)+\varphi_\theta(b)+\theta\,\varphi_\theta(a)\varphi_\theta(b),
\end{equation}
so the weighted claim $\varphi_\theta(c_i)$, whose ratios the rule equalises, is subadditive in the claim for $\theta<0$ and superadditive for $\theta>0$. The gain maps inherit this.

\begin{proposition}[Subadditivity and superadditivity of the gain maps]\label{prop:subadd}
The gain maps of $r^\theta$ are subadditive for $\theta\in[-\infty,0]$ and superadditive for $\theta\in[0,+\infty]$. For finite $\theta\ne0$ every proper gain map is strictly subadditive, respectively strictly superadditive.
\end{proposition}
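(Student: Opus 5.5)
The plan is to reduce everything to the one-variable function $\varphi_\theta$ and its addition law \eqref{eq:phimult}. For finite $\theta\ne0$ the proper gain maps are $G(x)=\varphi_\theta^{-1}(\lambda\varphi_\theta(x))$ with $\lambda\in(0,1)$. Write $\psi:=\varphi_\theta^{-1}$. We must show that $G(a+b)\le G(a)+G(b)$ for $\theta<0$, with the reverse inequality for $\theta>0$, and that both are strict when $a,b>0$. The cases $a=0$ or $b=0$ are trivial because $G(0)=0$.

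\textbf{Finite $\theta\ne0$.} Apply $\varphi_\theta$ to both sides; it is strictly increasing, so this preserves the order. The left side becomes $\lambda\varphi_\theta(a+b)$, which by \eqref{eq:phimult} equals
\[
\lambda\big(\varphi_\theta(a)+\varphi_\theta(b)+\theta\varphi_\theta(a)\varphi_\theta(b)\big).
\]
Set $A:=\varphi_\theta(a)$, $B:=\varphi_\theta(b)$, $\alpha:=\varphi_\theta(G(a))=\lambda A$ and $\beta:=\varphi_\theta(G(b))=\lambda B$. Then, again by \eqref{eq:phimult}, the right side becomes
\[
\varphi_\theta(G(a)+G(b))=\lambda A+\lambda B+\theta\lambda^2AB.
\]
The difference of the right and left sides is therefore
\[
\theta\lambda^2AB-\theta\lambda AB=-\theta\lambda(1-\lambda)AB.
\]
This is strictly positive for $\theta<0$ and strictly negative for $\theta>0$ whenever $a,b>0$ and $\lambda\in(0,1)$. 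That gives strict subadditivity, respectively strict superadditivity.

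\textbf{Potential obstacle: ranges for $\theta<0$.} When $\theta<0$, $\varphi_\theta$ maps onto $[0,1/|\theta|)$, so we should check that all quantities stay in that range. They do, because $G(a)+G(b)$ is a genuine point of $[0,\infty)$ and \eqref{eq:phimult} holds identically there. This makes the computation legitimate.

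\textbf{$\theta=0$ and the endpoints.} For $\theta=0$ the gain maps $x\mapsto\lambda x$ are additive, so both claims hold weakly. For $\theta=-\infty$ the gain maps are the caps $\min(\cdot,s)$. Subadditivity $\min(a+b,s)\le\min(a,s)+\min(b,s)$ follows by cases: if either $a$ or $b$ is at least $s$, the right side is at least $s$; otherwise the right side is $a+b$. For $\theta=+\infty$ the gain maps are $(\cdot-s)_+$, and superadditivity $(a+b-s)_+\ge(a-s)_++(b-s)_+$ follows from the same case split, or by duality, since $(x-s)_+=x-\min(x,s)$ and subtracting a subadditive map from the additive identity gives a superadditive one.

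The only step that needs care is the domain check for $\theta<0$; everything else is immediate.
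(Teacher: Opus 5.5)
Your proof is correct and is essentially the paper's argument: apply the strictly increasing $\varphi_\theta$ to both sides, use the addition law \eqref{eq:phimult} with $\varphi_\theta(T_\lambda(x))=\lambda\varphi_\theta(x)$ to get the difference $\theta\lambda(\lambda-1)\varphi_\theta(a)\varphi_\theta(b)$, and check $\theta=0$ and the cap and $(\cdot-s)_+$ endpoint maps directly. The same difference vanishes at $\lambda\in\{0,1\}$, so it also covers the non-proper maps $0$ and $\id$, which you did not mention explicitly but which the weak claim requires.
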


\begin{proof}
Let $\theta$ be finite, $\lambda\in(0,1)$ and $a,b>0$, and write $T_\lambda=\varphi_\theta^{-1}(\lambda\varphi_\theta(\cdot))$. Applying \eqref{eq:phimult} to $T_\lambda(a)+T_\lambda(b)$ and to $a+b$, and using $\varphi_\theta(T_\lambda(x))=\lambda\varphi_\theta(x)$,
\[
\varphi_\theta\big(T_\lambda(a)+T_\lambda(b)\big)-\varphi_\theta\big(T_\lambda(a+b)\big)
=\theta\lambda(\lambda-1)\,\varphi_\theta(a)\varphi_\theta(b),
\]
which is positive for $\theta<0$ and negative for $\theta>0$, since $\lambda(\lambda-1)<0$. As $\varphi_\theta$ is strictly increasing, the claim follows for the proper gain maps; $T_0=0$ and $T_1=\id$ are additive. At the endpoints, the gain maps of $\CEA$ satisfy $\min(a+b,s)\le\min(a,s)+\min(b,s)$ and those of $\CEL$ satisfy $(a+b-s)_+\ge(a-s)_++(b-s)_+$.
\end{proof}

A rule is {\bf merging-proof} if a group can never gain by consolidating its claims: whenever a problem is obtained from $(N,c,E)$ by replacing a group $A\subseteq N$ with a single claimant of claim $\sum_{i\in A}c_i$, that claimant's award is at most $\sum_{i\in A}r_i(c,E)$; it is {\bf splitting-proof} if that award is always at least $\sum_{i\in A}r_i(c,E)$.\footnote{In the claims literature these properties are usually called \emph{no advantageous merging} and \emph{no advantageous splitting}, or non-manipulability via merging and via splitting; see \citet{ChambersThomson2002}.} The subadditivity or superadditivity of the gain maps determines which of the two properties holds.

\begin{corollary}[Merging and splitting]\label{cor:compat}
The merging-proof members of $\{r^\theta\}$ are exactly those with $\theta\in[-\infty,0]$, and the splitting-proof members exactly those with $\theta\in[0,+\infty]$. $\PROP$ is the only member satisfying both.
\end{corollary}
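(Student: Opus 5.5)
The plan has two halves. First, sub/superadditivity of the gain maps (Proposition~\ref{prop:subadd}) gives the positive statements. Second, an explicit two-claimant-versus-one comparison, combined with consistency, gives the negative ones.

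\emph{Positive direction.} Let $\theta\in[-\infty,0]$. Take a problem $(N,c,E)$ and a group $A\subseteq N$, and merge $A$ into a single claimant with claim $c_A:=\sum_{i\in A}c_i$, obtaining the problem $(N',c',E)$. If $N'$ is a singleton, then $A=N$ and the award equals $E=\sum_{i\in A}r_i$, so there is nothing to prove. Otherwise I use the Young representation of $r^\theta$, in which each gain map is a cap for $\CEA$ and $\varphi_\theta^{-1}(\lambda\varphi_\theta(\cdot))$ for finite $\theta$. Let $T$ be the gain map that balances the budget in $(N',c',E)$, and let $T'$ be the one that balances $(N,c,E)$. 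Suppose $T(c_A)>\sum_{i\in A}T'(c_i)$. Gain maps are ordered pointwise in the parameter, so either $T\le T'$ pointwise or $T\ge T'$ pointwise.
\begin{itemize}
\item If $T\le T'$, subadditivity gives $T(c_A)\le\sum_{i\in A}T(c_i)\le\sum_{i\in A}T'(c_i)$, contradicting the assumption.
\item If $T\ge T'$, the claimants outside $A$ receive at least as much at $T$ as at $T'$. Since both budgets equal $E$, the merged award satisfies $T(c_A)=E-\sum_{j\notin A}T(c_j)\le E-\sum_{j\notin A}T'(c_j)=\sum_{i\in A}T'(c_i)$. This is again a contradiction.
\end{itemize}
Either way the merged claimant receives at most $\sum_{i\in A}r_i(c,E)$. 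The superadditive case $\theta\in[0,+\infty]$ is the mirror image and yields splitting-proofness. It can also be obtained from duality, since merging-proofness of $r$ is equivalent to splitting-proofness of $r^\ast$ and $(r^\theta)^\ast=r^{-\theta}$ (Lemma~\ref{lem:dualident}, Proposition~\ref{prop:endpoints}).

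\emph{Negative direction.} For finite $\theta>0$ I exhibit a violation of merging-proofness. Take $N=\{1,2,3\}$ with claims $(a,b,d)$, merge $\{1,2\}$ into one claimant with claim $a+b$, and choose the endowment so that the merged problem is solved by a proper map $T_\lambda$. Strict superadditivity gives $T_\lambda(a+b)>T_\lambda(a)+T_\lambda(b)$. In the unmerged problem, the endowment $E=T_\lambda(a+b)+T_\lambda(d)$ then exceeds $T_\lambda(a)+T_\lambda(b)+T_\lambda(d)$, so the balancing parameter there is some $\lambda'>\lambda$. Claimant $3$ therefore gets $T_{\lambda'}(d)>T_\lambda(d)$, using strict monotonicity of $\lambda\mapsto T_\lambda(d)$ for $d>0$ under strict interiority. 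By budget balance the merged claimant gets strictly more than $r_1+r_2$, so merging is advantageous. The case $\theta=+\infty$ ($\CEL$) needs a concrete example. With claims $(1,1,2)$ and $E=2$, $\CEL$ gives $(1/3,1/3,4/3)$; merging the first two gives claims $(2,2)$ and awards $(1,1)$, and $1>2/3$. Symmetrically, for $\theta<0$ splitting-proofness fails, either by the same argument or directly by duality.

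Combining the two directions, the merging-proof members are exactly those with $\theta\in[-\infty,0]$ and the splitting-proof members exactly those with $\theta\in[0,+\infty]$. The intersection of the two ranges is $\{0\}$, that is, $\PROP$.

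I expect the main obstacle to be the positive direction's ordering step. It requires checking that the gain maps are pointwise totally ordered in the parameter, including the degenerate cap and CEL maps and the case where the merged problem has only two claimants. This follows from monotonicity of $f$ in $\lambda$, but it is the place where care is needed.
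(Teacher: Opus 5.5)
Your proof is correct. The positive direction is essentially the paper's, while the negative direction takes a somewhat different route.

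\textbf{Positive direction.} Both arguments compare the gain map balancing the merged problem with the one balancing the original, and combine subadditivity with budget balance. The paper merges two claimants at a time and iterates. It first argues that the merged budget map lies below the original one. Hence the merged parameter is weakly larger, and only your second case (outsiders gain, so the merged claimant loses) can occur. You instead merge the whole group at once, using subadditivity iterated over $|A|$ summands, and you dispose of both pointwise orderings of $T$ and $T'$. This spares you the question of which budget-solving parameter to pick when the budget map is flat, as it is for $\CEA$.

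\textbf{Negative direction.} The paper uses the single instance $(1,1,2)$ with $E=2$. There ETE fixes the merged awards at $(1,1)$. The original budget equation $2T_\lambda(1)+T_\lambda(2)=2$, combined with strict superadditivity (respectively subadditivity), gives $2T_\lambda(1)<1$ (respectively $>1$). The paper then computes the same instance explicitly for $\CEL$ and $\CEA$. Your parameter-comparison argument proves more: for finite $\theta>0$, merging is strictly profitable at every three-claimant problem with positive claims and an interior endowment.

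\textbf{Duality for $\theta<0$.} You handle $\theta<0$ by observing that $r$ is merging-proof iff $r^\ast$ is splitting-proof. The paper does not state this, but it is valid because merging leaves $\norm{c}$ unchanged, so $E\mapsto\norm{c}-E$ matches the two problems. At $\theta=-\infty$ your ``same argument'' does not apply verbatim, since the cap maps are neither strictly subadditive nor strictly increasing in the parameter. There the duality route does the work, via $\CEA=\CEL^\ast$ and your explicit $\CEL$ example.
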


\begin{proof}
Let $\theta\in[-\infty,0]$, let $(N,c,E)$ be a problem with budget-solving parameter $\lambda$, and merge two claimants $i,j$ into one. By Proposition~\ref{prop:subadd} the merged budget map $\mu\mapsto f(c_i+c_j,\mu)+\sum_{k\ne i,j}f(c_k,\mu)$ lies weakly below the original one at every parameter value, and both are increasing and reach $E$, so the parameter solving the merged budget satisfies $\mu\ge\lambda$. Every claimant outside $\{i,j\}$ is then awarded weakly more, and budget balance leaves the merged claimant with weakly less than $r_i(c,E)+r_j(c,E)$. Merging a group of any size is a sequence of such merges. For $\theta\in[0,+\infty]$ the inequalities reverse, which is splitting-proofness.

For the converse, take claims $(1,1,2)$ and $E=2$, and let $T_\lambda$ be the proper gain map solving this problem. Merging the first two claimants gives claims $(2,2)$ and awards $(1,1)$ by ETE. If $\theta>0$ is finite, strict superadditivity gives $T_\lambda(2)>2T_\lambda(1)$. Since the original budget equation is $2T_\lambda(1)+T_\lambda(2)=2$, it follows that $2T_\lambda(1)<1$, so merging the two unit claims is profitable. If $\theta<0$ is finite, strict subadditivity gives $T_\lambda(2)<2T_\lambda(1)$, and the same budget equation yields $2T_\lambda(1)>1$, so splitting a claim of size $2$ into two unit claims is profitable.

At the endpoints,
\[
\CEL((1,1,2),2)=\left(\tfrac13,\tfrac13,\tfrac43\right),
\qquad
\CEL((2,2),2)=(1,1),
\]
so merging the two unit claims is profitable under $\CEL$. Likewise,
\[
\CEA((1,1,2),2)=\left(\tfrac23,\tfrac23,\tfrac23\right),
\qquad
\CEA((2,2),2)=(1,1),
\]
so splitting a claim of size $2$ into two unit claims is profitable under $\CEA$. Thus no additional members are merging-proof or splitting-proof. Finally $[-\infty,0]\cap[0,+\infty]=\{0\}$, and $\PROP$ satisfies both properties.
\end{proof}

Two further properties select the endpoints. {\bf Invariance under claims truncation} requires $r(c,E)=r\big(t(c,E),E\big)$ with $t_i(c,E):=\min\{c_i,E\}$. Only $\CEA$ satisfies it: its cap obeys $s\le E$, so truncation leaves $\min(c_i,s)$ unchanged, while at $c=(2,1)$ and $E=1$ the truncated claims are $(1,1)$, so any other member would have to award $(\tfrac12,\tfrac12)$, which no finite $\theta$ does by strict order preservation (Lemma~\ref{lem:contr}) and $\CEL$ does not either, awarding $(1,0)$. {\bf Minimal rights first} requires $r(c,E)=m(c,E)+r\big(c-m(c,E),E-\sum_i m_i(c,E)\big)$ with $m_i(c,E):=\big(E-\sum_{j\ne i}c_j\big)_+$; it is the dual property, so by Lemma~\ref{lem:duality} and Proposition~\ref{prop:endpoints} only $\CEL$ satisfies it.

\section{Conclusion}
We characterised the consistent claims rules that treat equals equally and satisfy both composition axioms, without imposing homogeneity or continuity. Composition down gives Young's equal-sacrifice utility for awards, composition up gives the dual utility for losses, and compatibility of the two utilities selects the log-exponential family. When strict interiority fails, the same axioms force CEA or CEL. Within the characterised family, invariance under claims truncation selects CEA, minimal rights first selects CEL, merging-proofness selects the half with $\theta\le0$, splitting-proofness selects the half with $\theta\ge0$, and imposing both selects the proportional rule.

The main open direction is the corresponding classification without equal treatment of equals. We close with a conjecture:
Define the generalized version of the log-exponential family for a profile $\theta\in\R^N$ by $r_i(c,E)=\varphi^{-1}_{\theta_i}(\lambda \varphi_{\theta_i}(c_i))$ with a common budget-solving $\lambda$. Then we conjecture that the following natural generalization of \cite{Moulin2000}'s rules is characterized by consistency, continuity and the two composition axioms:  \begin{conjecture}[Characterisation without ETE]
For $|N^\ast| \ge 3$: $r$ satisfies \textup{CONT}, \textup{CONS}, \textup{CD}, \textup{CU} if and only if $r$ is a priority composition over an ordered partition $B_1 \succ B_2 \succ \cdots$ of $N^\ast$, in which each class $B_k$ carries:
\begin{itemize}
\item if $|B_k| \ge 3$: a generalized log-exponential rule for a profile \textup{(}$\theta \in \R^{B_k}$\textup{)}, a weighted CEA, or a weighted CEL;
\item if $|B_k| = 2$: an \emph{arbitrary} two-claimant bi-compositional rule as classified by \cite{Chambers2006};
\item if $|B_k| = 1$: the trivial rule.
\end{itemize}
\end{conjecture}

A proof is likely to be demanding. \citet{Moulin2000} obtains the scale-invariant case by an intricate argument, and the shorter proof of \citet{Thomson2013} does not adapt: it uses scale invariance to make the two-claimant paths piecewise linear, and settles which of them extend to more claimants by counting kinks, whereas the paths arising here are smooth for finite $\theta$.
\bibliographystyle{plainnat}
\bibliography{references}

\appendix
\section{Endowment monotonicity, Consistency, and Order Preservation}\label{app:foundations}

This appendix records three useful consequences of the axioms: CD implies endowment monotonicity (Lemma~\ref{lem:resmon}), CONS and either composition axiom implies CONS$^*$ (Lemma~\ref{lem:fullcons}), and ETE and CD implies order preservation (Lemma~\ref{lem:ordpres}) for two agents (lifted to arbitrary populations if CONS is added).  Endowment monotonicity is used throughout, order-preservation is used in the Lorenz ranking and the continuity proof, the subgroup-consistency lemma is used in the Young representation.

\begin{lemma}[Endowment monotonicity; Moulin \citeyearpar{Moulin2000}, p.~653]\label{lem:resmon}
Either composition axiom implies endowment monotonicity: for $E' \le E$, $r(c,E') \le r(c,E)$ coordinatewise.\end{lemma}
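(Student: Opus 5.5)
The statement asks us to show that either composition axiom gives $r(c,E')\le r(c,E)$ coordinatewise whenever $E'\le E$. I would treat CD directly and then obtain CU by duality.

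\emph{Case CD.} Fix $c$ and $0\le E'\le E\le\norm{c}$, and set $x:=r(c,E)$. By CD,
\[
r(c,E')=r(x,E').
\]
The problem $(x,E')$ is feasible, since $E'\le E=\norm{x}$. Because $r$ is a rule, the awards it assigns in the problem with claims vector $x$ satisfy $0\le r_i(x,E')\le x_i$ for each $i$. Hence
\[
r_i(c,E')=r_i(x,E')\le x_i=r_i(c,E).
\]
This is the claim, and it uses nothing beyond the bounds built into the definition of a rule.

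\emph{Case CU.} There are two routes, and I would present both since each takes one line.

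The direct route applies CU with the roles $E\le E'$ as written in the axiom. Rename so that $E'\le E$, set $y:=r(c,E')$, and write
\[
r(c,E)=y+r\big(c-y,\,E-E'\big).
\]
This is feasible because $E-E'\le\norm{c}-E'=\norm{c-y}$. The second summand is nonnegative by the definition of a rule, so $r(c,E)\ge y=r(c,E')$ coordinatewise.

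The alternative route uses duality, via Lemma~\ref{lem:duality}. If $r$ satisfies CU, then $r^\ast$ satisfies CD, so $r^\ast$ is endowment monotone by the first case. Since
\[
r(c,E)=c-r^\ast(c,\norm{c}-E),
\]
and $E\mapsto\norm{c}-E$ reverses the order of endowments, monotonicity transfers back to $r$.

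There is no real obstacle here. The only points needing care are checking feasibility of the auxiliary problems and remembering that the bounds $0\le r_i\le c_i$ are part of the definition of a rule. In particular, no continuity and no Young representation are needed, which matters because this lemma is used before those are established.
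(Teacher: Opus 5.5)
Your proof is correct: the CD case and the direct CU case are the standard one-line arguments from the bounds $0\le r_i\le c_i$, you check feasibility of the auxiliary problems correctly, and the duality route is also sound because Lemma~\ref{lem:duality} does not itself rely on monotonicity, so there is no circularity. The paper gives no proof of its own here, only the citation to Moulin (2000, p.~653), and your derivation is the standard argument behind that observation.
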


\begin{lemma}[Consistency from bilateral consistency; Chun \citeyearpar{Chun1999}]\label{lem:fullcons}
Under endowment monotonicity, \textup{CONS} implies \textup{CONS}$^\ast$.
\end{lemma}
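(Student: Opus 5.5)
We need to prove that, under endowment monotonicity, CONS implies CONS$^\ast$. Let $x=r(N,c,E)$, fix a subgroup $A\subseteq N$ with $|A|\ge 3$ (sizes one and two are trivial or are CONS itself), and let $y:=r(A,c_A,E_A)$ with $E_A:=\sum_{i\in A}x_i$. The goal is $y=x_A$. The strategy is a contradiction argument that compares $x_A$ and $y$ pairwise, using bilateral consistency of both $x$ and $y$. Endowment monotonicity applied to two-claimant problems is the only ordering tool available.

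\emph{Step 1: pairwise reduction.} For any $i,j\in A$, CONS applied to $x$ in $N$ gives $(x_i,x_j)=r(\{i,j\},(c_i,c_j),x_i+x_j)$. CONS applied to $y$ in $A$ gives $(y_i,y_j)=r(\{i,j\},(c_i,c_j),y_i+y_j)$. Both pairs are therefore outcomes of the same two-claimant problem, at endowments $x_i+x_j$ and $y_i+y_j$ respectively.

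\emph{Step 2: comparison.} Suppose $y\ne x_A$. Since $\sum_A y_i=\sum_A x_i$, there are $i,j\in A$ with $y_i>x_i$ and $y_j<x_j$.

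Assume first that $y_i+y_j\ge x_i+x_j$. Endowment monotonicity on the pair problem gives $y_j=r_j(\cdot,y_i+y_j)\ge r_j(\cdot,x_i+x_j)=x_j$, which contradicts $y_j<x_j$. If instead $y_i+y_j<x_i+x_j$, monotonicity gives $y_i\le x_i$, which contradicts $y_i>x_i$. Hence $y=x_A$.

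This argument needs no third claimant and no continuity; monotonicity of the two-claimant rule in the endowment does all the work. There is no real obstacle here. The only point requiring care is to state monotonicity coordinatewise, as in Lemma~\ref{lem:resmon}, and to handle the two possible orderings of the pair endowments, as above. Chun's published argument may be framed differently, but this direct pairwise proof suffices.
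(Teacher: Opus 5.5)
Your proof is correct. Comparing $x_A$ with $y$ pair by pair through bilateral consistency of both allocations, and then using endowment monotonicity in the common two-claimant problem $(\{i,j\},(c_i,c_j))$ to rule out a pair with $y_i>x_i$ and $y_j<x_j$, is the standard argument, and it needs neither continuity nor a third claimant. The paper gives no proof of its own and simply cites Chun (1999), so your argument supplies a self-contained, population-free justification of the step the paper uses to upgrade \textup{CONS} to \textup{CONS}$^\ast$ before invoking Young's theorem.
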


Together with \textup{ETE}, \textup{CD} orders awards and, by duality, \textup{CU} orders losses.

\begin{lemma}[Order preservation]\label{lem:ordpres}
Let $r$ satisfy \textup{ETE} and \textup{CD}. Then $r$ satisfies order preservation in awards for every two-claimant problem. If $r$ also satisfies \textup{CU}, then it satisfies order preservation in losses for every two-claimant problem. If in addition $r$ satisfies \textup{CONS}, both properties hold for every population.
\end{lemma}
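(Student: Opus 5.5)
The argument runs in three stages: two-claimant awards from ETE and CD, two-claimant losses from CU by duality, and then all populations from CONS.

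\emph{Awards, two claimants.} Take claims $(c_i,c_j)$ with $c_i\ge c_j$ and endowment $E$, and write $x=r((c_i,c_j),E)$. Suppose, for contradiction, that $x_i<x_j$. Then $x_i<x_j\le c_j\le c_i$. By endowment monotonicity (Lemma~\ref{lem:resmon}) and feasibility, the map $E''\mapsto r_i((c_i,c_j),E'')-r_j((c_i,c_j),E'')$ equals $0$ at $E''=0$.

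\begin{itemize}
\item If $x_j\le c_j$ and we scan the endowment up to $\norm{c}$, the difference becomes $c_i-c_j\ge0$.
\item To avoid needing continuity, which is not available at this stage, we use CD directly.
\end{itemize}

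Consider the problem with claims $x=(x_i,x_j)$. Since $x_i<x_j$, ETE alone does not apply. Instead, lower the endowment to $E'=2x_i$. By CD,
\[
r(c,E')=r(x,E').
\]
In the problem $(x,2x_i)$, claimant $i$ has claim $x_i$, so $r_i\le x_i$ and hence $r_j\ge x_i$. This does not yet contradict anything.

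The better route is the truncated claims vector $(x_j,x_j)$. We have $c_i\ge x_j$ and $c_j\ge x_j$. Consider the problem $((c_i,c_j),E)$ again: CD from $c$ down to $x$ gives the equality above but no information at $E$ itself. So we instead pass through an intermediate problem where ETE bites, namely claims $(c_j,c_j)$. The plan is to compare $r((c_i,c_j),\cdot)$ with $r((c_j,c_j),\cdot)$ as follows.

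\begin{itemize}
\item Set $E^\ast=2\min(x_i,x_j)$ when $x_i<x_j$.
\item Use CD from $E$ down to $E^\ast$ via the awards $x$. In $r(x,E^\ast)$, ETE does not apply because $x_i\neq x_j$, but feasibility caps claimant $i$ at $x_i$.
\end{itemize}

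The key step I expect to be the obstacle is producing the contradiction. My first attempt is to note that by monotonicity $r(c,E^\ast)\le x$ coordinatewise, and then to look for an endowment at which the award vector has equal coordinates. If such an endowment $E_0$ exists with $r(c,E_0)=(a,a)$ and $E_0\ge E$, then CD gives
\[
r(c,E)=r((a,a),E),
\]
which is equal-split by ETE, contradicting $x_i<x_j$.

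Such an equal point can be obtained by an intermediate-value argument on the sign change of $r_i-r_j$ between $E$ and $\norm{c}$, where the difference is $c_i-c_j\ge0$. Without continuity, I will take
\[
E_0=\inf\{E''\ge E:\ r_i(c,E'')\ge r_j(c,E'')\}.
\]
Monotone limits from above and below, combined with CD applied at $E''$ slightly above $E_0$ (where awards are $(p,q)$ with $p\ge q$, so truncation by CD to lower endowments keeps $\min$ structure), should force equality.

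\emph{Losses, two claimants.} This is the dual statement. CU for $r$ is CD for $r^\ast$ (Lemma~\ref{lem:duality}), and ETE transfers to $r^\ast$. Applying the awards result to $r^\ast$ orders the awards of $r^\ast$, which are exactly the losses of $r$.

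\emph{General populations.} For a problem $(N,c,E)$ and a pair $\{i,j\}$, CONS states that $(x_i,x_j)$ is the two-claimant award vector for claims $(c_i,c_j)$. Both orderings therefore follow from the two-claimant case.
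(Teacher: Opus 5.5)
Your final strategy is the paper's. You find an endowment $E_0\ge E$ at which the two awards coincide, and then CD followed by ETE forces $r(c,E)=(E/2,E/2)$, contradicting $x_i<x_j$. The duality step for losses and the CONS lift to arbitrary populations also match the paper.

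\textbf{The gap.} It lies in the step you yourself flag as the obstacle. You assert that continuity is not available. You then replace the intermediate value theorem by an infimum $E_0$ plus a sketch (``monotone limits \dots\ combined with CD \dots\ should force equality''), and that sketch is not an argument. Nothing you write excludes $r_i-r_j$ jumping from negative to positive values across $E_0$, so $r_i(c,E_0)=r_j(c,E_0)$ is never established. The remark about CD ``keeping $\min$ structure'' does not supply this either.

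\textbf{The missing observation.} You only need continuity in the endowment at fixed claims, and it follows in one line from Lemma~\ref{lem:resmon}, which you already cite.
\begin{itemize}
\item For $E''\le E'''$ that lemma gives $r_k(c,E'')\le r_k(c,E''')$ for each $k$.
\item Both award vectors sum to their endowments, so $\norm{r(c,E''')-r(c,E'')}_1=E'''-E''$. Each award is therefore $1$-Lipschitz in the endowment.
\item Hence $g(E''):=r_i(c,E'')-r_j(c,E'')$ is continuous, with $g(E)<0\le c_i-c_j=g(\norm{c})$.
\item The intermediate value theorem gives $E_0\in(E,\norm{c}]$ with $g(E_0)=0$. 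Your infimum definition of $E_0$ also becomes a valid closed-set argument at this point.
\end{itemize}

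\textbf{No circularity.} This is not the joint continuity CONT that Proposition~\ref{prop:auto-cont} derives later, partly using the present lemma. It is exactly Step~1 of that proof, which precedes any use of order preservation.

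With this repair the proof is complete. The detours through $E'=2x_i$, the truncated vector $(x_j,x_j)$ and $E^\ast=2\min(x_i,x_j)$ lead nowhere and can be deleted.
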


\begin{proof}
Fix a two-claimant problem with $c_1\le c_2$. If $c_1=c_2$, \textup{ETE} gives equal awards, and hence equal losses, so suppose $c_1<c_2$. By Lemma~\ref{lem:resmon} the awards are Lipschitz-continuous in the endowment. If $r_1(c,E_0)>r_2(c,E_0)$ for some $E_0$, then at the full endowment the awards are $c$, so continuity gives an $F\in[E_0,\norm{c}]$ with $r_1(c,F)=r_2(c,F)=F/2$. For every $E\le F$, \textup{CD} and \textup{ETE} imply
\[
r(c,E)
=r\big(r(c,F),E\big)
=r\big((F/2,F/2),E\big)
=(E/2,E/2),
\]
contradicting $r_1(c,E_0)>r_2(c,E_0)$. Thus $r_1(c,E)\le r_2(c,E)$ for every $E$.

For losses, \textup{CU} for $r$ implies \textup{CD} for its dual $r^\ast$, see \citet{Moulin2000}. \textup{ETE} is self-dual. Applying the award-ordering result just proved to $r^\ast$ and evaluating it at the deficit $\norm{c}-E$ gives
\[
c_1-r_1(c,E)=r_1^\ast(c,\norm{c}-E)
\le r_2^\ast(c,\norm{c}-E)=c_2-r_2(c,E).
\]

Finally, we lift the two-claimant inequalities to arbitrary populations by bilateral consistency, following the lifting argument of \citet{HokariThomson2008}. Let $(N,c,E)$ be a problem, $x:=r(N,c,E)$, and $i,j\in N$ with $c_i\le c_j$. By \textup{CONS}, $(x_i,x_j)=r\big(\{i,j\},(c_i,c_j),x_i+x_j\big)$, so the two-claimant result gives $0\le x_j-x_i\le c_j-c_i$.
\end{proof}
\section{Lorenz ranking}\label{app:Lorenz}
\begin{proof}[Proof of Proposition~\ref{prop:lorenz}]
\emph{Two claimants, finite parameters.} Let $\theta<\theta'$ be finite.
From $u_\theta=\log\varphi_\theta$ and $\varphi_\theta'(x)=e^{\theta x}$ one computes $u_\theta'(x)=e^{\theta x}/\varphi_\theta(x)$ (analogously for $\theta'$). Thus,
\begin{align*}
\frac{u_\theta''(x)}{u_\theta'(x)}=\frac{-1}{\varphi_\theta(x)}<\frac{-1}{\varphi_{\theta'}(x)}=\frac{u_{\theta'}''(x)}{u_{\theta'}(x)},\label{eq}
\end{align*}
 which implies (by differentiating twice) that $u_\theta\circ u_{\theta'}^{-1}$ is increasing and strictly concave on the range of $u_{\theta'}$.
 Strict concavity of $u_\theta\circ u_{\theta'}^{-1}$ implies that 
\begin{align}\alpha<\beta, \delta>0\Rightarrow u_\theta( u_{\theta'}^{-1}(\alpha))-u_\theta( u_{\theta'}^{-1}(\alpha-\delta))>u_\theta( u_{\theta'}^{-1}(\beta))-u_\theta( u_{\theta'}^{-1}(\beta-\delta))\end{align}
 Now let $c$ be a two claimant claims vector with $c_1>c_2$ and  $c_1+c_2>E$ (the case $c_1<c_2$ follows by an analogous argument and the case $c_1=c_2$ is trivial, as both claimant receive $E/2$ for all $\theta$). Let $x:=r^{\theta'}(c,E)$ and $x':=r^{\theta'}(c,E)$. Define $$\alpha:=u_{\theta'}(x_1'),\quad \beta:=u_{\theta'}(c_1),\quad  \delta:=u_{\theta'}(c_1)-u_{\theta'}(c_2)=u_{\theta'}(x_1')-u_{\theta'}(x_2').$$
 Note that 
$$\alpha-\delta=u_{\theta'}(x_1')-(u_{\theta'}(x_1')-u_{\theta'}(x_2'))=u_{\theta'}(x_2').$$
By definition, the rule $r^{\theta'}$ allocates the full claim to any claimant only if  there is no deficit. Thus, $x_1'<c_1$ and therefore $\alpha<\beta$. Thus, by~(\ref{eq}) applied to the previously defined expressions and using the previous identity,
\begin{align*}u_\theta( x_1')-u_\theta(x_2')>u_\theta( c_1)-u_\theta(c_2)=u_\theta( x_1)-u_\theta(x_2)>0.\end{align*}
As $u_\theta$ is strictly increasing and $x_1+x_2=E=x_1'+x_2'$, this implies strict Lorenz dominance.

\emph{Two claimants, endpoints.} Lorenz inequalities are preserved under coordinatewise limits. Therefore, by the finite case together with Proposition~\ref{prop:endpoints} that does not appeal to the present result, $\CEL$ Lorenz dominates $r^{\theta}$ for each $\theta$, and $r^{\theta'}$ Lorenz dominates $\CEA$ for each $\theta'$. 

\emph{Arbitrarily many claimants.} As we show in the proof of Proposition~\ref{prop:sufficiency}, which does not appeal to the present result, the log-exponential rules satisfy ETE, CD and CONS$*$ (in particular CONS). Thus, by Lemmas~\ref{lem:ordpres} and~\ref{lem:resmon}, they are order preserving in awards and endowment monotonic in the two-claimant case. Under these two properties and CONS, a Lorenz domination is lifted from two claimants to arbitrary populations by Theorem~5 of \citet{Thomson2012lorenz}. 

\emph{Strictness for arbitrarily many claimants.} Let $\theta<\theta'$. By transitivity it suffices to show strict dominance for the case that at least one of the two parameters is finite. Consider a problem $(N,c,E)$ with $0<E<\norm{c}$ and at least two distinct positive claims. Let $x:=r^\theta(c,E)$ and $x':=r^{\theta'}(c,E)$. Given the domination just established, strictness at this problem amounts to showing $x\ne x'$. By the definition of the rule, if $\theta$ is finite  $0<x_i<c_i$ (and likewise if $\theta'$ is finite $0<x_i'<c_i$) for every claimant with a positive claim. Choose $i,j$ with $c_i\ne c_j$, both positive. Suppose $x= x'$. Then $0<x_i+x_j=x'_i+x_j'<c_i+c_j$, and CONS makes $(x_i,x_j)$ the awards chosen by both rules in the two-claimant problem $\big((c_i,c_j),x_i+x_j\big)$. This contradicts strictness for the two claimant case.

\end{proof}
\section{Continuity and the finite-population representation}\label{app:cont}
\begin{proof}[Proof of Proposition~\ref{prop:auto-cont}]

\emph{Step 1: continuity in the endowment.}
By Lemma~\ref{lem:resmon} every rule satisfying either composition axiom is resource-monotonic and therefore $1$-Lipschitz in the endowment as  \begin{equation}\label{eq:endowment-lipschitz}
\norm{r(c,E')-r(c,E)}_1
=\sum_i\bigl(r_i(c,E')-r_i(c,E)\bigr)
=E'-E \qquad (E\le E').
\end{equation}

\emph{Step 2: a common bilateral rule.}
For distinct potential agents $i,j$, write
\[
R^{ij}(a,b,E)
:=\bigl(r_i(\{i,j\},(a,b),E),\,r_j(\{i,j\},(a,b),E)\bigr),
\]
where $(a,b)$ is understood in the coordinates $(i,j)$. We first show that the identity of either claimant is immaterial.

Fix three distinct agents $i,j,k$ and consider the three-agent claims vector with $c_i=c_k=a$ and $c_j=b$. Write $x(F):=r(\{i,j,k\},c,F)$. By ETE, $x_i(F)=x_k(F)$ for every $F$. By \eqref{eq:endowment-lipschitz}, the map $F\mapsto x_i(F)+x_j(F)$ is continuous; it takes the values $0$ and $a+b$ at the zero and full endowments. Hence, for every $E\in[0,a+b]$, some $F$ satisfies $x_i(F)+x_j(F)=E$. CONS for the pairs $\{i,j\}$ and $\{k,j\}$ then gives
\[
R^{ij}(a,b,E)=(x_i(F),x_j(F))=(x_k(F),x_j(F))=R^{kj}(a,b,E).
\]
The same argument with claims $c_i=a$ and $c_j=c_k=b$ gives
\[
R^{ij}(a,b,E)=R^{ik}(a,b,E).
\]
The two replacement identities connect any two ordered pairs of distinct agents, using the third agent when necessary. Thus all bilateral rules coincide after the natural relabelling; denote their common value by $R(c,E)$, for $c\in\R_+^2$.

In particular $R^{ij}=R^{ji}$, and since $R^{ji}(b,a,E)$ lists the awards of the problem underlying $R^{ij}(a,b,E)$ in the opposite order, the common rule is {\bf anonymous}:
\begin{equation}\label{eq:R-anonymous}
R\big((b,a),E\big)=\big(R_2((a,b),E),\,R_1((a,b),E)\big).
\end{equation}
This proves anonymity of the common bilateral rule using only ETE, CONS and three potential agents.\footnote{Lemma~3 of \citet{ChambersThomson2002} has a different proof to reach the same conclusion. Their proof requires, however, at least $|N^*|\geq 5$ potential agents, more than our standing assumption of $|N^*|\geq3$.}

\emph{Step 3: the bilateral rule is Lipschitz in claims.}
Fix $a,b,\delta\ge0$ and an endowment $E\le a+b$. Choose three distinct agents with claims $(a,b,a+\delta)$ in coordinates $(i,j,k)$. As in Step~2, continuity in endowment gives an $F\le E$ such that  for $x:=r(\{i,j,k\},(a,b,a+\delta),F)$,
\[
x_i+x_j=E.
\]
By Lemma~\ref{lem:ordpres}, $r$ is order preserving, which for the consistent reduced problem on $\{i,k\}$ implies
\begin{equation}\label{eq:claim-gap}
0\le x_k-x_i\le\delta.
\end{equation}
Set $E':=x_k+x_j$. Then $E\le E'\le E+\delta$, while CONS and Step~2 give
\[
(x_i,x_j)=R((a,b),E),
\qquad
(x_k,x_j)=R((a+\delta,b),E').
\]
Let $z:=R((a+\delta,b),E)$. Endowment monotonicity between $E$ and $E'$ gives $z\le(x_k,x_j)$ coordinatewise and
\[
\|(x_k,x_j)-z\|_1=E'-E\le\delta.
\]
Together with \eqref{eq:claim-gap}, this implies
\begin{equation}\label{eq:one-claim-lipschitz}
\norm{R((a+\delta,b),E)-R((a,b),E)}_\infty=\|(x_i,x_j)-z\|_{\infty}\le\delta.
\end{equation}
Anonymity \eqref{eq:R-anonymous} exchanges the two coordinates, so the same bound holds when the second claim is increased. Now let $c,c'\in\R_+^2$ and let $E\le\min\{\|c\|_1,\|c'\|_1\}$ be feasible for both. Let $d$ be the coordinatewise maximum of $c$ and $c'$. Applying \eqref{eq:one-claim-lipschitz} one coordinate at a time and using the triangle inequality,
\begin{equation}\label{eq:claim-lipschitz}
\norm{R(c,E)-R(c',E)}_\infty
\le \norm{d-c}_1+\norm{d-c'}_1
=\norm{c-c'}_1.
\end{equation}
Combining \eqref{eq:claim-lipschitz} with the endowment estimate \eqref{eq:endowment-lipschitz} gives, for all two claimant problems $(c,E)$ and $(c',E')$,
\begin{equation}\label{eq:bilateral-lipschitz}
\norm{R(c,E)-R(c',E')}_\infty
\le\norm{c-c'}_1+|E-E'|.
\end{equation}

\emph{Step 4: continuity for every population.}
We show that consistency lifts continuity from two claimants to arbitrary populations.
Let $N$ with $|N|\ge3$ and take a sequence of feasible problems $(c^m,E^m)\to(c,E)$. Let $x^m:=r(N,c^m,E^m)$. The sequence is bounded because $0\le x_i^m\le c_i^m$. Take a convergent subsequence, not relabelled, with $x^m\to\bar x$. Budget balance gives $\sum_i\bar x_i=E$.

For every pair $\{i,j\}\subseteq N$, CONS gives
\[
(x_i^m,x_j^m)
=R\bigl((c_i^m,c_j^m),x_i^m+x_j^m\bigr).
\]
The bilateral estimate \eqref{eq:bilateral-lipschitz} implies, on taking limits,
\begin{equation}\label{eq:limit-pairs}
(\bar x_i,\bar x_j)
=R\bigl((c_i,c_j),\bar x_i+\bar x_j\bigr).
\end{equation}
Endowment monotonicity (Lemma~\ref{lem:resmon}) and CONS make $r$ conversely consistent \citep{Chun1999}: for $|N|\ge3$, a vector with total $E$ whose restriction to each pair is the awards vector selected for the corresponding reduced problem is the awards vector for the problem itself. The vector $\bar x$ has total $E$ and satisfies \eqref{eq:limit-pairs} for every pair, so $\bar x=r(N,c,E)$.

Every convergent subsequence of $(x^m)$ therefore has the same limit $r(N,c,E)$. Since the sequence is bounded in a finite-dimensional space, the entire sequence converges to that limit. Thus $r$ satisfies CONT.
\end{proof}

The finite-population input is the fixed-population characterisation of Young's rules. It uses one further property: A rule $S$ on a fixed population $N^\ast$ satisfies {\bf partial-implementation invariance} if, for every problem $(c,E)$ and every $N'\subseteq N^\ast$,
\[
S_{N'}(c,E)=S_{N'}\Big((c_{N'},0_{N^\ast\setminus N'}),\,E-\textstyle\sum_{i\in N^\ast\setminus N'}S_i(c,E)\Big):
\]
assigning a subgroup its allotted total, with the complementary agents' claims set to zero, reproduces the subgroup's awards. With this definition we have the following: 
\begin{proposition}[Fixed-population Young representation; Dietzenbacher, Tamura and Thomson \citeyearpar{DietzenbacherTamuraThomson2024}, Theorem~2]\label{thm:dtt}
Fix a population $N^\ast$ with $|N^\ast|\ge 3$. A rule $S$ on the fixed population $N^\ast$ satisfying \textup{ETE}, endowment monotonicity, \textup{CONT} and partial-implementation invariance admits a Young representation. 
\end{proposition}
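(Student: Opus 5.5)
The plan is to rebuild Young's variable-population argument inside the fixed population $N^\ast$, using zero claims to simulate smaller populations. For $\emptyset\ne N'\subseteq N^\ast$ and $c_{N'}\in\R^{N'}_+$, define the induced rule
\[
S^{N'}(c_{N'},E):=S_{N'}\big((c_{N'},0_{N^\ast\setminus N'}),E\big).
\]
Agents with zero claims receive zero, so the totals match. Partial-implementation invariance then says exactly that the family $\{S^{N'}\}$ is consistent over subgroups (CONS$^\ast$ in the sense of the paper). ETE is inherited by each $S^{N'}$. Endowment monotonicity and CONT are inherited directly.

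Applying the three-agent argument of Step~2 in the proof of Proposition~\ref{prop:auto-cont} to the induced rules shows that all two-agent induced rules coincide after relabelling. Call the common anonymous bilateral rule $R$; it is continuous and endowment monotone. The rest of the proof only uses $R$ together with consistency on populations of size at most $|N^\ast|$.

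Next, following Young, I would define a binary relation on claim--award pairs $\Omega:=\{(c_0,x_0):0\le x_0\le c_0\}$. Set $(a,x)\succsim(b,y)$ if there is an endowment $E$ with $R((a,b),E)=(x',y')$, $x'\ge x$ and $y'\le y$. Informally, at the common \emph{level} where $a$ gets at least $x$, $b$ gets at most $y$. Endowment monotonicity makes the pairs reachable from $(a,b)$ a monotone continuous curve, so the relation is complete. Continuity of $R$ makes its graph closed.

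The key step is transitivity. Given $(a,x)\sim(b,y)$ and $(b,y)\sim(d,z)$, place three agents with claims $(a,b,d)$; here $|N^\ast|\ge3$ is needed. Choose the three-agent endowment at which $b$ receives $y$, which exists by continuity in $E$. Bilateral consistency then forces $a$ to receive $x$ and $d$ to receive $z$, up to the flat segments where a monotone path is not strictly increasing, so $(a,x)\sim(d,z)$. With a complete, transitive, continuous preorder on the connected separable set $\Omega$, Debreu's theorem gives a continuous utility representation $\Lambda:\Omega\to[0,1]$. I normalise it so that $\Lambda(c_0,0)=0$ and $\Lambda(c_0,c_0)=1$ whenever $c_0>0$, and so that it is weakly increasing in $x_0$. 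Define
\[
f(c_0,\lambda):=\max\{x_0:\Lambda(c_0,x_0)\le\lambda\},
\]
or the min, chosen consistently.

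Finally I would verify that $f$ is jointly continuous, that the awards of every problem of any size $|N|\ge2$ equal $f(c_i,\lambda)$ for a budget-solving $\lambda$, and that the awards do not depend on the choice of solving $\lambda$. The $n$-agent case reduces to the bilateral case via CONS$^\ast$ and pairwise comparison of the levels.

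The main obstacle is the flat segments. Where an agent is stuck at $0$ or at full claim while others move, indifference classes become thick, which threatens both transitivity and joint continuity of $f$ at the boundary. I would handle this as Young does, by working with the closed upper and lower contour sets and a tie-breaking convention, using endowment monotonicity to show these ties cause no ambiguity in awards. If that proves delicate, I would instead cite \citet{DietzenbacherTamuraThomson2024} directly for this step, since their Theorem~2 is precisely this statement.
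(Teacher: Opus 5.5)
The paper gives no argument of its own for this statement: it imports it as Theorem~2 of \citet{DietzenbacherTamuraThomson2024}. Your fallback of citing that theorem is therefore exactly what the paper does. Your preliminary reductions are also sound: zero-padding turns partial-implementation invariance into \textup{CONS}$^\ast$ for the induced subpopulation rules, and the three-agent argument of Step~2 yields a single anonymous bilateral rule $R$.

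The self-contained reconstruction, however, fails at its central step, and the obstruction is not a boundary technicality. Your relation is not transitive, and not even its indifference part is. Take $S=\CEA$, which meets every hypothesis.
\begin{itemize}
\item $(10,10)\sim(1,1)$: use $E=11$ in both directions.
\item $(1,1)\sim(10,1)$: use $E=2$ in both directions.
\item Yet $(10,10)\succsim(10,1)$ would need an endowment at which one of two equal claimants receives $10$ and the other at most $1$, which ETE forbids.
\end{itemize}
Your three-agent argument shows where this comes from. With claims $(10,1,10)$, the middle claimant is held at $1$ on a whole interval of endowments, along which the other two move together. So no single three-agent endowment gives $a$ at least $x$ and $d$ at most $z$ simultaneously.

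More fundamentally, no tie-breaking convention can rescue the plan of representing a preorder on $\Omega$ by a continuous $\Lambda$ and inverting it. If $\Lambda(c_0,\cdot)$ is continuous and weakly increasing from $0$ to $1$, as your normalisation requires, then its generalised inverse $\lambda\mapsto f(c_0,\lambda)$ (max or min version) is strictly increasing on $[0,1]$. It is also discontinuous wherever $\Lambda(c_0,\cdot)$ is flat. Dropping the normalisation only permits flat pieces at $0$ and $c_0$, never at an interior value.

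But the rules covered by the statement have award functions that are constant on nondegenerate $\lambda$-intervals:
\begin{itemize}
\item at the value $c_0$ for $\CEA$;
\item at the value $0$ for $\CEL$;
\item at the interior value $c_0/2$ for the Talmud rule, so flat segments are not confined to $0$ and full claims either.
\end{itemize}
For such rules the ``level'' of a claim--award pair is an interval rather than a number, and no single-valued continuous $\Lambda$ on $\Omega$ encodes it. What has to be linearly ordered and continuously parametrised is the family of level maps $c_0\mapsto x_0$, the eventual gain maps $T_\lambda$, which form a chain under the pointwise order. The points of $\Omega$ are the wrong objects. Carrying this out inside a fixed finite population is what the cited theorem supplies.

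As written, your construction could at best represent rules whose awards are strictly increasing in the parameter. That excludes $\CEA$ and $\CEL$, which are exactly the rules for which the paper's boundary case needs the representation.
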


\begin{proof}[Proof of Proposition~\ref{thm:Y}, finite population]
Let $S(c,E):=r(N^\ast,c,E)$. $S$ inherits ETE and CONT from $r$, and Lemma~\ref{lem:resmon} gives endowment monotonicity. It also satisfies partial-implementation invariance: Indeed, let $(c,E)$ be a problem for $S$, and let $N'\subseteq N^\ast$. Let $\hat c:=(c_{N'},0_{N^\ast\setminus N'})$ and $\hat E:=E-\textstyle\sum_{i\in N^\ast\setminus N'}S_i(c,E)$.   CONS$^*$ for $r$ (which holds by Lemma~\ref{lem:fullcons}) implies that $$S_{N'}(c,E)=r_{N'}(N^\ast,c,E)=r(N',c_{N'},\hat E)=r(N',\hat c_{N'},\hat E)=r_{N'}(N^{\ast},\hat c, \hat E)=S_{N'}\Big((c_{N'},0_{N^\ast\setminus N'}),\,\hat E\Big)$$
Proposition~\ref{thm:dtt} therefore gives a Young representation $f$ of $S$.

The same $f$ works for every subpopulation:
Given a problem $(N,c,E)$, let $\hat c:=(c,0_{N^\ast\setminus N})$. The zero-claim agents receive zero under $S(\hat c,E)$, and CONS$^\ast$ gives
\[
r(N,c,E)=r_N(N^*,\hat c, E)=S(\hat c,E)_N=\bigl(f(c_i,\lambda)\bigr)_{i\in N}
\]
for any budget-solving $\lambda$; since $f(0,\lambda)=0$, the same $\lambda$ solves the budget on $N$.
\end{proof}
\section{Strict interiority and Young's theorem}\label{app:young}

\begin{proof}[Proof of Proposition~\ref{prop:flow}]
On positive claims define the tax method 
\[
 F_i(c,\tau):=c_i-r_i\big(c,\norm{c}-\tau\big), \qquad 0\le\tau\le\norm{c},
\]

 The map $F$ inherits continuity, \textup{ETE} and \textup{CONS}$^\ast$ from the Young representation of $r$. 
For strict resource monotonicity, let $0<E<E'<\norm{c}$ and choose parameters $\lambda,\mu$ solving the two budget equations. Since the budget map is weakly increasing and the budgets differ, $\lambda<\mu$, so $T_\lambda\le T_\mu$ pointwise; both maps are proper. Equality at any $x>0$ would, by Lemma~\ref{lem:same}, imply $T_\lambda=T_\mu$, contradicting $E\ne E'$. Hence $T_\lambda(x)<T_\mu(x)$ for every $x>0$, and in particular $r(c,E)<r(c,E')$ coordinatewise. The endpoint cases $E=0$ or $E'=\norm{c}$ follow from strict interiority and feasibility. For strict order preservation, if $c_i>c_j$ and $\tau<\norm{c}$, the relevant gain map is either $\id$ or proper and, by Lemma~\ref{lem:contr},  strictly increasing, whence $c_i-F_i(c,\tau)>c_j-F_j(c,\tau)$.  Moreover CD is exactly Young's composition axiom. Indeed, for $0\le\tau<\tau'\le\norm{c}$ put $E:=\norm{c}-\tau$ and $E':=\norm{c}-\tau'$. Since $c-F(c,\tau)=r(c,E)$ and $\norm{r(c,E)}=E$,
\[
 \begin{aligned}
 F\big(c-F(c,\tau),\tau'-\tau\big)
 =r(c,E)-r\big(r(c,E),E'\big)=r(c,E)-r(c,E')=F(c,\tau')-F(c,\tau),
 \end{aligned}
\]
where the middle equality is CD.

Young's theorem is a variable-population statement, so at finite $N^\ast$ it is applied not to $r$ but to the rule $\hat r$ that the same representing function $f$ defines on arbitrary finite populations. ETE and CONS$^\ast$ are immediate from the common $f$; CD follows from composition-closure of $\T$ by Lemma~\ref{lem:criterion}(a); and the two strictness conditions follow as in the previous paragraph. Continuity of $\hat r$ requires no separate argument, since continuity of the rule enters Young's proof only through the Young representation and through the continuity of the induced semigroup operation, and the representing function supplies both.\footnote{A reader who prefers not to rely on this may note that joint continuity of $\hat r$ follows from joint continuity of $f$ and compactness of $[0,1]$, by passing to convergent subsequences of budget-solving parameters.} Thus, whether $N^\ast$ is finite or infinite, Young's equal-sacrifice theorem \citep[Theorem~1]{Young1988} yields a continuous strictly increasing $U:(0,\infty)\to\R$ such that every positive-claims problem with an interior endowment equalises
\[
 U(c_i)-U\big(r_i(c,E)\big)
\]
across claimants.

Fix a proper $T\in\T$. For arbitrary $a,b>0$, realise $T$ on the two-claimant problem with claims $(a,b)$ and endowment $T(a)+T(b)$. Equal sacrifice gives
\[
 U(a)-U(T(a))=U(b)-U(T(b)).
\]
Thus there is a number $t(T)>0$, independent of the claim, such that
\begin{equation}\label{eq:youngabel}
 U(T(c))=U(c)-t(T) \qquad(c>0).
\end{equation}

The boundary behaviour follows directly in our setting. For any proper $T\in\T$ and $c>0$, the sequence $T^n(c)$ decreases to a limit $\ell\ge0$. Continuity gives $T(\ell)=\ell$, and strict interiority forces $\ell=0$. Iterating \eqref{eq:youngabel} gives
\[
U\bigl(T^n(c)\bigr)=U(c)-n\,t(T)\longrightarrow-\infty.
\]
Since $U$ is increasing, its extended right limit at zero exists; hence $U(0^+)=-\infty$.

Fix $\rho>0$. By continuity of $\lambda\mapsto f(\rho,\lambda)$ from $0$ to $\rho$, every $\sigma\in(0,\rho)$ is $T(\rho)$ for some proper $T\in\T$; strict interiority gives the converse inclusion. Hence, using \eqref{eq:youngabel},
\[
 \{t(T):T\in\T\text{ proper}\}
 =\{U(\rho)-U(\sigma):0<\sigma<\rho\}
 =(0,\infty).
\]
For each $t>0$, equation \eqref{eq:youngabel} therefore identifies a unique proper map
\[
 G_t(c):=U^{-1}\big(U(c)-t\big).
\]
Together with $G_0:=\id$ and $G_\infty:=0$, these are exactly the members of $\T$. For finite $s,t$,
\[
 U\big(G_t(G_s(c))\big)=U(c)-s-t,
\]
so injectivity of $U$ gives $G_t\circ G_s=G_{t+s}$; the endpoint conventions are immediate. Finally, for fixed $c>0$, continuity and strict increase of $U^{-1}$ show that $t\mapsto G_t(c)$ is continuous and strictly decreasing, with limits $c$ as $t\downarrow0$ and $0$ as $t\to\infty$. Its range is therefore $(0,c)$.
\end{proof}
\end{document}